\pdfoutput=1
\documentclass[11pt,letterpaper]{article}
\usepackage[T1]{fontenc}
\usepackage[utf8]{inputenc}
\usepackage[USenglish]{babel}
\usepackage{lmodern}
\usepackage[margin=1in]{geometry}
\usepackage{microtype}
\usepackage{amsmath,amssymb,amsthm,mathtools}
\usepackage{booktabs,tabularx,array}
\usepackage[ruled]{algorithm}
\usepackage[noend]{algpseudocode}
\usepackage{xcolor}
\usepackage{url}
\usepackage[section]{placeins}
\usepackage[unicode,hidelinks]{hyperref}
\usepackage{bookmark}
\newtheorem{theorem}{Theorem}[section]
\newtheorem{lemma}[theorem]{Lemma}
\newtheorem{corollary}[theorem]{Corollary}
\newtheorem{proposition}[theorem]{Proposition}
\theoremstyle{remark}
\newtheorem{remark}[theorem]{Remark}
\usepackage{tikz}
\usetikzlibrary{arrows.meta,positioning,calc,fit,decorations.pathreplacing}
\newcolumntype{Y}{>{\raggedright\arraybackslash}X}
\newcommand{\E}{\mathbb E}
\newcommand{\Prb}{\mathbb P}

\newcommand{\cF}{\mathcal F}
\newcommand{\cost}{\operatorname{cost}}
\newcommand{\OPT}{\operatorname{OPT}}
\newcommand{\EmpiricalSort}{\textnormal{\textsc{EmpiricalSort}}}
\newcommand{\proofpointer}[2]{%
  \par\nopagebreak[4]
  \noindent\emph{\hyperref[#2]{Proof in Section~\ref*{#1}.}}\par
}

\DeclareMathOperator{\capc}{cap}

\renewenvironment{abstract}{%
  \begin{center}\bfseries Abstract\end{center}\quotation\normalsize
}{\endquotation}

\makeatletter
\renewcommand{\@maketitle}{%
  \newpage\null\vskip0.75em
  \begin{center}
    {\LARGE\@title\par}
    \ifx\@author\@empty\else
      \vskip1.2em
      {\large\lineskip.5em\begin{tabular}[t]{c}\@author\end{tabular}\par}
    \fi
  \end{center}\par\vskip0.5em
}

\renewcommand*\l@section[2]{%
  \ifnum\c@tocdepth>\z@
    \addpenalty{-\@highpenalty}%
    \vskip0.30em \@plus0.5pt
    \begingroup\bfseries
    \@dottedtocline{1}{0em}{1.7em}{#1}{#2}%
    \endgroup
  \fi}

\makeatother
\title{Optimal High Probability Bounds for Random Order Online Sorting}
\author{
  Chenglin Fan\thanks{Seoul National University.}
  \and
  Mingyu Kang\footnotemark[1]
}
\date{}
\hypersetup{pdftitle={Optimal High Probability Bounds for Random-Order Online Sorting},
  pdfsubject={Exact-space random-order online sorting},
  pdfkeywords={Online sorting, random-order algorithms, competitive analysis, concentration inequalities}}
\begin{document}
\maketitle
\begin{abstract}
In random order online sorting, a fixed multiset of $n$ real numbers arrives
in a uniformly random order. Each value must be placed immediately and
irrevocably into an empty position of an array with exactly $n$ positions.
The objective is to minimize the sum of the absolute differences between
consecutive entries. If the entire input were known in advance, sorting the
values would be optimal and would have cost equal to the range of the input.

We prove that the optimal worst case competitive ratio in this model is
$\Theta(\log n)$ with high probability. In particular, we give a single
online algorithm that achieves an $O(\log n)$ competitive ratio with high probability for every
fixed input, and we show that every randomized online algorithm has some
fixed input on which an $\Omega(\log n)$ ratio is unavoidable with high
probability. The same upper bound also holds for independent samples from
an arbitrary unknown distribution on the real line.

The algorithm uses the random arrival order to learn the relative positions
of future values from the values seen so far. It maintains a hierarchy of
small regions of the input order and allows an arriving value to use nearby
space when its preferred region is full. This flexibility avoids the need
to reserve large amounts of space for unlikely local fluctuations. The
analysis shows that large displacements become rapidly less likely as their
scale grows, leading to the $O(\log n)$ high probability bound. A matching
lower bound shows that no randomized online algorithm can improve the
logarithmic dependence in the worst case.
\end{abstract}
\clearpage
\tableofcontents
\clearpage


\section{Introduction}\label{sec:introduction}

An online sorting algorithm receives $n$ real numbers one at a time and places
each of them immediately into an empty position of an array with exactly $n$
positions. Once placed, a value cannot be moved. The goal is to minimize
\[
    \cost(A)=\sum_{i=1}^{n-1}|A[i+1]-A[i]|,
\]
where $A$ is the completed array. With the entire input known in advance,
sorting attains the minimum possible cost, equal to the input range.

We study a fixed multiset presented in a uniformly random order. Its values
may repeat and need not be evenly spaced. An observed prefix gives information
about the ranks of future arrivals, but the values in that prefix already
occupy positions in the array. The algorithm must use what it has learned to
decide how to assign the remaining positions.
We give a single algorithm using only comparisons that achieves an $O(\log n)$
competitive ratio with high probability for every fixed multiset. The ratio
has an exponential tail above the logarithmic scale. We also prove that every
randomized online algorithm has a fixed input on which its ratio is
$\Omega(\log n)$ with probability $1-e^{-\Omega(\sqrt n)}$. The optimal ratio in the worst
case is therefore $\Theta(\log n)$ with high probability. The upper bound
also applies to independent samples from an arbitrary unknown distribution
on the real line.

Our algorithm uses the observed values to partition the value space into
intervals. Each interval is assigned a bucket consisting of a constant
number of empty array cells at the start of a phase; these cells need not
be consecutive. The buckets are organized in a binary tree. When the
bucket for an arrival's interval is full, the tree determines where to
find available space. Both the partitions and the empty cells left for
later phases depend on earlier arrivals. The analysis must account for
this dependence when bounding the cost of overflow.

\subsection{Model and main results}\label{sec:model}

Let $X=\{x_1,\ldots,x_n\}$ be a fixed multiset of $n\ge2$ finite real values,
with indices in $[n]=\{1,\ldots,n\}$ distinguishing equal occurrences. A
uniform permutation $\pi$ of $[n]$ gives the stream $X_\pi=(x_{\pi(1)},\ldots,x_{\pi(n)})$. The algorithm
knows $n$ and must place each arrival into an empty array cell before seeing
the next one. It has exactly $n$ output cells, and filled cells never move.
For the completed array $A$, define
\begin{equation}\label{eq:ratio-definition}
 \begin{aligned}
 \cost(A)&=\sum_{i=1}^{n-1}|A[i+1]-A[i]|,
 \qquad \OPT(X)=\max X-\min X,\\
 \rho(A)&=\begin{cases}
 \cost(A)/\OPT(X),&\OPT(X)>0,\\
 1,&\OPT(X)=0.
 \end{cases}
 \end{aligned}
\end{equation}
Every ordering must connect a minimum value to a maximum value, whereas
sorted order attains exactly this range. Thus $\OPT(X)$ is the offline
optimum and $\rho(A)$ is the realized competitive ratio. We always have
$1\le\rho(A)\le n-1$; when all values are equal, both costs are zero and the
ratio is one by convention. We use $\log$ for the natural logarithm and
$\log_2$ for the binary logarithm, and let $\log^*n$
count the applications of $\log_2$ needed to reach a value at most $1$.

For equal values, our algorithm uses an independent uniform permutation
$\sigma$ of $[n]$ to break ties. The arrival at time $t$ is compared using the
lexicographic key $(x_{\pi(t)},\sigma(t))$, but the output stores only its
original value. These keys are strictly ordered, and the priorities have no
effect when the input values are distinct. Section~\ref{tr:metric-tail}
justifies this construction. We count exact comparisons between real values
and word operations on counters with $O(\log n)$ bits. Auxiliary memory is
allowed; the restriction to $n$ cells concerns the output array.

\begin{samepage}
\begin{theorem}[Upper tail in random order]\label{thm:upper}
There are absolute constants $a_0,a_1,\theta>0$ and a single online algorithm
using only comparisons such that, for every $n\ge2$, every fixed multiset $X$,
and every $z\ge0$, its output $A$ satisfies
\begin{equation}\label{eq:main-upper-tail}
 \Prb_{\pi,\sigma}\!\left[\rho(A)>a_0\log n+a_1+z\right]
 \le e^{-\theta z}.
\end{equation}
The algorithm completes a legal placement on every execution. It is
deterministic on distinct inputs and uses the independent priority
permutation $\sigma$ to break ties. After generating the priorities, it
requires $O(n\log n)$ comparisons and word operations and $O(n)$ auxiliary
words.
\end{theorem}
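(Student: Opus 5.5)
We design an explicit algorithm and then bound its cost through a telescoping charging argument. First, a reduction: with the tie-breaking keys $(x_{\pi(t)},\sigma(t))$, all keys are distinct and $(\pi,\sigma)$ induces a uniformly random order on the ranks. Every comparison-based decision therefore depends only on the rank sequence, which is a uniform permutation of $[n]$. We write $v_1\le\cdots\le v_n$ for the sorted values and $g_r=v_{r+1}-v_r$ for the gaps, so that $\OPT=\sum_r g_r$. The cost of the completed array is $\sum_i |A[i+1]-A[i]|=\sum_r g_r\,c_r$, where $c_r$ is the number of adjacent array pairs that straddle the cut between ranks $r$ and $r+1$. It therefore suffices to show that each crossing count $c_r$ satisfies $c_r\le a_0\log n+a_1+z$ except with probability $e^{-\theta z}$, \emph{uniformly in the weights}. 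Pointwise bounds on each $c_r$ alone do not bound the weighted sum in high probability, so we bound $\sum_r g_r c_r$ directly. We write $c_r\le \bar c+\sum_k Y_{r,k}$, where $\bar c=O(\log n)$ is a deterministic budget from the phase structure and $Y_{r,k}$ is an overflow contribution at tree scale $2^k$. We then show that $\sum_r g_r\sum_kY_{r,k}/\OPT$ is a convex combination of variables, each with a uniform exponential tail. Because the moment generating function of a convex combination is controlled by that of its worst component (Jensen), this gives the tail $e^{-\theta z}$.

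The algorithm runs in geometric phases $t\in(2^{j},2^{j+1}]$, with a short initial phase of $O(\log n)$ arrivals placed arbitrarily in a reserved block; this block costs $O(\log n)\cdot\OPT$. At the start of phase $j$, the $2^j$ observed keys split the key space into $2^j+1$ intervals. Each interval receives a bucket of $b=O(1)$ currently empty cells, chosen in sorted order of the intervals from the free cells reserved for this phase. Since each interval receives $\mathrm{Geom}$-like many of the roughly $2^j$ arrivals of the phase, with mean about $1$, most fit in their buckets. The buckets form the leaves of a balanced binary tree. When the target bucket is full, the arrival climbs to the lowest ancestor whose subtree still has free cells and is placed there. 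The bookkeeping uses counters on tree nodes, giving $O(\log n)$ per arrival and $O(n)$ words overall. Legality holds because the total free capacity always equals the number of remaining arrivals; the root absorbs everything. The cells of phase $j$ are allocated as a contiguous-in-order set interleaved with earlier phases, so that cells hosting intervals adjacent in key order remain array-adjacent. Each phase then contributes $O(1)$ crossings per cut when there is no overflow. This gives the deterministic $O(\log n)$ term, via $O(1)$ per phase over $\log_2 n$ phases.

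For the overflow analysis, an arrival placed at height $k$ lands within a subtree spanning $2^k$ intervals. It therefore contributes crossings only to cuts inside a key-range covering about $2^k$ sample intervals, and it adds $O(1)$ crossings to each such cut. Overflow at height $k$ requires the subtree of $2^k$ buckets, with capacity $b2^k$, to receive more than $b2^k$ phase arrivals. Conditioned on the phase-start sample, the number of arrivals in a union of $m$ sample intervals is hypergeometric with mean about $m$. A Chernoff bound gives probability $e^{-c b 2^k}$ for a subtree to overflow by a fixed margin. This doubly exponential decay in $k$ beats the $2^k$ width over which the damage spreads. We must also handle the randomness of interval lengths in true rank, since a sample gap may span many ranks. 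For that we use the standard fact that the rank-width of $m$ consecutive sample intervals exceeds $\lambda m n/2^j$ with probability $e^{-\Omega(\lambda m)}$, which we apply within the same Chernoff computation.

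The main obstacle is dependence: the partition, the set of cells left free, and the overflow events at different phases and tree nodes are all correlated through the shared prefix. My first attempt would be to condition phase by phase on the filtration $\cF_j$ generated by the first $2^j$ arrivals. Given $\cF_j$, the phase-$j$ arrivals are a uniformly random sample without replacement from the remaining keys, and negative association makes Chernoff bounds valid for bucket loads. I then build an exponential supermartingale $\E[e^{\theta Z_j}\mid\cF_j]\le e^{\theta\mu_j}$, where $Z_j$ is the weighted overflow cost of phase $j$ normalized by $\OPT$ and $\sum_j\mu_j=O(\log n)$. Multiplying these bounds across phases and applying Markov's inequality to $e^{\theta\sum_j Z_j}$ yields \eqref{eq:main-upper-tail}. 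The delicate point is verifying that leftover cells from overflowing earlier phases do not break the adjacency invariant for later phases. I expect this to require charging each displaced cell to its originating overflow event, so that later phases inherit at most an additive defect with the same exponential tail.
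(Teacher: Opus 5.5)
\textbf{Capacity and cell reuse.} Your allocation cannot keep a constant slack in every phase when there are exactly $n$ cells.
\begin{itemize}
\item Buckets of $b>1$ cells for $2^j+1$ intervals cost about $b2^j$ cells per phase, which exceeds $n$ summed over all phases.
\item Reusing leftovers does not fix this. The paper's growing regime reuses them and still needs $3t_j$ allocated cells for $2t_j$ observed keys.
\item A doubling schedule run to the end has, in its last phase, exactly as many empty cells as remaining arrivals. Your own legality claim (free capacity equals remaining arrivals) says the same thing.
\end{itemize}
In that phase a $2^k$-bucket subtree has mean load equal to its capacity, so fluctuations of order $2^{k/2}$ face zero margin. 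The estimate $e^{-cb2^k}$ fails, and with it the $O(1)$ crossings per cut per phase.

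The missing ingredient is a second regime. The paper stops growing once $m\in(n/8,n/4]$ keys are seen and freezes one partition tree. It then runs shrinking phases that merge adjacent buckets and place $\lfloor 3M_i/4\rfloor$ of the $M_i$ available cells. Every nonfinal bucket capacity stays in $[B,B+7]$, with average load near $3B/4$, until one bucket with fewer than $515$ cells remains (Lemma~\ref{alg:feasibility}).

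Leftover cells also arise in every bucket, not only after overflow, and they sit among filled cells with fixed neighbors. So you cannot maintain an invariant that key-adjacent intervals stay array-adjacent. The paper instead does the following.
\begin{itemize}
\item It gives each empty cell a temporary label and charges each replacement $2|\beta_q(z)-\beta_q(y)|$.
\item It charges requests to top-marked subtrees, each tied to a locally overflowing child (Lemma~\ref{acc:topmarks}). Your claim that placement at height $k$ forces local overflow is false as stated, since outside requests can fill a subtree.
\item It bounds reallocation demand without knowing which cells survived. After growing phases it uses old-label counts $2|P_j\cap I|+4T$. Between shrinking phases it uses an undercount of the preceding block's arrivals (Lemmas~\ref{acc:old-labels} and~\ref{acc:underflow}).
\end{itemize}

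\textbf{Dependence across phases.} Your bound $\E[e^{\theta Z_j}\mid\cF_j]\le e^{\theta\mu_j}$ with deterministic $\mu_j=O(1)$ is false. Given $\cF_j$ the partition is fixed. If the sample leaves a rank gap of width $\lambda n/2^j$ around a cut, about $\lambda$ phase-$j$ arrivals land there. The overflow charge at that cut is then of order $\lambda$ with constant conditional probability, so no uniform conditional bound exists. The rarity of such partitions lives in $\cF_j$, which all earlier phases share. A union bound over the $L\asymp\log n$ dependent phases gives only something like $\Prb[\sum_jZ_j>Lc+z]\le Le^{-c'z/L}$, not $e^{-\theta z}$.

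The paper's decoupling is reverse exposure. Condition on the unordered larger prefix $P_{j+1}$; then $P_j$ is a uniform half-subset. The intervals built from $P_j$ and the arrivals $P_{j+1}\setminus P_j$ are then both controlled by hypergeometric tails uniformly in $P_{j+1}$. This uses localization of the selected parent and a fixed grid, because the tested interval depends on $P_j$. Since the majorant $G_j$ depends only on $(P_j,P_{j+1})$, phases peel off backward with one factor $M_g$ each (Lemma~\ref{prob:reverse-mgf}).

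In the frozen regime, partition error is isolated in a single variable $D_{\rm err}$, whose charges decay geometrically under merging. Future blocks are independent given $P_L$ because the paper works with i.i.d.\ uniform input and transfers to ranks via a binomial mode (Lemma~\ref{tr:mode}). Per-cut exponential moments suffice by Jensen, as your own last step shows, so you need not carry weights through the phases. Fix one cut and work along its chain of selected nodes.
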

\proofpointer{sec:transfer}{proof:upper}
\end{samepage}

The algorithm is \EmpiricalSort{}, described in Section~\ref{sec:overview}.
For any fixed $d>0$, taking $z=d\log n/\theta$ in
\eqref{eq:main-upper-tail} gives
$\Prb[\rho>K_d\log n]\le n^{-d}$, where
\[
    K_d=a_0+\frac d\theta+\frac{a_1}{\log 2}.
\]
The algorithm does not depend on $d$. The guarantee also extends to
samples from an unknown distribution.

\begin{samepage}
\begin{corollary}[Unknown i.i.d.\ input]\label{cor:unknown-iid}
For every fixed $d>0$, the same algorithm satisfies
$\sup_{F\in\mathcal P(\mathbb R)}
\Prb_{F^{\otimes n},\sigma}[\rho(A)>K_d\log n]\le n^{-d}$,
where $\mathcal P(\mathbb R)$ is the set of Borel probability laws on
$\mathbb R$. The algorithm receives no information about $F$.
\end{corollary}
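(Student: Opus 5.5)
The plan is to reduce Corollary~\ref{cor:unknown-iid} to Theorem~\ref{thm:upper} by conditioning on the realized multiset of samples. Let $Y_1,\ldots,Y_n$ be i.i.d.\ with law $F$, and let $M=\{Y_1,\ldots,Y_n\}$ be the unordered multiset they form. The key observation is exchangeability. The joint law of $(Y_1,\ldots,Y_n)$ is invariant under permutations of the coordinates. So conditional on $M$, the arrival sequence is distributed as $M$ presented in a uniformly random order. With atoms, equal values make ``the order'' ambiguous, but the conditional law of the stream is still that of $X_\pi$ for a fixed labeling of $M$ and a uniform $\pi$.

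To make this rigorous, I would write $(Y_1,\ldots,Y_n)\overset{d}{=}(Y_{\tau(1)},\ldots,Y_{\tau(n)})$, where $\tau$ is an independent uniform permutation. Fix a measurable labeling $X(M)=(x_1\le\cdots\le x_n)$, the order statistics. Then $(Y_{\tau(i)})_i$ has the same law as $(x_{\pi(i)})_i$, where $\pi=\tau\circ\text{(the sorting permutation)}$. Conditional on $Y$, the permutation $\pi$ is uniform, because $\tau$ is uniform and independent of $Y$. The priority permutation $\sigma$ is generated by the algorithm independently of everything else, so it stays uniform and independent.

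Next, the output $A$ is a measurable function of the stream and $\sigma$. Hence
\[
\Prb_{F^{\otimes n},\sigma}[\rho(A)>K_d\log n]=\E\bigl[\Prb_{\pi,\sigma}[\rho(A)>K_d\log n\mid M]\bigr].
\]
Here $\rho$ is computed with $\OPT(M)$, which is a function of $M$ alone and therefore fixed under the conditioning. For each fixed realization of $M$, Theorem~\ref{thm:upper} applies to the fixed multiset $M$. Taking $z=d\log n/\theta$ gives an inner probability at most $e^{-\theta z}+\Prb[a_0\log n+a_1+z>K_d\log n]$. The second term vanishes because $a_1\le a_1\log n/\log 2$ for $n\ge2$. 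So the inner probability is at most $n^{-d}$. Averaging over $M$ preserves this bound. The bound holds for every $F$ and the algorithm never uses $F$, so taking the supremum over $F$ finishes the argument.

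The only delicate step is the conditioning argument in the presence of atoms of $F$. It requires that, given the multiset, the tie-breaking behaves exactly as in the fixed-input model. I would handle this through the explicit coupling with the independent $\tau$ described above, which avoids any appeal to regular conditional distributions on multisets beyond measurability of the order statistics.
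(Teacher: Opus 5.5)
Your proposal is correct and follows the paper's own argument. The paper also shuffles the i.i.d.\ sample independently for the analysis, conditions on the sample so that the stream is a fixed multiset in uniform random order with independent priorities, applies Theorem~\ref{thm:upper} with $z=d\log n/\theta$, and averages; your explicit coupling through the independent $\tau$ and your check that $a_0\log n+a_1+z\le K_d\log n$ spell out steps the paper leaves implicit (the latter is done in its proof of Theorem~\ref{thm:upper}).
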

\proofpointer{tr:iid}{proof:unknown-iid}
\end{samepage}

Exchangeability gives the corollary by conditioning on the sampled multiset.
Integrating the tail also gives $\E[\rho]=O(\log n)$ uniformly over $F$.
No continuity or moment assumptions are needed: the ratio is bounded by
$n-1$, even when the input values themselves have no finite moments.

\paragraph*{Matching lower bounds.}
The lower bounds allow unrestricted auxiliary storage and access to the full
numerical values, rather than just comparisons.

\begin{samepage}
\begin{theorem}[Lower tail for uniform inputs]\label{thm:uniform-lower}
There are absolute constants $c_0,c_1>0$ and $n_0$ such that, for every
$n\ge n_0$, every possibly randomized measurable online algorithm $\mathcal A$
using exactly $n$ cells satisfies
$\Prb[\cost(\mathcal A(X_1,\ldots,X_n))<c_0\log n]
\le4e^{-c_1\sqrt n}$
on independent $U[0,1]$ inputs. Its internal random tape $\zeta$ is
independent of the input.
\end{theorem}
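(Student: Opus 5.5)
We will prove the lower bound by a multi-scale adversary argument on i.i.d.\ uniform inputs: every online algorithm pays a constant expected cost at each of $\Theta(\log n)$ dyadic scales, and these contributions are nearly independent across scales, so their sum concentrates. We condition on the random tape $\zeta$, which is independent of the input, and so reduce to deterministic measurable algorithms. It then suffices to show $\Prb[\cost<c_0\log n]\le 4e^{-c_1\sqrt n}$ uniformly over deterministic algorithms.

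\textbf{Scale structure.} For a scale $k$ with $1\le k\le K$, where $K=\lfloor \tfrac12\log_2 n\rfloor$ or a constant fraction of it, partition $[0,1]$ into $2^k$ dyadic intervals of length $2^{-k}$. Split time into phases. Phase $k$ consists of $m_k$ arrivals with $m_k \approx n2^{-k}$, or alternatively a geometric schedule ending by time $n/2$. The aim is to show that during phase $k$ the algorithm, whatever its current empty-cell configuration, incurs a ``scale-$k$ charge'' of order $1$ with constant probability given the past. We will charge adjacencies $|A[i+1]-A[i]|\ge 2^{-k-1}$ between values whose dyadic intervals at level $k$ are non-adjacent.

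\textbf{Key local claim (the main obstacle).} Let $\mathcal F_{k}$ be the history before phase $k$. The empty cells form maximal runs (gaps). The claim is that, conditional on $\mathcal F_k$, with probability at least $p>0$ the phase-$k$ arrivals, together with later arrivals, force at least $c2^{k}$ adjacent pairs at distance of order $2^{-k}$ that are \emph{attributable to phase $k$}. Two cases arise. If the algorithm places phase-$k$ values into gaps shared with values of distant level-$k$ intervals, it pays directly. If instead it segregates values by level-$k$ interval, then each of the $2^k$ intervals needs reserved space. Since the counts in each interval fluctuate by $\sqrt{m_k2^{-k}}$ around the mean, some reservation will overflow or be left unused. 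That mismatch produces $\Omega(2^k)$ boundary jumps of size $2^{-k}$, giving total $\Omega(1)$. To make this rigorous, I would first try a potential argument. Let $\Phi_k$ count gap-boundaries whose neighboring filled values (or the gap's eventual content) lie in different level-$k$ intervals. Since only $n$ cells exist, the algorithm cannot leave slack everywhere. A Poissonization or anti-concentration bound for multinomial counts then shows that a constant fraction of intervals mismatch their reserved capacity with constant probability. The difficulty is ensuring that charges from different scales are not double counted. For this, each adjacency is charged only to the scale $k$ with $|A[i+1]-A[i]|\in[2^{-k-1},2^{-k+1})$, and we need $\Omega(2^k)$ such adjacencies at scale $k$. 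Thus distinct scales use disjoint adjacencies.

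\textbf{Concentration.} Let $Y_k\in\{0,1\}$ indicate the success of phase $k$. The claim gives $\Prb[Y_k=1\mid\mathcal F_k]\ge p$. Azuma's inequality, or a Chernoff bound for dominated Bernoulli sequences, shows $\sum_k Y_k\ge pK/2$ except with probability $e^{-\Omega(K)}$. This alone yields only a polynomially small failure probability. To reach $e^{-c_1\sqrt n}$, I would instead run, at each scale, about $\sqrt n$ independent sub-blocks, namely disjoint value subranges or time blocks. Each sub-block contributes an independent-ish charge of size $\Theta(1/\sqrt n)$ per scale. Summing over blocks by Azuma with $\sqrt n$ bounded increments gives an exponent of order $\sqrt n$. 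The total is then $\ge c_0\log n$ except with probability $4e^{-c_1\sqrt n}$, where the factor $4$ absorbs the union over the few bad events: the empirical counts deviating, the boundary scales, and the martingale tails in both time halves. Finally, we integrate over $\zeta$.
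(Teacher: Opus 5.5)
Your key local claim is where the whole proof has to happen, and what you give for it is a heuristic, not an argument. The case split is not exhaustive. Placing a value into a gap that also borders distant values costs nothing yet, because cost depends only on which values finally become adjacent. ``Reserved space'' is not defined for an adaptive algorithm, and an unused reservation costs nothing by itself, since later arrivals fill it. You never identify adjacencies that a capacity mismatch forces in \emph{every} completion.

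The missing idea is a deterministic certificate at a frozen time. Freeze the algorithm when $i$ arrivals remain. The $i$ empty cells border at most $2i$ occupied cells (anchors), which are now fixed. Join suffix values at distance at most $\delta$, and delete all final edges longer than $\delta$. Each maximal run of suffix positions then lies in one proximity component. Each run other than the first in its block starts right after an anchor through a short edge, so its first value is within $\delta$ of an anchor value. Hence every completion satisfies $\operatorname{long}_A(\delta)\ge\operatorname{cc}_i(\delta)-\operatorname{near}_i(\delta)-1$.

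At $\delta_i=1/(100i)$, reveal the suffix one value at a time: a new value meets at most two existing components, and meets one with conditional probability at most $1/50$. This, with a binomial bound for anchor-near values, gives $\operatorname{cc}_i\ge 4i/5$ and $\operatorname{near}_i\le i/10$ except with probability $e^{-\Omega(i)}$, conditionally on the tape and the prefix. The certificate needs $\delta\lesssim 1/i$. In your schedule, at least about $n2^{-k}\ge 2^k$ arrivals remain during phase $k$. So there can be $2^{k+1}$ or more anchors, enough for their $2^{-k}$-neighborhoods to cover $[0,1]$. The certificate is then vacuous at scale $2^{-k}$, and everything rests on the fluctuation argument you did not supply.

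Second, your range of scales cannot produce the $e^{-c_1\sqrt n}$ tail, and the sub-block repair does not work. At a coarse scale $2^{-k}\ge 1/\sqrt n$, the contribution consists of about $2^k\le\sqrt n$ edges of length about $2^{-k}$. It therefore cannot be cut into $\sqrt n$ increments of size $1/\sqrt n$. With increments as large as $2^{-k}$, Azuma over $\sqrt n$ blocks gives, for a deviation of order one, an exponent of order $4^k/\sqrt n$ at scale $k$. For a deviation of order $\log n$ in the sum over all coarse scales, the exponent is of order $(\log n)^2/\sqrt n$. A Chernoff bound for a count with mean of order $2^k$ gives only $e^{-\Omega(2^k)}$, which is a constant at the coarsest scales.

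The paper uses the fine half of the scales instead: suffix sizes $i_j=2^j i_0$ with $\sqrt n\le i_0<2\sqrt n$ and $i_j\le n/4$. Each scale fails with probability at most $2e^{-c\,i_j}$. A plain union bound over the $\Theta(\log n)$ scales then gives $4e^{-c\sqrt n}$, with no independence or martingale argument across scales.

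Finally, your assignment of edges to the windows $[2^{-k-1},2^{-k+1})$ is not disjoint, since consecutive windows overlap. Also, an edge known only to join non-adjacent level-$k$ intervals may fall in a much coarser window. Use the layer-cake identity $\cost(A)=\int_0^1\operatorname{long}_A(u)\,du$ instead. Since $\operatorname{long}_A$ is nonincreasing, the bound $\operatorname{long}_A(\delta_{i_j})\ge i_j/2$ contributes at least $1/400$ on $[\delta_{i_{j+1}},\delta_{i_j}]$. These intervals have disjoint interiors, so the total cost is $\Omega(\log n)$.
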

\proofpointer{lb:section}{proof:uniform-lower}
\end{samepage}

\begin{samepage}
\begin{theorem}[Lower tail in random order]\label{thm:random-lower}
For every $n\ge n_0$ and every such online algorithm $\mathcal A$, there is a
fixed set $X\subset[0,1]$ of $n$ distinct values such that
$\Prb_{\pi,\zeta}[\rho(\mathcal A(X_\pi))<c_0\log n]
\le4e^{-c_1\sqrt n}$.
The set may depend on $\mathcal A$ and $n$, but is fixed before the
permutation and the independent algorithmic tape are drawn.
\end{theorem}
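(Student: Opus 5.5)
We derive Theorem~\ref{thm:random-lower} from Theorem~\ref{thm:uniform-lower} by an averaging argument over the input set. The basic observation is that an i.i.d.\ $U[0,1]$ sample $(X_1,\ldots,X_n)$ can be generated in two stages. First draw the unordered set $S=\{X_1,\ldots,X_n\}$; almost surely it has $n$ distinct values. Then, conditionally on $S$, present its elements in a uniformly random order $\pi$, independent of $S$. This holds by exchangeability of the i.i.d.\ law: conditioned on the order statistics, the ranks form a uniform permutation independent of them. The algorithm's tape $\zeta$ is independent of both stages. Therefore
\[
\Prb_{X,\zeta}\bigl[\cost(\mathcal A(X))<c_0\log n\bigr]
=\E_S\Bigl[\Prb_{\pi,\zeta}\bigl[\cost(\mathcal A(S_\pi))<c_0\log n\bigr]\Bigr].
\]
By Theorem~\ref{thm:uniform-lower}, this is at most $4e^{-c_1\sqrt n}$.

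We then pass from cost to ratio. Since $S\subset[0,1]$, we have $\OPT(S)=\max S-\min S\le 1$. Hence $\rho(\mathcal A(S_\pi))=\cost/\OPT\ge\cost$ whenever $\OPT>0$, which holds almost surely. The event $\{\rho<c_0\log n\}$ is therefore contained in $\{\cost<c_0\log n\}$, and
\[
\E_S\Bigl[\Prb_{\pi,\zeta}\bigl[\rho(\mathcal A(S_\pi))<c_0\log n\bigr]\Bigr]\le 4e^{-c_1\sqrt n}.
\]
An expectation is at least the infimum of its integrand. So some set $S$ in the support attains a conditional probability at most the average, and we take $X=S$. We may restrict to the probability-one event on which the values are distinct and $\OPT>0$, and the infimum is still at most the average there. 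Such an $X$ is fixed before $\pi$ and $\zeta$ are drawn, and it may depend on $\mathcal A$ and $n$. This gives the statement with the same constants $c_0,c_1,n_0$.

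The main technical point is measurability. The map $S\mapsto\Prb_{\pi,\zeta}[\cdot]$ must be measurable so that the disintegration and Fubini steps are justified. I would handle this by parametrizing $S$ by its sorted vector $(X_{(1)},\ldots,X_{(n)})$, a Borel function of the sample. The event in question is a Borel set in $(X_{(\cdot)},\pi,\zeta)$, because $\mathcal A$ is measurable. Fubini–Tonelli then gives the measurability of the conditional probability in the sorted vector and justifies the displayed identity. Beyond that, the argument is a direct averaging.
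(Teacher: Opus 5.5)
Your proposal is correct and follows essentially the same argument as the paper. You realize the i.i.d.\ uniform input as a random value set presented in an independent uniform order, and you bound the averaged failure probability by Theorem~\ref{thm:uniform-lower}, using $\OPT\le1$ to pass from cost to ratio. You then select a set of distinct values whose failure probability is at most the average. The only cosmetic difference is in how the two-stage experiment is built: the paper shuffles an i.i.d.\ vector $V$ by an independent $\pi$ and defines $g(v)$ on vectors, extended by zero on vectors with repeated coordinates, whereas you decompose the sample into order statistics and an independent uniform rank permutation.
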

\proofpointer{lb:section}{proof:random-lower}
\end{samepage}

\paragraph*{Optimality.}
For uniform inputs, $\OPT(X)\le1$, so Theorem~\ref{thm:upper},
Corollary~\ref{cor:unknown-iid}, and Theorem~\ref{thm:uniform-lower} give
matching $\Theta(\log n)$ bounds on cost with high probability.
Theorems~\ref{thm:upper} and~\ref{thm:random-lower} give the same order for
the competitive ratio in the worst case over fixed inputs arriving in
random order.

\paragraph*{Bounds for structured inputs.}
The ratio can be smaller when the large gaps in the sorted input occur near
its ends. Section~\ref{sec:instance-sensitive} derives such bounds from the
same analysis. For every fixed $d>0$, suppose a nonconstant input has at
least $n-r$ values in an interval of width at most $\varepsilon\OPT(X)$,
where $r\in\{0,\ldots,n-1\}$ and $0\le\varepsilon\le1$. Then its ratio is
$O_d(r+\varepsilon\log n)$ with failure probability at most $n^{-d}$,
where the implicit constant depends only on $d$.
For $\{1,1/2,\ldots,1/n\}$, the ratio is $O_d(1+\log\log n)$ with the
same failure probability. Neither guarantee requires the algorithm to
know these properties of the input.

\subsection{Previous work and our approach}\label{sec:related}

Partitioning the value space and assigning capacity to each interval leaves
an immediate difficulty: an interval may receive more arrivals than its
capacity allows. Earlier algorithms handle this overflow in different ways.

For independent uniform inputs, Hu~\cite{Hu2026} gave an algorithm with
expected cost $\log n\,2^{O(\log^* n)}$, using adaptive buffers,
synchronization, and recursion. Hu also proved an $\Omega(\log n)$ lower
bound in expectation. Kalavas et~al.~\cite{KPT2026} use a sequence of
balls-into-bins stages, recycling unused capacity between stages. Bins large
enough for simultaneous concentration give an $O(\log^2 n)$ bound with high
probability; they also obtain a refinement to
$O(\log^{3/2+\delta}n)$ for every fixed $\delta>0$.
Hermansen~\cite{Hermansen2026} achieved expected cost $O(\log n)$ with
buckets of constant capacity arranged in a binary hierarchy. A request whose
bucket is full uses nearby capacity in the tree. The analysis bounds the
cost of this routing instead of requiring all buckets to avoid overflow.
The partition is fixed using the known uniform distribution.

For arbitrary fixed inputs arriving in random order, a numerical interval
can contain any number of values, regardless of its length. Platanos and
Tolias~\cite{PlatanosTolias2026} estimate quantiles from random prefixes and
construct buckets from these estimates. They obtain an $O(\log^2 n)$
competitive ratio with high probability and announce an
$O(\log^{3/2}n)$ refinement.

We use empirical partitions with buckets of constant capacity and
Hermansen's routing rule. At first, larger prefixes define finer partitions.
After the final partition is built, the algorithm merges neighboring
intervals and reuses the empty cells. At each reallocation, only empty
cells are reassigned; previously placed values stay where they are.
For the upper bound, we first analyze independent uniform inputs on $[0,1]$.
We count adjacent output pairs that lie on opposite sides of a fixed
threshold. During growth, the observed prefix doubles in size from one
phase to the next. We expose these prefixes in reverse, revealing only
their unordered sets of values. Conditional on the larger set, the
preceding prefix is a uniformly chosen subset of half its size. The
resulting conditional bounds can be combined across phases by successive
conditioning.
After the final partition is fixed, we separate its errors from fluctuations
in the later arrivals. An inaccurate interval covering $\nu$ of the finest
buckets incurs geometrically decreasing charges as buckets are merged, for
a total of $O(\nu)$ rather than $O(\nu\log n)$. Reallocation demand is bounded by
counts from the preceding input block, so the bound does not require the
distribution of the surviving empty cells. These estimates give an
exponential tail for the crossing count above the logarithmic scale.
Because placement decisions depend only on the relative order of the
inputs, we can transfer the crossing estimates to cuts between consecutive
ranks of any fixed input. Weighting each crossing count by the
corresponding numerical gap recovers the total cost.

For the lower bound, run an arbitrary algorithm on independent uniform
inputs on $[0,1]$ and stop with $i$ arrivals remaining. At most $2i$ occupied
cells border an empty position; call these cells anchors. For a sufficiently
small absolute constant $c>0$, connect two remaining values when their
distance is at most $c/i$. The resulting graph typically has $\Omega(i)$
connected components, while few remaining values lie within distance $c/i$
of an anchor value. Every completion must then contain $\Omega(i)$ edges
of length $\Omega(1/i)$. Integrating over suffix sizes that double at each step gives cost
$\Omega(\log n)$ except with probability $e^{-\Omega(\sqrt n)}$.
Averaging over the uniform sample then selects a fixed input with the
required lower bound in random order.

\begin{table}[t]
\centering
\caption{Algorithms for online sorting with exactly $n$ output positions.
The first three rows give bounds on cost for independent uniform inputs.
The last two give bounds on competitive ratio for arbitrary fixed inputs
arriving in random order.}
\label{tab:scope}
\begin{tabularx}{\linewidth}{
@{}>{\raggedright\arraybackslash}p{0.22\linewidth}
>{\raggedright\arraybackslash}p{0.43\linewidth}
Y@{}}
\toprule
Work & Approach & Guarantee \\
\midrule
Hu~\cite{Hu2026}
& Adaptive buffers and recursion
& $\log n\,2^{O(\log^*n)}$ in expectation \\

Kalavas et~al.~\cite{KPT2026}
& Large bins with capacity reused between stages
& $O(\log^2 n)$ with high probability \\

Hermansen~\cite{Hermansen2026}
& Hierarchical routing between buckets of constant capacity
& $O(\log n)$ in expectation \\

Platanos and Tolias~\cite{PlatanosTolias2026}
& Empirical rank partitions
& $O(\log^2 n)$ with high probability \\

This work
& Empirical partitions and hierarchical routing
& $O(\log n)$ with high probability \\
\bottomrule
\end{tabularx}
\end{table}

\paragraph*{Related models.}
Learning-augmented sorting supplies a predicted array index with each
arrival~\cite{BerceaEtAl2026}. Other work allows more output positions than
input values~\cite{Aamand2023,AzarPanigrahiVardi2026,NirjhorWein2026,Hermansen2026}.
Online metric TSP and Euclidean TSP with arrivals in random order are related
problems~\cite{Bertram2025,PlatanosTolias2026}. Here we consider placement on
the line with exactly $n$ output positions and no predictions.

\paragraph{Organization.}
Section~\ref{sec:technical-overview} outlines the proofs, and
Section~\ref{alg:section} specifies the algorithm and its capacity schedule.
Section~\ref{sec:upper-cut} proves the upper bound, including the transfer
to fixed multisets in Section~\ref{sec:transfer} and the extension to
unknown i.i.d.\ inputs in Section~\ref{tr:iid}.
Section~\ref{lb:section} proves the lower bounds, and
Section~\ref{sec:instance-sensitive} gives the bounds for structured inputs.

\section{Technical overview}\label{sec:technical-overview}

\subsection{Algorithmic framework}

We use the strict keys defined in Section~\ref{sec:model}; the output stores
only the original values. Each interval of keys is assigned a bucket of empty
array cells, which need not be consecutive. Allocated cells form a prefix of
the physical array. Empty cells in this prefix are available for placement;
the unused suffix is reserved for later phases.

\paragraph*{Learning and growing.}
After placing a fixed number $T$ of initial arrivals, a growing phase starts
with $t=T2^j$ observed keys, where $j\ge0$. We sort pointers to these keys and
choose boundaries every $T$ ranks, together with the outer boundaries
$-\infty$ and $+\infty$. The $t/T$ resulting intervals contain nearly equal
numbers of observed keys. Intervals include their left endpoint and exclude
their right endpoint, with the virtual endpoints assigned to the outer
intervals. Pairing adjacent intervals at each level gives a perfect binary
tree, with all leaves at the same depth. Each leaf receives a bucket of
capacity $2T$, supplied by old empty cells and a fresh block from the unused
array suffix. The phase processes the next $t$ arrivals, so the nominal load
per bucket is $T$, half its capacity. The observed prefix then doubles before
the next partition is built. Section~\ref{alg:section} fixes the absolute
constant $T$ and handles small inputs by consecutive placement.

\paragraph*{Placing an arrival.}
The leaf whose interval contains an arriving key is its \emph{canonical leaf}.
Among buckets with an available cell, the algorithm chooses one at minimum
tree distance from this leaf, measured by the number of tree edges. Buckets
are indexed from left to right in key order. Distance ties are broken by
increasing bucket index, and the first empty physical cell in the chosen
bucket is filled. We use Hermansen's routing rule~\cite{Hermansen2026} with
empirical partitions learned from prefixes, as in Platanos and
Tolias~\cite{PlatanosTolias2026}.

\paragraph*{Reassigning empty cells.}
Each available empty cell carries its bucket's left endpoint as a temporary
label. At a partition change, we prescribe the new capacities and process old
empty cells in physical order. The stored label determines a cell's canonical
leaf in the new tree. The same routing rule assigns it to a bucket with unused
\emph{assignment capacity}, meaning the number of additional empty cells that
the bucket may receive. We replace the label by the new bucket's left endpoint.
Fresh cells are then appended in physical order and assigned to the remaining
capacities in increasing bucket order. Stored labels define the requests even
when consecutive partitions are not nested. Only the ownership and labels of
empty cells change; filled values remain in place.

\paragraph*{Freezing and coarsening.}
Once a constant fraction of the input has arrived, we build one final empirical
tree. Each later shrinking phase merges adjacent pairs of intervals. Old empty
cells and fresh cells provide buckets of constant capacity at the coarser
level. A nonfinal phase places three quarters of its available capacity, up to
rounding, and carries the remaining cells forward. The final phase has one
bucket and only a constant number of cells left to fill. There are
$O(\log n)$ phases in each regime. Their lengths and allocation sizes depend
only on $n$, not on the placements. The integer schedule and its feasibility
are proved in Section~\ref{alg:section}.

\subsection{Upper-bound analysis}

We first take the inputs to be independent uniform samples from $[0,1]$.
We view the completed array $A$ as a path, with an edge
$(A[i],A[i+1])$ of length $|A[i+1]-A[i]|$ for each $1\le i<n$.
Fix a deterministic threshold $q\in(0,1)$, and let $C(q)$ count adjacent
output pairs on opposite sides of $q$. Almost surely, the input keys are
distinct and none equals $q$. We prove
\begin{equation}\label{overview:fixed-cut-target}
 \E[e^{\theta C(q)}]\le A_0n^{\gamma_0},
\end{equation}
where $\theta>0$, $A_0\ge1$, and $\gamma_0\ge0$ are absolute constants.
Exponential Markov inequality then gives an exponential tail for $C(q)$
above a constant multiple of $\log n$.

{We separate the main charges into growing-phase contributions,
persistent partition errors, and future fluctuations.
Growing contributions are combined by reverse conditioning.
During shrinking, charges from partition errors are summed
geometrically across merges, while future fluctuations are controlled
using independent input blocks conditional on the final growing prefix.}

\paragraph*{A capacity witness at one cut.}
For the accounting, replace each key or temporary label $x$ by the bit
$\mathbf{1}[x>q]$. Replacing a label $y$ by an arrival or a new label $z$
affects at most two adjacent edges. We charge this replacement
$2|\mathbf{1}[z>q]-\mathbf{1}[y>q]|$, which bounds its increase in crossings.
A fresh physical block starts with ordered labels and contributes at most
one initial internal crossing. We also charge at most one crossing for its
boundary with the preceding block.

An \emph{operation} consists of all arrivals in a phase or all old cell
reassignments at a transition. A request's \emph{placing node} is the least
common ancestor of its canonical and destination leaves, or that leaf if
they coincide. Mark these nodes and group each request under its highest
marked ancestor. Such a highest marked subtree exchanges no requests with
its exterior, since any such request would mark a proper ancestor. Its
requests use distinct cells or assignment slots, so their number is at most
the subtree's capacity at the start of the operation.

For an internal highest mark $v$, consider the first request crossing between
its children. The child containing its canonical leaf must be full: any
available capacity there would be closer in the tree. Before this request,
nothing entered that child from its sibling or from outside $v$. All requests
already assigned there are therefore canonical in that child. The crossing
request makes canonical demand exceed capacity. This is the \emph{local
overflow witness}.

A request grouped under a node has both its old and new key or label in that
node's interval. At each tree level, only the node containing $q$ can therefore
contribute to the charge; call it \emph{selected}. A selected node with $h$
leaves has $O(h)$ capacity and charge. Apart from the leaf contribution, a
nonzero group charge requires a local overflow witness. \begingroup
Let $\mathcal E_h$ be the event that such a witness occurs at
scale $h\in\{1,2,4,\ldots\}$, and let $h_{\max}$ be the largest scale for which
$\mathcal E_h$ occurs, or zero if none occurs. Then
\[
 \sum_{h\in\{1,2,4,\ldots\}}h\,\mathbf1[\mathcal E_h]\le2h_{\max}.
\]
Since $h_{\max}=h>0$ implies $\mathcal E_h$,
bounds of the form $\Prb(\mathcal E_h)\le C_*e^{-ch}$,
for absolute $C_*,c>0$, give a bounded exponential moment
for the weighted sum.\endgroup\ 
No independence between scales is
needed. The same argument works conditionally, and any fixed number of small
scales can be absorbed into a constant charge. The required tail estimates are proved in Sections~\ref{prob:growing} and~\ref{prob:shrinking}.

\paragraph*{Overlapping growing prefixes.}
Let $L$ be the number of growing phases, indexed by $j=0,\ldots,L-1$, and
write $P_j$ for the unordered set of keys observed at the start of phase $j$.
Thus $P_{j+1}$ consists of $P_j$ and that phase's arrivals, with
$|P_{j+1}|=2|P_j|$. The set $P_L$ defines the final empirical tree. Pair the
arrival charge of phase $j$ with the following reallocation; the last pair
includes the first shrinking reallocation.

To bound a reallocation, count all labels assigned at the start of the
preceding phase, including labels on cells that were later filled. Finite
labels occur every $T$ prefix ranks and each has multiplicity $2T$. In a new
interval $I$, their total multiplicity, including the possible label $-\infty$,
is at most $2|P_j\cap I|+4T$. This bounds surviving requests without referring
to which cells remain empty. Together with the arrival witnesses, it gives
a bound $G_j$ for the paired charge that depends only on $(P_j,P_{j+1})$.

Conditional on $P_{j+1},\ldots,P_L$, the set $P_j$ is a uniform subset of
$P_{j+1}$ of size $|P_{j+1}|/2$. For arrival witnesses, the selected intervals
are defined by this random subset. We first map the ordered elements of the fixed larger
prefix to equally spaced points in $[0,1]$, placing the threshold in the
corresponding gap. This preserves interval memberships and overflow events.
For a child with $s$ leaves in the current $g_j=|P_j|/T$-leaf tree, put $p=s/g_j$, its nominal length in these normalized rank coordinates. We localize the
selected parent to a window of length $O(p)$ around the transformed threshold,
except with exponentially small probability in the nominal sample count.
A grid with spacing a sufficiently small fixed multiple of $p$ has only
constantly many points in this window. An abnormal empirical interval forces
an atypical count in one of the fixed grid intervals. Hypergeometric tails
and a union bound give an $e^{-\Omega(h)}$ bound at scale $h$, uniformly over
the fixed larger prefix. For reallocation, the new intervals are already
fixed by $P_{j+1}$, so hypergeometric bounds apply directly to $|P_j\cap I|$.

The maximum-scale estimate gives a uniform bound on
$\E[e^{\theta_gG_j}\mid P_{j+1},\ldots,P_L]$ for an absolute $\theta_g>0$.
Successive conditioning combines these bounds, with a constant factor per
phase. The $G_j$ themselves need not be independent.

\paragraph*{Persistent partition errors.}
Suppose the tree learned from $P_L$ has $b$ finest leaves. A child containing
$s$ such leaves has nominal probability mass $s/b$; under uniform input,
its actual mass is its interval length after clipping to $[0,1]$. Call a
selected parent bad if either child's mass differs from its nominal mass
by more than a fixed small relative tolerance. We also declare scales below
a fixed absolute cutoff bad. For larger scales, the same localization and
grid argument, now using binomial counts in the uniform prefix, bounds the
error probability at a parent with ${\nu}$ finest leaves by $e^{-\Omega({\nu})}$.

While that parent remains internal in shrinking phase $i$, it contains
${\nu}/2^{i-1}$ current leaves. Its charge per operation is therefore
$O({\nu}/2^{i-1})$. Summing over phases and over arrivals and reallocations gives
$O({\nu})$, even if the error persists through many merges. Let $D_{\rm err}$
bound these total charges, excluding the first shrinking reallocation,
which is already in $G_{L-1}$. This variable depends only on $P_L$.
The error probabilities and the maximum-scale estimate give a bounded exponential
moment for $D_{\rm err}$ after averaging over $P_L$.

\paragraph*{Surviving empty cells.}
Condition on $P_L$, fixing all shrinking intervals and their good or bad
status. Consider a node interval $I$ after the merge into shrinking phase $i\ge2$ and its
corresponding subtree in the preceding phase. Write $\operatorname{cap}_{\rm old}(I)$ for
that subtree's initial capacity, $N_{i-1}(I)$ for its canonical arrivals during
the preceding phase, and $N_i^{\mathrm{re}}(I)$ for the number of its surviving empty cells. If a canonical arrival is routed
outside $I$, the subtree is already full and $N_i^{\mathrm{re}}(I)=0$. Otherwise all its
canonical arrivals occupy distinct cells inside it. Hence
\[
 N_i^{\mathrm{re}}(I)\le\max\{\operatorname{cap}_{\rm old}(I)-N_{i-1}(I),0\}.
\]
After merging, these remaining cells are precisely the old requests canonical
in $I$. If its new capacity is $\operatorname{cap}_{\rm new}(I)$, local reallocation overflow
requires
\[
 N_{i-1}(I)<\operatorname{cap}_{\rm old}(I)-\operatorname{cap}_{\rm new}(I).
\]
Thus we need a lower tail for an input count, rather than the distribution
of the surviving empty cells.

For a merge into a nonfinal phase, the unrounded schedule gives new capacity
one half of the old capacity and nominal arrival load three quarters of the
old capacity. The reallocation overflow threshold is then one half of the
old capacity, below the nominal load by a fixed fraction. The actual schedule
preserves this gap on good intervals despite rounding. Their arrival counts likewise have means below the current
capacity by a fixed fraction. Conditional on $P_L$, these counts are
binomial because the phase lengths are deterministic and the later inputs
remain independent uniforms. Chernoff bounds give exponentially small
probabilities for both kinds of witnesses.

Let $k$ be the number of shrinking phases, indexed by $i=1,\ldots,k$.
For $i<k$, let $W_i$ be the bound obtained from counts in the phase-$i$ input
block for its good-parent arrival charges and the next reallocation charges,
when that reallocation is nonfinal. The maximum-scale estimate bounds the
conditional exponential moment of each part. The two parts use the same
block and may be dependent; conditional Cauchy--Schwarz combines their
moments. Each $W_i$ depends only on $P_L$ and its own input block, so the
$W_i$ are independent conditional on $P_L$.

\paragraph*{Combining the charges.}
For an absolute constant $D_1$, the accounting gives
\begin{equation}\label{overview:decomposition}
 C(q)\le D_1(L+k+1)+\sum_{j=0}^{L-1}G_j
             +D_{\rm err}+\sum_{i=1}^{k-1}W_i.
\end{equation}
The first term covers initialization, fresh blocks, leaf charges, and the
final phase of constant size. Choose $\theta>0$ small enough for all the
moment bounds above. Conditional on $P_L$, the growing sum depends on the
ordering of that prefix, independently of the future blocks determining
the $W_i$, while $D_{\rm err}$ is fixed. The conditional moments of the two
sums are at most $M_g^L$ and $M_s^{k-1}$ for absolute constants $M_g,M_s$.
If $M_b$ bounds the moment of $D_{\rm err}$, then
\[
 \E[e^{\theta C(q)}]
 \le e^{\theta D_1(L+k+1)}M_g^LM_s^{k-1}M_b
 =n^{O(1)}.
\]
Since $L+k=O(\log n)$, this proves \eqref{overview:fixed-cut-target};
increasing $A_0$ covers the finitely many small input sizes.

\subsection{From crossings to metric cost}

Fix a strict ordering of the input keys, and let
$x_{(1)}\le\cdots\le x_{(n)}$ be their corresponding real values. Assume
$x_{(n)}>x_{(1)}$. For $1\le a<n$, let $C_a$ count adjacent output pairs with
one key rank at most $a$ and the other greater than $a$. An edge pays exactly
the sorted gaps between its endpoints. Dividing by the input range gives
\begin{equation}\label{overview:gap-identity}
 \rho(A)=\sum_{a=1}^{n-1}w_aC_a,
 \qquad
 w_a=\frac{x_{(a+1)}-x_{(a)}}{x_{(n)}-x_{(1)}},
 \qquad w_a\ge0,\quad\sum_a w_a=1.
\end{equation}

Figure~\ref{fig:crossing-identity} illustrates how edge lengths are recovered from crossing counts.\par 
\begin{figure}[t]
\centering
\begingroup
\definecolor{crossedge}{RGB}{30,85,132}
\definecolor{crosscut}{RGB}{166,62,28}
\tikzset{
  cross axis/.style={black!65,line width=.4pt},
  cross edge/.style={crossedge,line width=.85pt},
  cross cut/.style={crosscut,dashed,line width=.6pt},
  cross point/.style={circle,inner sep=0pt,minimum size=3.5pt,
    draw=crossedge,line width=.6pt},
  every node/.style={font=\small,inner sep=1pt}
}
\begin{minipage}[t]{.49\linewidth}
\centering
\begin{tikzpicture}[x=\dimexpr\linewidth/7\relax,y=1cm]
  \path[use as bounding box] (0,-.7) rectangle (7,3.05);
  \node[anchor=west] at (.08,2.8) {(a)};
  \node[anchor=east] at (1.85,2.8) {$A=$};
  \foreach \i/\v in {0/1,1/4,2/2,3/5}{
    \draw[fill=black!22,draw=black!65,line width=.4pt]
      ({2+.62*\i},2.63) rectangle ({2+.62*(\i+1)},2.97);
    \node at ({2+.62*(\i+.5)},2.8) {$\v$};
  }
  \foreach \l/\r/\y/\a/\b in {1/4/1.65/1/4,2/4/1.05/4/2,2/5/.45/2/5}{
    \node[anchor=east] at (.9,\y) {$(\a,\b)$};
    \draw[cross edge] ({\l+.4},\y)--({\r+.4},\y);
    \node[cross point,fill=crossedge] at ({\l+.4},\y) {};
    \node[cross point,fill=white] at ({\r+.4},\y) {};
  }
  \draw[cross cut] (3.4,0)--(3.4,2.03);
  \node[above,text=black] at (3.4,2.03) {$q=3$};
  \foreach \y in {1.65,1.05,.45}{
    \fill[crosscut] (3.4,\y) circle[radius=1.8pt];
  }
  \draw[cross axis,->] (1.15,0)--(5.85,0);
  \foreach \v in {1,2,4,5}{
    \draw[cross axis] ({\v+.4},-.045)--({\v+.4},.045);
    \node[below=3pt] at ({\v+.4},0) {$\v$};
  }
  \node[anchor=west] at (5.9,0) {value};
\end{tikzpicture}
\end{minipage}\hfill
\begin{minipage}[t]{.49\linewidth}
\centering
\begin{tikzpicture}[x=\dimexpr\linewidth/7\relax,y=1cm]
  \path[use as bounding box] (0,-.7) rectangle (7,3.05);
  \node[anchor=west] at (.08,2.8) {(b)};
  \node at (3.4,2.8) {$C(3)=C_2=3$};
  \fill[crossedge!8] (1.4,0)--(1.4,.55)--(2.4,.55)
    --(2.4,1.65)--(4.4,1.65)--(4.4,.55)--(5.4,.55)--(5.4,0)--cycle;
  \draw[cross axis,->] (1.15,0)--(5.85,0);
  \draw[cross axis,->] (1.15,0)--(1.15,2.08);
  \node[above] at (1.15,2.08) {$C(q)$};
  \foreach \c in {1,3}{
    \draw[cross axis] (1.105,{.55*\c})--(1.195,{.55*\c});
    \node[left=3pt] at (1.15,{.55*\c}) {$\c$};
  }
  \node[below left=2pt] at (1.15,0) {$0$};
  \draw[cross edge] (1.4,.55)--(2.4,.55);
  \draw[cross edge] (2.4,1.65)--(4.4,1.65);
  \draw[cross edge] (4.4,.55)--(5.4,.55);
  \foreach \x/\y in {1.4/.55,2.4/1.65,4.4/.55,5.4/0}{
    \node[cross point,fill=crossedge] at (\x,\y) {};
  }
  \foreach \x/\y in {2.4/.55,4.4/1.65,5.4/.55}{
    \node[cross point,fill=white] at (\x,\y) {};
  }
  \draw[cross cut] (3.4,0)--(3.4,2.03);
  \fill[crosscut] (3.4,1.65) circle[radius=1.8pt];
  \node[above,text=black] at (3.4,2.03) {$q=3$};
  \foreach \v in {1,2,4,5}{
    \draw[cross axis] ({\v+.4},-.045)--({\v+.4},.045);
    \node[below=3pt] at ({\v+.4},0) {$\v$};
  }
  \node[anchor=west] at (5.9,0) {$q$};
  \foreach \x/\a/\c in {1.9/1/1,3.4/2/3,4.9/3/1}{
    \node at (\x,-.6) {$C_\a=\c$};
  }
\end{tikzpicture}
\end{minipage}
\par\vspace{2mm}
{\small
$\displaystyle
\sum_{i=1}^{3}|A[i+1]-A[i]|=3+2+3
=\int_{1}^{5}C(q)\,dq
=1\cdot1+2\cdot3+1\cdot1=8.
$\par}
\endgroup
\caption{Crossings for $A=(1,4,2,5)$.
(a) Each edge $(A[i],A[i+1])$ is represented by the half-open interval
between its endpoint values, including the smaller endpoint.
(b) $C(q)$ counts the intervals containing $q$ and equals $C_a$ on the gap
$[x_{(a)},x_{(a+1)})$ between consecutive sorted values.
The total edge length equals the shaded area under $C(q)$,
and $\rho(A)=8/(5-1)=2$.}
\label{fig:crossing-identity}
\end{figure}
\par To bound $C_a$, return to the auxiliary uniform experiment and take $q=a/n$.
Conditional on exactly $a$ observations lying below $q$, we have $C(q)=C_a$.
The arrival permutation of the ranks is independent of the sorted values,
so it remains uniform under this conditioning. The event has probability
at least $1/(n+1)$ because $a$ is a mode of $\operatorname{Bin}(n,a/n)$.
All partition boundaries and placement decisions use only the relative order
of the keys. The law of $C_a$ is therefore the same for every fixed set of
strict keys arriving in random order. Conditioning costs a factor of at most
$n+1$, giving $\E_\pi[e^{\theta C_a}]\le2A_0n^{\gamma_0+1}$.
By convexity,
\[
 \E[e^{\theta\rho(A)}]
 \le\sum_a w_a\E[e^{\theta C_a}]
 \le2A_0n^{\gamma_0+1}.
\]
Exponential Markov inequality gives the stated upper tail for the
competitive ratio.

For repeated values, occurrence $i$ receives priority
$\xi_i=\sigma(\pi^{-1}(i))$, using the independent priority permutation
$\sigma$ from the model. Conditional on any arrival permutation $\pi$,
the vector $\xi$ is uniform, so $\xi$ and $\pi$ are independent.
Conditioning on $\xi$ gives fixed strict keys $(x_i,\xi_i)$ arriving in
uniform random order. The preceding bounds apply, with zero weights for
gaps within ties, and averaging over $\xi$ preserves them. Constant inputs
have zero cost and ratio one by convention. Section~\ref{sec:instance-sensitive}
uses the same gap weights to obtain sharper bounds for particular inputs.

\subsection{Lower-bound idea}

We adapt Hu's suffix-scale argument~\cite{Hu2026} to bound realized cost.
On independent $U[0,1]$ input, freeze an arbitrary online algorithm when $i$
arrivals remain. Its $i$ empty cells have at most $2i$ occupied neighbors;
call these neighbors \emph{anchors}. Conditional on the observed prefix
and the entire random tape, which is independent of the input, the anchors
are fixed and the suffix is still an independent uniform sample.

For a distance threshold $\delta>0$, join two suffix occurrences when their
values differ by at most $\delta$. Let $\operatorname{cc}_i(\delta)$ be the
number of components, $\operatorname{near}_i(\delta)$ the number of suffix
points within $\delta$ of an anchor value, and $\operatorname{long}_A(\delta)$
the number of final adjacent edges longer than $\delta$. Every completion
satisfies
\begin{equation}\label{overview:lower-certificate}
 \operatorname{long}_A(\delta)
 \ge\operatorname{cc}_i(\delta)-\operatorname{near}_i(\delta)-1.
\end{equation}
To prove this, delete the long edges. Within each remaining block, every
contiguous run of suffix positions lies in one proximity component. Each
run after the first in its block starts next to a frozen anchor, and its
first point is counted by $\operatorname{near}_i(\delta)$. These points are
distinct, and every proximity component appears in at least one run. There
is at most one uncharged run per block, giving the inequality.

At $\delta_i=1/(100i)$, a newly exposed suffix point lies within $\delta_i$
of an earlier point with conditional probability at most
$2\delta_i(i-1)\le1/50$. It can join at most two existing components: three
would have representatives in an interval of length $2\delta_i$, forcing
two of them to be connected already. The anchor neighborhoods have total
length at most $2\delta_i\cdot2i=1/25$. Exponential tail bounds give at least
$4i/5$ components and at most $i/10$ points near anchors, except with
probability $e^{-\Omega(i)}$. For $i\ge5$, every completion then has at
least $i/2$ long edges.

Take suffix sizes that double from order $\sqrt n$ to order $n$. The
threshold intervals $[\delta_i/2,\delta_i]$ have disjoint interiors. On the
event just described, each contributes at least
$(i/2)(\delta_i/2)=1/400$ to
$\int_0^1\operatorname{long}_A(u)\,du=\cost(A)$.
Summing over these scales gives $\Omega(\log n)$ cost. A union bound over
the exceptional events gives failure probability $e^{-\Omega(\sqrt n)}$.

For a fixed set of values, let its failure probability be taken over both
the arrival permutation and the algorithm's random tape. Averaging this
probability over independent uniform values gives the uniform-input failure
bound, because an independent shuffle preserves their joint law. Some fixed
set of distinct values therefore has at most this failure probability. It
is chosen before
either random choice, and its range is at most one, so the cost lower bound
also holds for the competitive ratio.
\section{The empirical-partition algorithm}
\label{alg:section}\label{sec:overview}

We now specify \EmpiricalSort{}, including the integer schedule and the
rules for reusing empty cells. The growing phases adapt the empirical-prefix
partitions of Platanos and Tolias~\cite{PlatanosTolias2026}; the shrinking
schedule and tree-distance placement rule follow
Hermansen~\cite{Hermansen2026}. All phase lengths and allocation sizes
are determined by $n$.

We work with strictly ordered keys and their associated real values.
For repeated values, these are the lexicographic tie-breaking keys from
Section~\ref{tr:metric-tail}; the output stores only the original values.
Partition boundaries are pointers to observed keys, together with the
ordered tags $-\infty$ and $+\infty$.

\subsection{Partitions, placements, and reallocations}
\label{alg:routing}

\paragraph*{Partition trees.}
Given a prefix of $m$ observed keys, sort pointers to them in the order
$y_1<\cdots<y_m$. For a power of two $b\le m$, choose the internal boundaries
\begin{equation}
 y_{\lfloor mj/b\rfloor},\qquad 1\le j<b,
 \label{alg:boundaries}
\end{equation}
and the outer boundaries $-\infty,+\infty$. These boundaries divide the
prefix into groups of nearly equal size without changing any output
position. Intervals are left-closed and right-open, except that the last
also includes its virtual right endpoint. In particular, the first interval
contains the label $-\infty$.

Repeatedly pairing adjacent intervals gives a perfect binary
\emph{partition tree}. The intervals used in a phase form the
\emph{current partition} and are the leaves of the \emph{current tree}.
During shrinking, this tree is obtained by truncating the fixed partition
tree at the appropriate level. The interval containing a key is its
\emph{canonical leaf}.

A \emph{bucket} is a set of currently empty physical cells assigned to one
leaf. Its cells need not be consecutive. Buckets are indexed from left to
right in key order, and the distance between two buckets is the number of
tree edges between their leaves. Thus a bucket's interval determines which
keys request it, while its cells determine where keys can be stored.

\paragraph*{Placing arrivals.}
The allocated cells always form a physical prefix of the original array.
Its empty cells are \emph{available}; cells in the unallocated suffix are
reserved for later phases. At the start of a phase, every available cell
belongs to a bucket and carries that bucket's left endpoint as a temporary
label.

For an arriving key, choose a bucket with a free cell at minimum tree
distance from its canonical leaf. Break distance ties by increasing bucket
index, and use the first free cell of the chosen bucket in physical order.
Store the associated real value irrevocably in that cell. Equivalently,
ascend from the canonical leaf to the first ancestor whose subtree contains
available capacity, then choose the least-indexed nonempty bucket in that
subtree. The routine $\textsc{Place}(t)$ processes the next $t$ arrivals by
this rule, placing each before reading the next.

\paragraph*{Reallocating empty cells.}
At a phase change, prescribe the new bucket capacities before assigning
cells. Process the old available empty cells in increasing physical order.
For each cell, use its stored label to find its canonical leaf in the new
tree. Assign the cell to a nearest bucket with unused assignment capacity,
breaking ties by bucket index, and replace its label by that bucket's left
endpoint. Assignment capacity counts how many more cells a bucket may
receive in this reallocation.

After all old cells have been assigned, append the prescribed fresh block
from the unallocated suffix of the original $n$-cell array. Assign its cells
in physical order to the remaining capacities, completing the lowest-indexed
bucket first, and give each cell its new bucket's left endpoint as label.
We call this operation \textsc{Reallocate}. It changes the labels and
ownership of empty cells but leaves all filled cells unchanged.

A stored label is a boundary key or a virtual endpoint, so it identifies a
unique interval in the new partition even when successive trees are not
nested. Under adjacent-pair merging, this interval is the parent of the
cell's old bucket. Each reassignment uses the label stored immediately
before that reassignment.

\subsection{Growing phases}
\label{alg:growing}

Fix $T=512$. For $n<16T=8192$, place arrivals consecutively and stop.
For $n\ge16T$, place the first $T$ arrivals in the first $T$ cells. Set
\begin{equation}
 t_j=T2^j,\qquad
 J=\max\{j\ge0:t_j\le n/8\},\qquad L=J+1.
 \label{alg:growing-times}
\end{equation}
There are $L$ growing phases, indexed by $j=0,\ldots,J$. Phase $j$
starts with $t_j$ observed keys. Build their empirical partition with
$g_j=t_j/T$ leaves and prescribe capacity $2T$ for every bucket.
The total available capacity is $2t_j$, and the phase places the next
$t_j$ arrivals.

Phase $0$ allocates $2T$ fresh cells. For $j\ge1$, the preceding phase
leaves $t_j/2$ old empty cells, so append $3t_j/2$ fresh cells.
In both cases, call \textsc{Reallocate} before $\textsc{Place}(t_j)$.
At the end of phase $j$, the numbers of observed keys, allocated cells,
and available empty cells are, respectively,
\begin{equation}
 2t_j,\qquad 3t_j,\qquad t_j.
 \label{alg:growing-counts}
\end{equation}
For $j=0$, these counts follow from the initial $T$ filled cells and the
$2T$ fresh cells. For $j\ge1$, phase $j-1$ left $t_j$ filled cells and
$t_j/2$ empty cells. Adding $3t_j/2$ cells and placing $t_j$ arrivals gives
the displayed counts. All allocations are integral, and the allocated
prefix never exceeds $3t_J\le3n/8$ cells.

Let $P_j$ be the unordered set of the first $t_j$ keys, including $j=L$.
After the last growing phase, write
\begin{equation}
 m=2t_J=t_L,\qquad N=n-m.
 \label{alg:frozen-size}
\end{equation}
There are now $m$ filled cells and $N$ arrivals still to come. Of the $N$
empty cells in the original array, $t_J$ are already allocated. The filled
cells need not occupy its first $m$ positions. We build one new partition
tree from $P_L$ and keep its boundaries throughout the shrinking phases.

Since $t_J\le n/8<t_{J+1}=m$, we have
\begin{equation}
 \frac n8<m\le\frac n4,\qquad
 \frac{3n}{4}\le N<\frac{7n}{8},\qquad m\le\frac N3.
 \label{alg:size-ratios}
\end{equation}

\subsection{Fixed partition tree and shrinking phases}
\label{alg:shrinking}

{The shrinking schedule halves the bucket count in each phase and
ends with one bucket. Between nonfinal phases, about one quarter
of the preceding capacity survives and fresh cells supply another
quarter. The total available capacity therefore roughly halves,
keeping capacity per bucket nearly constant.} Let $k$ be the number of phases, $b_i$ the bucket count in
phase $i$, and $\lambda$ the ideal bucket capacity before rounding. Set
\begin{equation}
 k=\lfloor\log_2N\rfloor-7,\qquad
 b_i=2^{k-i}\ (1\le i\le k),\qquad
 \lambda=\frac{4N}{3\cdot2^k},\qquad B=\lfloor\lambda\rfloor-3.
 \label{alg:shrinking-parameters}
\end{equation}
The quantity $B$ will bound every nonfinal bucket's capacity from below.
Here $N\ge6144$, so $k\ge2$. Moreover,
$b_1=2^{k-1}\le N/256$ and $N<7m$, so $b_1<m$ and the $m$ observed
keys suffice to build the fixed tree.

Phase $1$ uses its $b_1$ finest intervals. Each later phase merges adjacent
pairs, giving $b_i$ intervals. The surviving boundaries still have the form
\eqref{alg:boundaries} with prefix size $m$ and bucket count $b_i$:
retaining every $2^{i-1}$st finest boundary gives
$y_{\lfloor mj/b_i\rfloor}$.

\paragraph*{Phase lengths.}
Let $M_i$ be the number of available empty cells after the phase-$i$
allocation and before its placements. A nonfinal phase places
\begin{equation}
 n_i=\lfloor3M_i/4\rfloor
 \label{alg:phase-length}
\end{equation}
arrivals, leaving $\lceil M_i/4\rceil$ available cells. The final phase
places $n_k=M_k$ arrivals.

\paragraph*{Fresh cells and capacities.}
Let $m+S_i$ be the endpoint of the allocated physical prefix after the
phase-$i$ allocation. Choose
\begin{equation}
 S_0=0,\qquad S_k=N,\qquad
 S_i=\left\lfloor N-\frac{2N}{3\cdot2^i}\right\rfloor
 \quad(1\le i<k).
 \label{alg:fresh-boundaries}
\end{equation}
The convention $S_0=0$ is only auxiliary; before shrinking, the allocated
prefix has length $m+t_J$.

The first shrinking phase has $M_1=S_1=\lfloor2N/3\rfloor$ available
cells: the $t_J$ old empty cells and $S_1-t_J$ fresh cells. For $i\ge2$,
append $S_i-S_{i-1}$ fresh cells to those left by phase $i-1$, giving
\begin{equation}
 M_i=(S_i-S_{i-1})+\lceil M_{i-1}/4\rceil
 \qquad(2\le i\le k).
 \label{alg:capacity-recurrence}
\end{equation}
Thus $S_i$ specifies the cumulative allocation endpoint, while $M_i$
counts the capacity available for the current phase.

Distribute $M_i$ as evenly as possible among the $b_i$ buckets. Writing
$M_i=b_i d_i+r_i$ with $0\le r_i<b_i$, give the first $r_i$ buckets
capacity $d_i+1$ and the rest capacity $d_i$. Reallocate the old empty
cells first, assign the fresh cells to the remaining capacities, and call
$\textsc{Place}(n_i)$. Algorithm~\ref{alg:pseudocode} summarizes the full
procedure.

\begin{algorithm}[tbp]
\caption{\EmpiricalSort}
\label{alg:pseudocode}\label{alg:main-summary}
\begin{algorithmic}[1]
\Require An online stream of $n$ keys and an array of $n$ cells
\State $T\gets512$
\If{$n<16T$}
  \State Place all arrivals consecutively; \Return
\EndIf
\State Place the first $T$ arrivals in the first $T$ cells
\State Compute $t_j,J,L$ from~\eqref{alg:growing-times}
\For{$j=0,\ldots,J$} \hfill \Comment{Growing phases}
  \State Build the $t_j/T$-leaf empirical tree from the first $t_j$ keys
  \State $f\gets2T$ if $j=0$, and $f\gets3t_j/2$ otherwise
  \State \Call{Reallocate}{current tree, capacity $2T$ per bucket, $f$ fresh cells}
  \State \Call{Place}{$t_j$}
\EndFor
\State Set $m=2t_J$, $N=n-m$, and compute $k,b_i,B,S_i$ from
\Statex \hspace{\algorithmicindent}\eqref{alg:shrinking-parameters}
        and~\eqref{alg:fresh-boundaries}
\State Build the fixed $b_1$-leaf empirical tree from the first $m$ keys
\For{$i=1,\ldots,k$} \hfill \Comment{Shrinking phases}
  \State Use the finest leaves if $i=1$; otherwise merge adjacent pairs
  \State $f\gets S_1-t_J$ if $i=1$, and $f\gets S_i-S_{i-1}$ otherwise
  \State $M_i\gets$ number of old available empty cells $+f$
  \State \Call{Reallocate}{current tree, balanced capacities totaling $M_i$, $f$}
  \State $n_i\gets\lfloor3M_i/4\rfloor$ if $i<k$, and $n_i\gets M_i$ otherwise
  \State \Call{Place}{$n_i$}
\EndFor
\end{algorithmic}
\end{algorithm}

Figure~\ref{fig:empiricalsort-execution} illustrates the growing phases and the transition to the shrinking phases.\par 
\begin{figure}[tbp]
\centering
\begingroup
\definecolor{exbucket}{RGB}{30,85,132}
\definecolor{exfresh}{RGB}{166,62,28}
\newcommand{\exrun}[4]{\draw[fill=#4,draw=black!65,line width=.4pt] (#1,#3) rectangle (#2,{#3+.32});}
\newcommand{\exbucket}[4]{\draw[black!65,line width=.6pt] (#1,{#3+.37})--(#1,{#3+.44})--(#2,{#3+.44})--(#2,{#3+.37});\node[font=\small,text=black,above,inner sep=1pt] at ({(#1+#2)/2},{#3+.44}) {#4};}
\newcommand{\exrest}[2]{\draw[fill=blue!4,draw=black!45,line width=.4pt] (#1,#2) rectangle (32,{#2+.32});\draw[black!75,line width=.7pt] (#1,#2)--(#1,{#2+.32});}
\newcommand{\exfresh}[4]{\draw[exfresh,decorate,decoration={brace,mirror,amplitude=3pt}] (#1,{#3-.07})--(#2,{#3-.07}) node[midway,below=4pt,font=\small]{#4};}
\begin{tikzpicture}[x=.465cm,y=1cm]
\foreach \x/\shade/\txt in {3/{black!22}/Filled,12/white/{Allocated and empty},25/{blue!4}/Unallocated}{
\draw[fill=\shade,draw=black!65] (\x,3.4) rectangle ({\x+.8},3.62);
\node[anchor=west,font=\small] at ({\x+1},3.51){\txt};}
\draw[black!65] (0,2.65)--(32,2.65);
\foreach \x/\t in {0/0,12/{m+t_J},24/{m+S_1},28/{m+S_2},32/n}{\draw (\x,2.6)--(\x,2.7);\node[font=\small,below] at (\x,2.6){$\t$};}
\node[font=\small,above] at (16,2.68) {Physical array position};
\node[anchor=east,font=\small] at (-.45,1.66) {Init.};
\exrun{0}{2}{1.5}{black!22}\exrest{2}{1.5}
\node[font=\small,above] at (1,1.88) {$T$};
\foreach \ya/\yb/\txt in {1.38/.53/{allocate and place $t_0$},-.32/-1.02/{reallocate and place $t_1$},-1.87/-2.57/{reallocate},-3.42/-4.12/{place $n_1$},-4.97/-5.67/{merge and reallocate}}{
\draw[->,black!75,line width=.7pt] (30.5,\ya)--(30.5,\yb) node[midway,left=3pt,font=\small,text=black,fill=white,inner sep=1pt]{\txt};}
\foreach \y/\p in {-.2/a,-1.75/b,-3.3/c,-4.85/d,-6.4/e}{\node[anchor=east,font=\small] at (-.45,{\y+.16}) {(\p)};}
\begin{scope}[yshift=-2mm] \exrun{0}{4}{0}{black!22}\exrun{4}{6}{0}{white}\exrest{6}{0}
\exbucket{2}{6}{0}{1}\exfresh{2}{6}{0}{$2T$}
\begin{scope}[yshift=-2mm] \exrun{0}{6}{-1.35}{black!22}\exrun{6}{8}{-1.35}{white}
\exrun{8}{10}{-1.35}{black!22}\exrun{10}{12}{-1.35}{white}\exrest{12}{-1.35}
\exbucket{4}{8}{-1.35}{1}\exbucket{8}{12}{-1.35}{2}\exfresh{6}{12}{-1.35}{$3t_1/2$}
\begin{scope}[yshift=-2mm] \exrun{0}{6}{-2.7}{black!22}\exrun{8}{10}{-2.7}{black!22}
\foreach \x/\b in {6/1,7/2,10/9,11/10,12/3,13/4,14/5,15/6,16/7,17/8,18/11,19/12,20/13,21/14,22/15,23/16}{\exrun{\x}{\x+1}{-2.7}{white}\exbucket{\x}{\x+1}{-2.7}{\b}}
\exrest{24}{-2.7}\exfresh{12}{24}{-2.7}{$S_1-t_J$}
\begin{scope}[yshift=-2mm] \exrun{0}{6}{-4.05}{black!22}\exrun{8}{10}{-4.05}{black!22}
\foreach \x/\b in {6/1,7/2,10/9,11/10,12/3,13/4,14/5,15/6,16/7,17/8,18/11,19/12,20/13,21/14,22/15,23/16}{\exrun{\x}{\x+.75}{-4.05}{black!22}\exrun{\x+.75}{\x+1}{-4.05}{white}\exbucket{\x}{\x+1}{-4.05}{\b}}
\exrest{24}{-4.05}
\begin{scope}[yshift=-2mm] \exrun{0}{6}{-5.4}{black!22}\exrun{8}{10}{-5.4}{black!22}
\foreach \x in {6,7,10,11,12,13,14,15,16,17,18,19,20,21,22,23}{\exrun{\x}{\x+.75}{-5.4}{black!22}\exrun{\x+.75}{\x+1}{-5.4}{white}}
\foreach \x/\p in {6/1,10/5,12/2,14/3,16/4,18/6,20/7,22/8}{
\foreach \d in {0,1}{
\draw[black!65,line width=.5pt] ({\x+\d+.875},-5.06)--({\x+\d+.875},-4.94);
\node[font=\small,text=black,above,inner sep=1pt] at ({\x+\d+.875},-4.94){\p};}}
\foreach \p in {1,...,8}{\pgfmathsetmacro{\xx}{24+(\p-1)/2}\exrun{\xx}{\xx+.5}{-5.4}{white}}
\exbucket{24}{28}{-5.4}{$1,\ldots,8$}
\exrest{28}{-5.4}\exfresh{24}{28}{-5.4}{$S_2-S_1$}
\foreach \l/\r in {6.75/7,7.75/8,24/24.5}{\draw[exbucket,line width=1pt] (\l,-5.4) rectangle (\r,-5.08);}
\foreach \a/\b in {6.875/4,7.875/11,24.25/20}{
\draw[exbucket,line width=.5pt] (\a,-5.42)--(\b,-6.65);}
\draw[draw=exbucket,fill=white,line width=.8pt] (2,-7.1) rectangle (6,-6.65);
\draw[draw=exbucket,fill=white,line width=.8pt] (9,-7.1) rectangle (13,-6.65);
\draw[draw=exbucket,fill=white,line width=.8pt] (16,-7.1) rectangle (24,-6.65);
\node[font=\small] at (4,-6.875) {64};\node[font=\small] at (11,-6.875) {64};\node[font=\small] at (20,-6.875) {128};
\node at (7.5,-6.875) {$+$};\node at (14.5,-6.875) {$+$};\node[font=\small,anchor=west] at (25,-6.875) {$=256$};
\node[font=\small,below,align=center] at (4,-7.12) {left in old\\bucket 1};\node[font=\small,below,align=center] at (11,-7.12) {left in old\\bucket 2};\node[font=\small,below,align=center] at (20,-7.12) {assigned from\\the fresh block};
\end{scope}\end{scope}\end{scope}\end{scope}\end{scope} \end{tikzpicture}
\endgroup
\caption{An example of the growing and shrinking phases with $n=8192$ and
$T=512$ ($J=1$, $L=2$, $k=5$).
Numbers above the arrays identify buckets. Red braces mark fresh blocks.
Initially, the first $T$ arrivals are placed in consecutive cells.
(a) Growing phase 0 allocates fresh cells for one bucket and places $t_0$ arrivals.
(b) Growing phase 1 reassigns the remaining empty cells to two buckets,
allocates and assigns fresh cells, and places $t_1$ arrivals.
(c) The observed values define 16 new bucket intervals. In this example,
the empty cells remaining from old bucket 1 are assigned to new buckets 1 and 2,
and those from old bucket 2 to new buckets 9 and 10.
Fresh cells are assigned to meet the remaining bucket capacities.
(d) After placing the arrivals, each bucket is three quarters full in this example.
(e) The next shrinking phase merges adjacent bucket intervals,
reassigns the remaining empty cells, and allocates and assigns fresh cells.
Each number in (e) labels only the empty piece directly below it.
Repeated numbers indicate pieces of the same bucket, as illustrated by the
enlargement of bucket 1. All existing cells stay in place.}
\label{fig:empiricalsort-execution}
\end{figure}
  
\subsection{Capacity invariants and feasibility}
\label{alg:capacity-proof}
The following bounds will supply the capacity thresholds in the probability
analysis. They also show that the algorithm completes every input order,
regardless of which buckets overflow.

\begin{lemma}[Capacities and feasibility]
\label{alg:feasibility}
The shrinking schedule satisfies $167\le B\le338$. Every bucket in a
nonfinal shrinking phase has capacity in $[B,B+7]$, and the final phase
has $M_k<515$ available cells. The complete algorithm is legal on every
input order, uses exactly $n$ physical cells, and places all $n$ arrivals.
All phase lengths and allocation times are deterministic functions of $n$.
\end{lemma}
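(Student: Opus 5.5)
The plan is to compare the integer schedule with its unrounded version, where every nonfinal bucket has capacity exactly $\lambda$. We then control the rounding error by a contracting recursion. Feasibility will follow from conservation of cells, since each phase places at most its available capacity and every reallocation prescribes exactly as many slots as there are cells.

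\emph{Step 1: the range of $B$.} Since $n\ge16T$, \eqref{alg:size-ratios} gives $N\ge 3n/4\ge6144$. The choice $k=\lfloor\log_2N\rfloor-7$ means $2^{k+7}\le N<2^{k+8}$, that is, $128\le N/2^k<256$. Hence $\lambda=\tfrac43 N/2^k\in[512/3,1024/3)\approx[170.67,341.33)$. Taking $B=\lfloor\lambda\rfloor-3$ then gives $167\le B\le338$.

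\emph{Step 2: rounding error in $M_i$.} Put $e_i=M_i-\frac{4N}{3\cdot2^i}$.
\begin{itemize}
\item For $i=1$, we have $M_1=S_1=\lfloor 2N/3\rfloor$, so $e_1\in(-1,0]$.
\item For $2\le i<k$, each floor in \eqref{alg:fresh-boundaries} loses less than one, so $S_i-S_{i-1}=\frac{2N}{3\cdot2^i}+\eta_i$ with $|\eta_i|<1$.
\item The ceiling term satisfies $\lceil M_{i-1}/4\rceil=\frac{2N}{3\cdot2^i}+e_{i-1}/4+\epsilon_i$ with $\epsilon_i\in[0,1)$.
\end{itemize}
Together with \eqref{alg:capacity-recurrence}, this gives $|e_i|\le |e_{i-1}|/4+2$, so by induction $|e_i|<8/3$ for all $i<k$.

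The nonfinal bucket capacities are $\lfloor M_i/b_i\rfloor$ or $\lceil M_i/b_i\rceil$. Since $M_i/b_i=\lambda+e_i/2^{k-i}$ and $2^{k-i}\ge2$, each capacity lies within $4/3+1$ of $\lambda$. This yields the lower bound $\lambda-7/3>\lfloor\lambda\rfloor-3=B$. It also yields the upper bound $\lambda+7/3<\lfloor\lambda\rfloor+4<B+7$.

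\emph{Step 3: the final phase.} Here $S_k=N$. The fresh block has size $N-S_{k-1}<\frac{2N}{3\cdot2^{k-1}}+1=\lambda+1$. The carried cells number $\lceil M_{k-1}/4\rceil$, and Step 2 bounds $M_{k-1}\le 2\lambda+8/3$, so the carry is below $\lambda/2+2$. Hence $M_k<\tfrac32\lambda+3<515$.

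\emph{Step 4: legality, exhaustion, determinism.} We check the schedule is consistent with the fixed array.
\begin{itemize}
\item Fresh block sizes are nonnegative. For the first shrinking block, $S_1-t_J\ge\lfloor2N/3\rfloor-N/6>0$, because $t_J=m/2\le N/6$ by \eqref{alg:size-ratios}. The sequence $S_i$ is nondecreasing, and $m+S_k=n$, so exactly $n$ cells are used.
\item In every \textsc{Reallocate}, the prescribed capacities sum to (old empty cells) $+$ (fresh cells). This holds by construction of $M_i$, and by \eqref{alg:growing-counts} in the growing regime. So every old cell finds a bucket with remaining assignment capacity, and the fresh block fills the rest exactly.
\item Each \textsc{Place} call places $n_i\le M_i$ arrivals, or $t_j$ arrivals into $2t_j$ cells. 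So some bucket always has a free cell, and the routing rule (ascend to the first ancestor with capacity) always terminates, with the root covering all buckets.
\item After the final phase, all $n$ cells are filled, and each filled cell holds exactly one arrival. Hence all $n$ arrivals are placed and none is left over.
\item Every quantity $t_j,J,m,N,k,S_i,M_i,n_i$ is computed from $n$ alone through the recursions above. So phase lengths and allocation times are deterministic.
\end{itemize}

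The main obstacle I expect is Step 2. It requires tracking the floors and ceilings carefully, so that the accumulated error stays below a constant and the divisor $2^{k-i}\ge2$ keeps the per-bucket error within the slack built into $B$. I would first verify the contraction $|e_i|\le|e_{i-1}|/4+2$ carefully, including the sign conventions at $i=2$.
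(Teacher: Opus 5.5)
Your proposal is correct and follows the paper's proof essentially step for step. The steps are the same rounding-error recursion $e_i=e_{i-1}/4+\eta_i+\epsilon_i$ (you keep the invariant $|e_i|<8/3$ where the paper uses $|e_i|<3$), the same use of $b_i\ge2$ for nonfinal capacities, the same identity $N-S_{k-1}=\lceil\lambda\rceil$ for the final phase, and a conservation-of-cells argument in place of the paper's explicit bookkeeping $M_i=S_i-Q_{i-1}$. One harmless slip: $\lfloor\lambda\rfloor+4$ equals $B+7$ rather than being strictly smaller, but capacities are still strictly below $\lambda+7/3<\lfloor\lambda\rfloor+4=B+7$.
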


\begin{proof}
We first bound rounding errors and bucket capacities, then verify the
allocation and arrival counts.

\paragraph*{Rounding errors.}
For $1\le i<k$, define the ideal capacity and its error by
\[
 \overline M_i=\frac{4N}{3\cdot2^i}=\lambda b_i,
 \qquad e_i=M_i-\overline M_i.
\]
The ideal total halves at each step: one quarter survives from the preceding
phase and another quarter is supplied by fresh cells. We prove $|e_i|<3$.

Initially, $M_1=\lfloor2N/3\rfloor$ gives $-1<e_1\le0$. For $i+1<k$,
the unfloored allocation endpoints differ by
\[
 \left(N-\frac{2N}{3\cdot2^{i+1}}\right)
 -\left(N-\frac{2N}{3\cdot2^i}\right)
 =\frac{N}{3\cdot2^i}=\frac{\overline M_i}{4}.
\]
Hence
\[
 S_{i+1}-S_i=\frac{\overline M_i}{4}+\delta_i,
 \qquad |\delta_i|<1.
\]
Since $M_i$ is an integer, we can also write
\[
 \left\lceil\frac{M_i}{4}\right\rceil
 =\frac{M_i}{4}+\varepsilon_i,
 \qquad 0\le\varepsilon_i\le\frac34.
\]
The next phase combines fresh cells with surviving empty cells:
\[
\begin{aligned}
 M_{i+1}
 &=(S_{i+1}-S_i)+\left\lceil\frac{M_i}{4}\right\rceil\\
 &=\frac{\overline M_i}{4}+\delta_i
   +\frac{M_i}{4}+\varepsilon_i.
\end{aligned}
\]
Subtracting $\overline M_{i+1}=\overline M_i/2$ gives
\[
\begin{aligned}
 e_{i+1}
 &=M_{i+1}-\overline M_{i+1}\\
 &=\frac{M_i-\overline M_i}{4}+\delta_i+\varepsilon_i
 =\frac{e_i}{4}+\delta_i+\varepsilon_i.
\end{aligned}
\]
If $|e_i|<3$, then $-7/4<e_{i+1}<5/2$, completing the induction.

\paragraph*{Nonfinal bucket capacities.}
For $i<k$, we have $b_i\ge2$ and therefore
\[
 \left|\frac{M_i}{b_i}-\lambda\right|
 =\frac{|e_i|}{b_i}<\frac3{b_i}\le\frac32.
\]
Balanced allocation gives each bucket either
$\lfloor M_i/b_i\rfloor$ or $\lceil M_i/b_i\rceil$ cells.
Rounding changes $M_i/b_i$ by less than one, so the preceding
bound places every bucket capacity strictly between
\[
 \lambda-\frac32-1=\lambda-\frac52
 \qquad\text{and}\qquad
 \lambda+\frac32+1=\lambda+\frac52.
\]
Since $B=\lfloor\lambda\rfloor-3$,
\[
 \begin{aligned}
 B&=\lfloor\lambda\rfloor-3\le\lambda-3<\lambda-\frac52,\\
 \lambda+\frac52
 &<\lfloor\lambda\rfloor+\frac72
 =B+\frac{13}{2}<B+7.
\end{aligned}
\]
This proves the capacity envelope. The definition of $k$ gives
\[
 2^{k+7}\le N<2^{k+8},\qquad
 \frac{512}{3}\le\lambda <\frac{1024}{3},
\]
so $170\le\lfloor\lambda\rfloor\le341$ and $167\le B\le338$.

\paragraph*{The final phase.}
Since $S_k=N$, $2N/(3\cdot2^{k-1})=\lambda$, and $N$ is an integer,
the final allocation adds
\[
 S_k-S_{k-1}=N-\lfloor N-\lambda\rfloor=\lceil\lambda\rceil.
\]
The preceding phase has $b_{k-1}=2$ buckets, so
$\overline M_{k-1}=\lambda b_{k-1}=2\lambda$.
Together with $|e_{k-1}|<3$, this gives $M_{k-1}<2\lambda+3$.
Adding these fresh cells to the $\lceil M_{k-1}/4\rceil$ empty cells
left by the preceding phase gives
\[
 \begin{aligned}
 M_k
 &=\lceil\lambda\rceil+\lceil M_{k-1}/4\rceil\\
 &<(\lambda+1)+\frac{2\lambda+3}{4}+\frac34\\
 &=\frac32\lambda+\frac52<512+\frac52<515.
 \end{aligned}
\]

\paragraph*{Allocation and arrival counts.}
The growing counts are given by~\eqref{alg:growing-counts}. The shrinking
endpoints $S_i$ are nondecreasing and end at $N$. The first fresh block
has nonnegative size because
\[
 S_1\ge\frac{2N}{3}-1\ge\frac n2-1>t_J,
\]
using $N\ge3n/4$, $t_J\le n/8$, and $n\ge8192$. Every later fresh block
has size $S_i-S_{i-1}\ge0$.

Let $Q_{i-1}=\sum_{r<i}n_r$, with $Q_0=0$. Immediately after the
phase-$i$ allocation, we claim
\[
 \text{allocated cells}=m+S_i,\qquad
 \text{filled cells}=m+Q_{i-1},\qquad
 M_i=S_i-Q_{i-1}.
\]
For $i=1$, adding $S_1-t_J$ cells to the previous allocation of $m+t_J$
gives $m+S_1$ cells, of which $m$ are filled. After a nonfinal phase $i$,
put $Q_i=Q_{i-1}+n_i$. Its remaining capacity is
\[
 M_i-n_i=S_i-Q_i=\lceil M_i/4\rceil.
\]
Appending $S_{i+1}-S_i$ cells gives $m+S_{i+1}$ allocated cells and
$M_{i+1}=S_{i+1}-Q_i$ empty cells, proving the claim inductively.

There are $N-Q_{i-1}$ arrivals left at the start of phase $i$. For $i<k$,
\[
 n_i\le M_i=S_i-Q_{i-1}\le N-Q_{i-1},
\]
so the prescribed placements have both enough space and enough remaining
inputs. In the final phase, $S_k=N$ gives
$M_k=N-Q_{k-1}$, exactly the number of remaining arrivals. These fill the
last bucket. The allocated prefix has length at most $m+S_i\le n$ and
ends at $m+N=n$.

At each reallocation, target capacities total the number of old and fresh
empty cells. A target slot therefore remains available until all old cells
have been assigned, after which fresh cells fill the residual slots.
Likewise, every prescribed arrival has an available destination. Thus all
operations are legal. The recurrences depend only on $n$, so the same is
true of phase lengths and allocation times. The sequential branch for
$n<16T$ is immediate.
\end{proof}

\begin{remark}[Bounds used in the analysis]
\label{alg:uniform-bounds}
Every nonfinal bucket has capacity at most $1024$: growing buckets have
capacity $2T$, and shrinking buckets have capacity at most $B+7\le345$.
For a nonfinal shrinking phase,
\begin{equation}
 \frac{3B}{4}-1\le\frac{n_i}{b_i}
 \le\frac34(B+7).
 \label{alg:arrival-density}
\end{equation}
Indeed, $Bb_i\le M_i\le(B+7)b_i$ and
$3M_i/4-1\le n_i\le3M_i/4$. Dividing by $b_i$ gives the bounds.
The ratio $n_i/b_i$ is the total phase length divided by the bucket count;
individual buckets may receive different numbers of canonical arrivals.

Writing $b=b_1$, we also have
\begin{equation}
 b\le N/256,\qquad \frac{m}{b}>\frac{256}{7}>32.
 \label{alg:pilot-density}
\end{equation}
These follow from $b=2^{k-1}\le N/256$ and $m/N>1/7$ in
\eqref{alg:size-ratios}. Finally, $L+k=O(\log n)$, and at most one fresh
block is appended in each phase.
\end{remark}

\subsection{Implementation}
\label{alg:implementation}

We count key comparisons and unit-cost operations on $O(\log n)$-bit
integers. A word may also store a cell index or a key pointer. The bound
excludes generating the random tie priorities and representing arbitrary
real inputs; it includes comparisons of the resulting strict keys and all
allocation data structures.

\begin{proposition}
\label{alg:implementation-bound}
Given its tie priorities, \EmpiricalSort\ can be implemented using
$O(n\log n)$ comparisons and word operations and $O(n)$ auxiliary words.
\end{proposition}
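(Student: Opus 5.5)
We bound the cost of each component of \EmpiricalSort{} separately and sum over the $L+k=O(\log n)$ phases from Remark~\ref{alg:uniform-bounds}. The counts $t_j$, $n_i$, $M_i$, $S_i$ and the fresh-block sizes depend only on $n$ (Lemma~\ref{alg:feasibility}). We precompute them in $O(\log n)$ word operations by running the recurrences \eqref{alg:fresh-boundaries} and \eqref{alg:capacity-recurrence} directly. The small-$n$ branch and the first $T$ placements cost $O(n)$.

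\emph{Partitions.} At the start of growing phase $j$, we sort pointers to the $t_j$ observed keys with mergesort, using $O(t_j\log t_j)$ comparisons. Since $t_j$ doubles, the total is $O(n\log n)$; one sort of the $m\le n/4$ prefix keys serves the entire shrinking regime. Reading off the boundaries \eqref{alg:boundaries} takes $O(g_j)$ or $O(b_1)$ words. A canonical leaf is then found by binary search over the sorted boundary pointers in $O(\log n)$ comparisons per arrival, and the same search handles each reassigned empty cell's stored label. The tree is implicit, as a perfect binary tree stored in an array of size $O(\#\text{leaves})$. Each node stores the total available cells (or unused assignment capacity) in its subtree. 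In a shrinking phase the nodes of the truncated tree are a prefix of levels of the fixed tree, so no rebuilding is needed there.

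\emph{Routing.} By the equivalent description of the rule, we ascend from the canonical leaf to the first ancestor with a positive counter. We then descend, always taking the left child when its counter is positive, which yields the least-indexed nonempty bucket in that subtree. After filling a cell we decrement the counters on the root path. Each request therefore costs $O(\text{depth})=O(\log n)$ word operations. Each bucket keeps its free cells as a list in increasing physical order, so the ``first free cell'' is a pop. Reallocation is the same: we scan old empty cells in physical order and route each by label through the assignment-capacity counters. Appending a cell to the tail of its new bucket's list preserves physical order, because cells are processed in increasing physical order and fresh cells come afterward. Fresh cells fill the residual capacities by a left-to-right sweep over buckets, taking $O(\#\text{buckets}+\text{fresh})$ operations.

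\emph{Total count.} Each arrival is routed once. Each empty cell is reassigned at most once per phase, and the number of old empty cells at a transition is at most the capacity of the previous phase. That capacity is at most $2t_j$ in the growing regime and $M_i$ in the shrinking regime; both sums are geometric, so the total is $O(n)$ reassignments. Building the counter arrays at each transition costs $O(\#\text{leaves})$, also geometric. Hence all work is $O(n\log n)$.

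\emph{Space.} We discard the previous phase's structures at each transition, so memory is dominated by one sorted pointer array, the tree, the bucket lists and the labels, all $O(n)$ words.

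The main obstacle is the routing step. We must confirm that the ascend-then-leftmost-descend procedure exactly implements the minimum-tree-distance rule with tie-breaking by bucket index. This holds because every bucket in a sibling subtree at a given height is at the same distance from the canonical leaf. We must also confirm that keeping each bucket's cells in physical order survives reallocation and fresh appending, which the processing order above guarantees.
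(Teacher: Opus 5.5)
Your proposal is correct and follows essentially the same route as the paper: it sorts the prefix pointers in each phase, finds canonical leaves by binary search, keeps per-node availability information (you use counters where the paper uses flags) with ascend-then-leftmost-descend routing, keeps per-bucket queues in physical order, and bounds the total number of reassignments by $O(n)$ through the geometric capacity sums. The one detail you leave implicit is how to enumerate the old empty cells in global physical order across buckets at a transition; the paper keeps a global doubly linked free list alongside the bucket queues, though even rescanning the allocated prefix at each of the $O(\log n)$ transitions would stay within $O(n\log n)$.
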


\begin{proof}
First bound the total available capacity over all phases. Growing phases
contribute $\sum_{j=0}^J2t_j\le4t_J=O(n)$. For shrinking phases,
Lemma~\ref{alg:feasibility} gives
\begin{equation}
 \sum_{i=1}^k M_i
 <\lambda\sum_{i=1}^{k-1}b_i+3(k-1)+515
 \le\frac{4N}{3}+3k+515=O(n).
 \label{alg:total-capacity}
\end{equation}
The old empty cells at any reallocation form a subset of that phase's
available cells. Thus there are $O(n)$ old-cell reassignments in total,
although a particular cell may be reassigned more than once.

\paragraph*{Partition trees.}
Keep pointers to all observed keys. Sorting the prefix pointers in growing
phase $j$ takes $O(t_j\log(t_j+1))$ comparisons, for a total of
$O(n\log n)$ because $\sum_jt_j\le2t_J$. Sort $P_L$ once more and compute
the boundary ranks in~\eqref{alg:boundaries} by integer arithmetic.
A tree with $r$ leaves uses $O(r)$ nodes and boundary pointers.
The growing leaf counts form a geometric progression, and the shrinking
leaf counts sum to less than $2b_1$, so all tree construction takes $O(n)$
additional work. Merged trees reuse the sorted order and point directly to
the appropriate finest boundaries.

\paragraph*{Routing.}
At each node, store whether any descendant leaf has available capacity.
For arrivals this means a bucket with an empty cell; during reallocation it
means positive residual assignment capacity. Binary search among the ordered
boundaries finds a canonical leaf. If that leaf has no capacity, ascend to
the first ancestor with capacity and descend through available children,
choosing the left child whenever possible.

This returns the least-indexed nearest available bucket. The child of the
first available ancestor that contains the canonical leaf has no capacity,
so every available leaf under that ancestor has the same least common
ancestor with the canonical leaf. Leaves outside it are farther away.
A search or availability update takes $O(\log(r+1))$ operations.
Use separate residual counters during reallocation, then initialize the
arrival-availability flags from the completed buckets.

\paragraph*{Physical cell order.}
Maintain a doubly linked list of available empty cells in increasing physical
index. Remove a cell when it is filled, and append fresh cells in index order.
Reallocation scans this list without changing its order. Build each new
bucket as a queue: old cells enter in physical order, followed by fresh cells,
whose indices are larger. The queue front is therefore the first free cell
required by the placement rule.

Each cell record has separate links for the global free list and its bucket
queue. A placement removes the bucket's queue front and its global-list record
in constant time, in addition to updating the tree. All scans and queue
construction take $O(\sum_i M_i+\sum_j2t_j)=O(n)$ time.

There are $n$ placements and $O(n)$ reassignments and fresh-cell assignments,
so routing takes $O(n\log n)$ time in total. Cell records, key pointers,
sorting workspace, and the current trees occupy $O(n)$ words. Old trees can
be discarded after reallocation because surviving labels point to immutable
observed keys. The sequential small-$n$ branch takes $O(n)$ time.
\end{proof}
\section{The upper bound}
\label{sec:upper-bound}\label{sec:upper-cut}

We prove Theorem~\ref{thm:upper} for the algorithm in
Section~\ref{alg:section} by counting crossings. For the completed array
$A$ and a numerical threshold $q$, let $C(q)$ count adjacent pairs of
entries on opposite sides of $q$. An edge of length $|x-y|$ crosses precisely the
thresholds between its endpoints, so
\[
 \cost(A)=\int_{\mathbb R}C(q)\,dq.
\]
This is the one-dimensional coarea identity for a piecewise constant function
\cite{EvansGariepy2015}.
The algorithm does not use $q$.

We bound $C(q)$ using local capacity imbalances, called \emph{witnesses},
and combine their exponential moments under uniform input.
Section~\ref{sec:transfer} transfers the bound to random-order multisets
through comparison invariance and numerical gaps. Section~\ref{tr:iid}
extends it to unknown i.i.d.\ input (Corollary~\ref{cor:unknown-iid}).

\subsection{From crossing charges to local witnesses}
\label{acc:section}

We use Hermansen's temporary-value accounting and placing-node
method~\cite{Hermansen2026}, extending them to reallocations between
empirical trees that need not be nested. The bounds below are deterministic.
Recall $T=512$, with $L$ growing and $k$ shrinking phases.

\subsubsection{Replacement charges and physical blocks}
\label{acc:binary-cut}

The accounting also covers cuts in the strict key order, including
cuts between equal values distinguished by priorities. Only numerical
thresholds enter the integral. Fix a cut $q$ distinct from every key,
and encode its two sides by
\[
 \beta_q(x)=\mathbf1[x>q],\qquad
 \beta_q(-\infty)=0,\qquad \beta_q(+\infty)=1.
\]
Each allocated empty cell carries a temporary label. An arrival replaces
the label by the input key. Reallocation replaces it by a new label
without moving the cell.

An \emph{operation} consists of a phase's arrivals or its old-cell
reassignments. Each request $o$ is one arrival or old empty cell,
respectively. Let $y(o),z(o)$ be the cell's key or label before and after
replacement. Define
\[
 \chi_q(o)=2|\beta_q(z(o))-\beta_q(y(o))|,
 \qquad D_{\rm op}(q)=\sum_{o\in\mathcal O_{\rm op}}\chi_q(o),
\]
where $\mathcal O_{\rm op}$ is the request set. The factor of two accounts
for the cell's two possible neighbors. Initial labels on fresh cells are
counted separately.

\paragraph*{Physical blocks.}
\phantomsection\label{acc:physical-blocks}
The initial allocation and $F$ subsequent nonempty fresh allocations
partition the array into permanent contiguous blocks. Fresh cells are
assigned in bucket order, so each new block has nondecreasing initial labels.

\begin{lemma}[Accounting for crossings]
\label{acc:global-accounting}
For $n\ge16T$, on every execution,
\begin{equation}
 C(q)\le (T-1)+2F+\sum_{\rm op}D_{\rm op}(q),
 \qquad F\le L+k,
 \label{acc:global-bound}
\end{equation}
where the sum runs over all arrival and reallocation operations after
the initial block. The arrival and reallocation charges of the final
one-bucket phase together are less than $2060$.
\end{lemma}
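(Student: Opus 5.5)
Our approach is a potential argument. Fix the cut $q$ and track, at every moment of the execution, the number of crossings in the \emph{labelled} array: cells are read in physical order over the allocated prefix, each filled cell contributes its key, and each empty allocated cell contributes its temporary label. We apply $\beta_q$ to every entry and let $\Phi$ be the number of adjacent positions in this binary word whose bits differ. At the end of the execution every cell is filled, so $\Phi=C(q)$. The strategy is to show that $\Phi$ only increases in three ways, each of which is charged to one of the three terms in~\eqref{acc:global-bound}.

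\textbf{Initial block.} After the first $T$ arrivals are placed consecutively, the prefix has length $T$ and therefore at most $T-1$ adjacencies. This gives $\Phi\le T-1$ at that point.

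\textbf{Replacements.} Each later request, whether an arrival or an old-cell reassignment, changes the content of exactly one cell from $y(o)$ to $z(o)$ and moves nothing. Only the two adjacencies touching that cell can change. If $\beta_q(y)=\beta_q(z)$, nothing changes. Otherwise each of the two adjacencies changes by at most one. So the increase in $\Phi$ is at most $\chi_q(o)$, and summing over requests within operations gives at most $\sum_{\rm op}D_{\rm op}(q)$.

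\textbf{Fresh blocks.} A nonempty fresh allocation appends a contiguous block whose cells receive labels in bucket order. Bucket left endpoints increase with bucket index, so the labels are nondecreasing in key order and their $\beta_q$ values are monotone. The block therefore has at most one internal crossing, plus at most one crossing at its junction with the previous prefix, for a total of $2$ per block.

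There is one subtlety with labels on old cells: reassignment happens before the fresh block is appended, but the order of these two steps does not matter, since we simply sum the charges. Fresh blocks are appended at most once per phase by Remark~\ref{alg:uniform-bounds}, and there are $L$ growing and $k$ shrinking phases. Hence $F\le L+k$. This also relies on the monotonicity of bucket left endpoints, including the label $-\infty$ of the first bucket, which follows from the construction of the boundaries~\eqref{alg:boundaries}.

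\textbf{Final-phase charge.} The final phase has one bucket and $M_k<515$ available cells by Lemma~\ref{alg:feasibility}. Its reallocation touches at most the $\lceil M_{k-1}/4\rceil\le M_k$ old cells, each charged at most $2$. Its arrivals number $M_k$, each also charged at most $2$. The total is therefore at most $4M_k\le 4\cdot514=2056<2060$.

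The main obstacle is a bookkeeping point rather than a deep estimate: checking that the labelled-array crossing count is well defined throughout. Every allocated empty cell must carry a label at all times, and fresh cells receive their labels at the moment they are appended, so the block charge and the replacement charges do not double count. Once this invariant is stated carefully, the lemma is a telescoping sum of the increments of $\Phi$.
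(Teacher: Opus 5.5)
Your proof is correct and is essentially the paper's argument. The paper tracks internal crossings of each permanent physical block via the triangle inequality and bounds the $F$ block-boundary edges by one each at the end, whereas you fold those edges into one global potential, pay one crossing for each at creation, and absorb later changes into the factor two in $\chi_q$; both give $(T-1)+2F+\sum_{\rm op}D_{\rm op}(q)$, and your final-phase count $2(\lceil M_{k-1}/4\rceil+M_k)\le 4\cdot 514<2060$ matches the paper's.
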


\begin{proof}
If a cell changes from $y$ to $z$ and a neighbor in the same physical
block carries $a$, the triangle inequality gives
\[
 |\beta_q(a)-\beta_q(z)|-|\beta_q(a)-\beta_q(y)|
 \le |\beta_q(z)-\beta_q(y)|.
\]
Applying this to both internal neighbors and summing chronologically
bounds each block's final internal crossings by its initial crossings
plus its charges.

The initial block contributes at most $T-1$ crossings. Each fresh block starts
with zeros followed by ones, giving at most one internal crossing.
The $F$ block-boundary edges contribute at most $F$ more, yielding
$2F$. Each phase adds at most one block, so $F\le L+k$.
By Lemma~\ref{alg:feasibility}, the final phase has fewer than $515$
arrivals and fewer than $515$ reassignments, each charged at most two, totaling
less than $2060$.
\end{proof}

\subsubsection{Top-marked subtrees}
\label{sec:local-witnesses}

Fix an operation and its $b$-leaf tree. Capacities count initially
available cells for arrivals, or initially unused assignment slots for
reallocation. Write $\capc(v)$ for the total capacity below node $v$, or
$\capc(I)$ when referring to its interval $I$.

A request's \emph{canonical leaf} $a(o)$ contains its arriving key for
an arrival and its previous label for a reallocation. Let $N_{\rm op}(v)$
count requests whose canonical leaves lie below $v$. The subtree
\emph{locally overflows} when
\[
 N_{\rm op}(v)>\capc(v).
\]
Requests from outside can fill a subtree without causing such an excess.

Mark each request's \emph{placing node}: the least common ancestor of
its canonical and destination leaves, or that leaf when they coincide.
After the operation, a marked node is \emph{top-marked} if it has no
marked proper ancestor. The subtrees rooted at top-marked nodes are
pairwise disjoint. Assign each request to
its placing node's highest marked ancestor, possibly the node itself. Let
$\Gamma_q(v)$ sum the charges assigned to $v$ and $\mathcal M_{\rm op}$
denote the top marks. Then
\[
 D_{\rm op}(q)=\sum_{v\in\mathcal M_{\rm op}}\Gamma_q(v).
\]
\begin{lemma}[Top-marked subtrees]\label{acc:topmarks}
For one operation type in one phase, at most $\capc(v)$ requests are
charged to a top-marked node $v$. If $v$ is not a leaf, one of its children
locally overflows. These conclusions apply to reallocation even when the
old and new partition trees are not nested.
\end{lemma}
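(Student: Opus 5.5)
Both conclusions follow from two facts about the routing rule. First, requests assigned to a top-marked node $v$ have canonical and destination leaves below $v$. Second, a full leaf or subtree never regains capacity during an operation. I would set these up first and then derive the two claims.

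\emph{Closure of top-marked subtrees.} Let $o$ be assigned to the top-marked node $v$. Its placing node $w=\mathrm{lca}(a(o),d(o))$ is a descendant of $v$, so both $a(o)$ and the destination leaf $d(o)$ lie below $v$. Conversely, suppose a request $o'$ has its canonical leaf below $v$ but its destination outside $v$. Then its placing node is a proper ancestor of $v$, and that node is marked, contradicting top-markedness. Similarly, a request with canonical leaf outside $v$ and destination inside $v$ also marks a proper ancestor of $v$. Hence, during this operation, the requests assigned to $v$ are exactly the requests whose destinations lie below $v$. Each request consumes a distinct cell for arrivals, or a distinct assignment slot for reallocation, from the capacity available at the start of the operation. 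So their number is at most $\capc(v)$.

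This argument uses only the tree of the current operation. For reallocation, the canonical leaf is determined by the stored label in the new tree, whether or not the old tree is nested in it. So the non-nested case needs no separate treatment.

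\emph{Overflow witness.} Let $v$ be internal and top-marked. Since $v$ is marked but not a leaf, some request $o$ has placing node exactly $v$. Its canonical and destination leaves therefore lie in different children: say $a(o)$ lies below $u$ and $d(o)$ below the sibling $u'$. Consider the first such request, in processing order, whose placing node is $v$; by symmetry, take it to go from $u$ to $u'$.

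By the routing rule, $o$ chose a nearest bucket with available capacity. Every leaf below $u$ is at tree distance strictly less than $d(o)$ from $a(o)$, because its least common ancestor with $a(o)$ lies at or below $u$. So at that moment $u$ had no remaining capacity. Every cell or slot below $u$ consumed before $o$ was consumed by a request with destination below $u$. Each such request has canonical leaf below $u$. Otherwise its placing node would be $v$, contradicting the choice of $o$ as first, or a proper ancestor of $v$, contradicting top-markedness. These requests number exactly $\capc(u)$ and are all canonical below $u$. Together with $o$, they give $N_{\rm op}(u)\ge\capc(u)+1$, so $u$ locally overflows.

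\emph{Main obstacle.} The delicate point is the minimality choice of the first request with placing node $v$. It is what excludes requests from the sibling $u'$ before $o$, which could otherwise fill $u$ without a canonical excess. The key fact is that capacity is only consumed, never restored, within a single operation, so ``full at time of $o$'' really does mean $\capc(u)$ prior consumptions. Distance-tie-breaking by index plays no role, since it only selects among leaves outside $u$.
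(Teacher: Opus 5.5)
Your proposal is correct and follows essentially the same argument as the paper. Closure of a top-marked subtree gives the capacity bound, and the first request crossing between the children of $v$ finds its canonical child full of requests canonical there, which yields the local overflow. The one point the paper states explicitly and you leave implicit is that in a reallocation the fresh cells are assigned only after all old-cell requests, so every slot consumed before the crossing request was consumed by a request of the same operation.
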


\phantomsection\label{proof:topmarks}
\begin{proof}
A request entering or leaving $v$ would mark a proper ancestor,
contradicting that $v$ is top-marked. Assigned requests use distinct slots
inside $v$, giving the capacity bound.

For internal $v$, let $u$ be the canonical child of the first request
crossing between $v$'s children. The routing rule requires $u$ to be full,
since any nonfull bucket inside $u$ would be closer to the canonical leaf.
Before this crossing, no request has entered $u$ from its sibling, and none can have entered
from outside $v$. Thus its $\capc(u)$ occupants are canonical in $u$.
The crossing request is one more such request, so
$N_{\rm op}(u)>\capc(u)$.

For reallocation, use the new tree's initially unused assignment slots.
Fresh cells are assigned only after all old-cell requests.
\end{proof}

\paragraph*{Only the intervals containing the cut contribute.}
For a node $v$ with interval endpoints $\ell(v),r(v)$, define
\[
 \operatorname{span}_q(v)
 =\beta_q(r(v))-\beta_q(\ell(v))\in\{0,1\}.
\]
This indicates crossing, not numerical length. Exactly one node per
level contains $q$ and has span one. These \emph{selected nodes} form a
chain, including outer cuts through the virtual endpoints.

Both keys or labels of a request assigned to a top-marked node $v$
lie in its placing-node interval, hence in $v$'s interval. Thus
\begin{equation}
 |\beta_q(z(o))-\beta_q(y(o))|\le\operatorname{span}_q(v).
 \label{acc:replacement-span}
\end{equation}
The triangle inequality in Lemma~\ref{acc:global-accounting}
shows that charges bound crossing increases.
To bound the charges themselves, we combine the definition of
$\chi_q$, \eqref{acc:replacement-span}, and
Lemma~\ref{acc:topmarks} to obtain
\begin{equation}
 \Gamma_q(v)\le2\capc(v)\operatorname{span}_q(v).
 \label{acc:component-charge}
\end{equation}

To bound their total $D_{\rm op}(q)$, we now replace the sum over
actual top marks by a sum over overflow witnesses.
Lemma~\ref{acc:topmarks} guarantees that every internal top mark
has a locally overflowing child.

Bucket capacities are $2T=1024$ in growing phases and at most
$B+7\le345$ in nonfinal shrinking phases
(Remark~\ref{alg:uniform-bounds}). Set $\kappa=2048$.
By~\eqref{acc:component-charge}, a selected node, when top-marked,
has charge at most $\kappa$ if it is a leaf and at most
$2(1024\cdot2s)=4096s$ if it is a $2s$-leaf parent.
At each scale $s$, let $v$ be the selected parent with $2s$ leaves
and $u_1,u_2$ its children, each with $s$ leaves. Define
\[
 \mathcal E_s(q)=\{N_{\rm op}(u_1)>\capc(u_1)
                 \text{ or }N_{\rm op}(u_2)>\capc(u_2)\}.
\]
Thus, whenever $v$ is top-marked, $\mathcal E_s(q)$ occurs and
its charge satisfies
$\Gamma_q(v)\le4096s\,\mathbf1[\mathcal E_s(q)]$.
The event may also occur when $v$ is not top-marked, so summing
these witness bounds can only enlarge the total.
Nonselected top marks contribute zero. Summing over all scales
and adding the leaf bound gives
\begin{equation}
 D_{\rm op}(q)\le\kappa+
 4096\sum_{\substack{s\in\{1,2,4,\ldots\}\\2s\le b}}
 s\,\mathbf1[\mathcal E_s(q)].
 \label{acc:charge-majorant}
\end{equation}
Here $b$ is the current number of buckets.
For $n<16T$, sequential placement gives $C(q)\le n-1$.
\subsubsection{Bounding reallocation requests by input counts}
\label{acc:new-reference}\label{sec:reallocation-witnesses}

Reallocation demand counts surviving empty cells. We bound it using
old-label counts after growing phases and preceding canonical-arrival
counts between shrinking phases.

\begin{lemma}[Reallocation requests after a growing phase]\label{acc:old-labels}
Let $P_j$ be the prefix set observed before growing phase $j$ begins.
The phase-$j$ tree has boundaries every $T$ prefix ranks and capacity
$2T$ per bucket. At the following reallocation, let $I$ be a node
interval of any new partition tree and put $Z(I)=|P_j\cap I|$.
The number of old empty-cell requests canonical in $I$ is at most
\begin{equation}\label{acc:old-label-bound}
 2Z(I)+4T.
\end{equation}
\end{lemma}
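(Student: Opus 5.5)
The plan is to bound the surviving requests by \emph{all} labels assigned at the start of phase $j$, counting labels on cells that were later filled as well. That way we never need to know which cells survived. Every old empty cell present at the following reallocation was available at the start of phase $j$ and still carries the label it received then. Arrivals remove cells but never relabel them. Hence the number of old requests canonical in $I$ is at most the number of phase-$j$ cells whose phase-$j$ label lies in $I$. Canonicity in the new tree is determined by the stored label alone, through the interval of the new partition containing it, so ``canonical in $I$'' for a node interval $I$ means exactly ``label lies in $I$''.

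Next I describe the label multiset. The phase-$j$ partition comes from $P_j$ with $m=t_j$ and $b=g_j=t_j/T$. By \eqref{alg:boundaries} its internal boundaries are $y_{\lfloor mj'/b\rfloor}=y_{Tj'}$ for $1\le j'<g_j$, the order statistics of $P_j$ at ranks $T,2T,\ldots$. Each bucket's label is its left endpoint, and each bucket holds exactly $2T$ cells. So the label multiset consists of $-\infty$ with multiplicity $2T$ and each $y_{Tj'}$ with multiplicity $2T$. It follows that the count in $I$ is at most $2T\cdot\mathbf 1[-\infty\in I]+2T\cdot\#\{j':y_{Tj'}\in I\}$.

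The remaining step is a counting bound on boundary points, and it is the only place needing care. $I$ is a left-closed right-open interval of keys (possibly including the virtual $-\infty$, or $+\infty$ at the right end). The ranks $Tj'$ with $y_{Tj'}\in I$ form a set of multiples of $T$ lying within a run of consecutive ranks of $P_j$. That run consists of the elements of $P_j\cap I$, which has size $Z(I)$. A run of $Z$ consecutive integers contains at most $Z/T+1$ multiples of $T$. Multiplying by $2T$ gives at most $2Z(I)+2T$ from finite labels. Adding $2T$ for the possible $-\infty$ label gives $2Z(I)+4T$, which is \eqref{acc:old-label-bound}. The main thing to check is that the estimate needs no nesting between the old and new trees. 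It uses only the facts that $I$ is an interval in the strict key order and that labels are elements of $P_j$ or $-\infty$. Note also that $+\infty$ never occurs as a left endpoint, so no label equals $+\infty$.
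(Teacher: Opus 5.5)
Your proposal is correct and follows essentially the same route as the paper. Both bound the surviving requests by all $2T$ cells carrying each phase-$j$ label, whether filled or empty. Both observe that the finite labels sit at ranks $T,2T,\ldots$ of $P_j$, so at most $Z(I)/T+1$ of them lie in the consecutive segment $P_j\cap I$, and then add $2T$ for the possible label $-\infty$. Your extra checks are consistent with the paper's argument: that arrivals never relabel cells, that the boundaries are exactly at ranks $Tj'$, and that no label equals $+\infty$.
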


\phantomsection\label{proof:old-labels}
\begin{proof}
In the algorithm, the next tree is built from $P_{j+1}$, which includes
all arrivals through phase $j$. Thus $P_{j+1}$ determines the new
interval $I$, while the surviving empty cells still carry labels
assigned using $P_j$. A surviving cell's request is canonical in $I$
exactly when its old label lies in $I$.

To count these labels, consider $P_j$ in sorted order. The finite old
labels occur at ranks $T,2T,\ldots$. Since $I$ is an interval, the
$Z(I)$ keys of $P_j$ lying in $I$ form a consecutive segment of this
order. Labels are $T$ ranks apart, so this segment contains at most
$Z(I)/T+1$ finite labels. Including the possible label $-\infty$
gives at most $Z(I)/T+2$ old labels.

Each old label was assigned to $2T$ cells. Counting all these cells,
whether now filled or empty, bounds the surviving requests by
\[
 2T\left(\frac{Z(I)}T+2\right)=2Z(I)+4T.
\]
\end{proof}

This bound holds for any new partition tree, so it applies both
between growing phases and at the first shrinking allocation.
If old empty-cell requests locally overflow $I$ during reallocation,
then $2Z(I)+4T>\capc(I)$.
For the algorithm's new tree, Section~\ref{prob:growing} bounds
this event conditional on $P_{j+1}$, which fixes $I$.
\paragraph*{Adjacent-pair merging.}
\phantomsection\label{acc:frozen-reuse}
In subsequent shrinking reallocations, each old label's canonical
destination is its old bucket's parent. Excess empty-cell demand
implies too few canonical arrivals in the preceding phase.

\begin{lemma}[Reallocation requests between shrinking phases]
\label{acc:underflow}
Consider the reallocation at the start of shrinking phase
$2\le i\le k$. Let $I$ be a current node interval containing $s$
buckets, formed by merging $2s$ buckets from phase $i-1$.
Let $N_{i-1}(I)$ count arrivals canonical in $I$ during phase $i-1$,
and let $N_i^{\mathrm{re}}(I)$ count the surviving empty cells
whose reallocation requests are canonical in $I$. Then
\begin{equation}\label{acc:empty-bound}
 N_i^{\mathrm{re}}(I)
 \le\max\{2(B+7)s-N_{i-1}(I),0\}.
\end{equation}
If $i<k$ and these reallocation requests locally overflow $I$, then
\begin{equation}\label{acc:underflow-witness}
 N_{i-1}(I)<(B+14)s\le\frac{11}{10}Bs.
\end{equation}
\end{lemma}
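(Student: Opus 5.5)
We argue deterministically, following the sketch in the technical overview. Fix phase $i$ with $2\le i\le k$ and a current node interval $I$ with $s$ buckets. Under adjacent-pair merging, $I$ is the union of the intervals of $2s$ phase-$(i-1)$ buckets, and these form a subtree $U$ of the phase-$(i-1)$ tree. Every surviving empty cell carries the left endpoint of its old bucket as its label. Its reallocation request is therefore canonical in $I$ exactly when its old bucket lies in $U$. Hence $N_i^{\mathrm{re}}(I)$ is the number of empty cells remaining in $U$ at the end of phase $i-1$. By Lemma~\ref{alg:feasibility}, phase $i-1$ is nonfinal and each of its buckets has capacity at most $B+7$, so $\capc_{\rm old}(U)\le2(B+7)s$.

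For \eqref{acc:empty-bound} we split into two cases according to the routing of phase-$(i-1)$ arrivals canonical in $I$. First suppose some such arrival is placed outside $U$. The routing rule ascends from the canonical leaf to the first ancestor with available capacity. Leaving $U$ therefore means the root of $U$ had no available cell at that moment. Cells are never freed within a phase, so $U$ stays full and $N_i^{\mathrm{re}}(I)=0$. Otherwise all $N_{i-1}(I)$ canonical arrivals occupy distinct cells of $U$. Then the number of empty cells left is at most $\capc_{\rm old}(U)-N_{i-1}(I)\le2(B+7)s-N_{i-1}(I)$. In either case the bound follows, and taking the maximum with $0$ covers the first case.

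For \eqref{acc:underflow-witness}, suppose $i<k$ and the reallocation requests locally overflow $I$, that is, $N_i^{\mathrm{re}}(I)>\capc_{\rm new}(I)$. Phase $i$ is nonfinal, so Lemma~\ref{alg:feasibility} gives each new bucket capacity at least $B$ and hence $\capc_{\rm new}(I)\ge Bs$. Overflow forces $N_i^{\mathrm{re}}(I)>0$, so the right side of \eqref{acc:empty-bound} is $2(B+7)s-N_{i-1}(I)$. This gives $2(B+7)s-N_{i-1}(I)>Bs$, that is, $N_{i-1}(I)<(B+14)s$. Finally, $B\ge167$ by Lemma~\ref{alg:feasibility}, so $14\le B/10$ and $(B+14)s\le\frac{11}{10}Bs$.

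The main point needing care is the first step: identifying the requests canonical in $I$ with exactly the empty cells of the old buckets in $U$. This uses the labels being old bucket left endpoints together with the nesting under merging, as noted after \textsc{Reallocate}. We must also check that assignment capacity, not physical cells, is what $\capc_{\rm new}$ measures here. With fresh cells assigned after old ones, this is immediate.
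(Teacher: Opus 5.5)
Your proposal is correct and follows the paper's proof essentially step for step. Both bound the old capacity of the $2s$ merged buckets by $2(B+7)s$, split on whether some canonical arrival escapes the subtree (forcing it to be full) or all canonical arrivals occupy distinct cells inside it, identify the surviving requests through the parent-of-old-bucket labels, and then use new capacity at least $Bs$ together with $B\ge 167$ to obtain $N_{i-1}(I)<(B+14)s\le\frac{11}{10}Bs$.
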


\phantomsection\label{proof:underflow}
\begin{proof}
At the start of phase $i-1$, the interval $I$ contains $2s$ buckets,
each with capacity at most $B+7$. Thus at most $2(B+7)s$ cells
are available in $I$.

If all $N_{i-1}(I)$ canonical arrivals are placed inside $I$,
they fill that many distinct cells. Arrivals from outside $I$
can only reduce the number of remaining empty cells.
If a canonical arrival is placed outside $I$, the routing rule
requires $I$ to be already full, so no empty cells remain.
Consequently, at most $\max\{2(B+7)s-N_{i-1}(I),0\}$ empty cells remain.

At the start of phase $i$, the $2s$ old buckets merge into $s$
new buckets. The surviving empty cells request their old buckets'
parents, so exactly those remaining in $I$ are counted by
$N_i^{\mathrm{re}}(I)$. This proves~\eqref{acc:empty-bound}.

For $i<k$, each new bucket has capacity at least $B$, giving
$I$ total capacity at least $Bs$. Reallocation overflow therefore
requires $N_i^{\mathrm{re}}(I)>Bs$. The positive remainder in
\eqref{acc:empty-bound} then gives
\[
 Bs<N_i^{\mathrm{re}}(I)
 \le2(B+7)s-N_{i-1}(I).
\]
Rearranging gives $N_{i-1}(I)<(B+14)s$.
Finally, $B\ge167\ge140$ implies $14\le B/10$, proving
\eqref{acc:underflow-witness}.
\end{proof}

The same arrival counts therefore bound both placement charges and
the next nonfinal reallocation charges. The first shrinking reallocation
uses Lemma~\ref{acc:old-labels}, and the final phase uses
Lemma~\ref{acc:global-accounting}.

\subsection{Probabilistic tools}
\label{prob:tools}\label{sec:moment-overview}

Through Section~\ref{prob:assembly}, we analyze independent uniform
inputs in $[0,1]$ and fix a deterministic cut $q\in(0,1)$.
Almost surely, the input keys are distinct and none equals $q$.
All bounds are uniform in $q$ and $n$.

For probability calculations, $|I|$ denotes the length of
$I\cap[0,1]$. Clipping virtual endpoints to $0$ and $1$ preserves
input memberships and accounting spans.
We use binomial counts for fixed intervals under independent sampling,
and hypergeometric counts when, conditional on a larger prefix set,
an earlier prefix is a uniform subset of that set.
\begin{samepage}
\begin{lemma}[Binomial and hypergeometric exponential moments]
\label{prob:chernoff}
\emph{Binomial sampling.}
Let $Z\sim\operatorname{Bin}(\ell,p)$, where $\ell$ is a nonnegative
integer and $p\in[0,1]$, and write $\mu=\E[Z]=\ell p$.
For every $u\in\mathbb R$,
\begin{equation}
 \E[e^{uZ}]\le (1-p+pe^u)^\ell
 \le \exp\bigl(\mu(e^u-1)\bigr).
 \label{prob:sampling-mgf}
\end{equation}
If $\mu>0$, then
\begin{align}
 \Prb(Z\ge a)&\le
 \exp\left(-\frac{(a-\mu)^2}{2a}\right)
 &&(a\ge\mu), \label{prob:upper-tail}\\
 \Prb(Z\le a)&\le
 \exp\left(-\frac{(\mu-a)^2}{2\mu}\right)
 &&(0\le a\le\mu). \label{prob:lower-tail}
\end{align}
When $\mu=0$, $Z=0$ almost surely.

\emph{Sampling without replacement.}
The same bounds hold for
$Z\sim\operatorname{Hypergeom}(P,K,\ell)$:
among $P$ elements, $K$ are marked, and $Z$ counts the marked
elements in a uniformly chosen $\ell$-subset.
Here $P\ge1$ and $0\le K,\ell\le P$ are integers,
$p=K/P$, and $\mu=\E[Z]=\ell K/P$.
\end{lemma}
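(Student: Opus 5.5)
We prove the binomial statement first and then reduce the hypergeometric case to it. For $Z\sim\operatorname{Bin}(\ell,p)$, independence of the Bernoulli summands gives $\E[e^{uZ}]=(1-p+pe^u)^\ell$ exactly. The second inequality in~\eqref{prob:sampling-mgf} follows from $1+x\le e^x$ applied with $x=p(e^u-1)$, raised to the power $\ell$.

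The tails come from exponential Markov. For the upper tail with $a\ge\mu>0$, take $u=\log(a/\mu)\ge0$. This gives
\[
 \Prb(Z\ge a)\le\exp\bigl(-a\log(a/\mu)+a-\mu\bigr).
\]
Writing $a=\mu(1+\delta)$, the exponent is $-\mu h(\delta)$ with $h(\delta)=(1+\delta)\log(1+\delta)-\delta$. It therefore suffices to check the elementary inequality $h(\delta)\ge\delta^2/(2(1+\delta))$ for $\delta\ge0$, since then $\mu h(\delta)\ge(a-\mu)^2/(2a)$. Both sides vanish at $\delta=0$, so we compare derivatives: the left has derivative $\log(1+\delta)$, the right has derivative $\delta(2+\delta)/(2(1+\delta)^2)$. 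We then verify $\log(1+\delta)\ge\delta(2+\delta)/(2(1+\delta)^2)$ again by differentiating once more, or via the standard bound $\log(1+\delta)\ge 2\delta/(2+\delta)$.

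For the lower tail with $0\le a\le\mu$, take $u=\log(a/\mu)\le0$; the case $a=0$ is handled by the limit $u\to-\infty$, giving $\Prb(Z=0)\le e^{-\mu}$. The exponent becomes $-\mu g(\delta)$, where $a=\mu(1-\delta)$ with $\delta\in[0,1]$ and $g(\delta)=(1-\delta)\log(1-\delta)+\delta$. The required inequality $g(\delta)\ge\delta^2/2$ follows from $g(0)=0$ and $g'(\delta)=-\log(1-\delta)\ge\delta$. If $\mu=0$, then $p=0$ or $\ell=0$, and $Z=0$ almost surely.

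For the hypergeometric case, it suffices to prove the moment bound $\E[e^{uZ}]\le(1-p+pe^u)^\ell$ with $p=K/P$, because both tail bounds were derived from this mgf bound alone. We use Hoeffding's classical comparison: sampling without replacement is dominated in convex order by sampling with replacement, so $\E f(Z)\le\E f(Z')$ for every convex continuous $f$, where $Z'\sim\operatorname{Bin}(\ell,K/P)$. Applying this with $f(x)=e^{ux}$ gives the mgf bound.

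If a self-contained argument is preferred over citing Hoeffding, one can instead show that $e^{u\cdot}$ applied to the count in an ordered draw without replacement has expectation at most that of the i.i.d. draw. This is done by a coupling that replaces draws one at a time, together with Jensen's inequality. The only nontrivial step in the whole proof is this convex-order comparison, and citing Hoeffding resolves it.
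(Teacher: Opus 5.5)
Your proposal is correct and follows the same route as the paper. It uses the standard exponential-Markov computations for the binomial, where your inequalities $h(\delta)\ge\delta^2/(2(1+\delta))$ and $g(\delta)\ge\delta^2/2$ fill in details the paper leaves implicit, and Hoeffding's convex-order comparison with $f(x)=e^{ux}$ carries the moment bound, and hence both tails, over to the hypergeometric case (your optional coupling sketch is vague, but the citation suffices, as in the paper).
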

\end{samepage}

These are standard exponential-moment and Chernoff bounds.
Hoeffding's convex comparison for sampling without replacement
\cite{Hoeffding1963} extends the binomial moment bound to the
hypergeometric case, yielding the same tail bounds.

With weights $h\in\{1,2,4,\ldots\}$, the largest active scale bounds the sum even when
witnesses are dependent.
\begin{samepage}
\begin{lemma}[Witnesses at power-of-two scales]\label{prob:dyadic-mgf}
Let $(\mathcal E_h)_{h\in\{1,2,4,\ldots\}}$ be events with
arbitrary dependence. Suppose that, possibly conditional on a
sigma-field $\mathcal H$,
\[
 \Prb(\mathcal E_h\mid\mathcal H)\le A_*e^{-a_*h}
 \qquad(h\ge h_*),
\]
where $A_*,a_*,h_*>0$ are deterministic constants. For every fixed
$c_*>0$ and $0<\eta<a_*/(2c_*)$,
\begin{equation}\label{prob:dyadic-bound}
 \E\left[\exp\left(\eta c_*\sum_h h\,\mathbf1[\mathcal E_h]\right)
          \middle|\mathcal H\right]\le K_*,
\end{equation}
where $K_*<\infty$ depends only on the displayed constants.
\end{lemma}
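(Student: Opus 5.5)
Let $S=\sum_h h\,\mathbf1[\mathcal E_h]$, with $h$ over $\{1,2,4,\ldots\}$. The plan is to reduce $S$ to the largest active scale, as sketched in the technical overview. Define $h_{\max}$ as the largest $h$ for which $\mathcal E_h$ occurs, and $h_{\max}=0$ if none occurs. Since the weights are powers of two,
\[
 S\le\sum_{h\le h_{\max}}h\le 2h_{\max}.
\]
If infinitely many $\mathcal E_h$ occur, then $S=\infty$. This happens only on a null set conditionally on $\mathcal H$: by the summable bound $\sum_{h\ge h_*}A_*e^{-a_*h}<\infty$ and Borel--Cantelli, almost surely only finitely many events occur. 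Hence $e^{\eta c_*S}\le e^{2\eta c_* h_{\max}}$ almost surely.

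Next I will split according to whether $h_{\max}$ is below or above a cutoff. Let $h_0$ be the smallest power of two with $h_0\ge h_*$. On $\{h_{\max}<h_0\}$ the exponential is at most $e^{2\eta c_*h_0}$, which is a constant depending only on $h_*,\eta,c_*$. For each power of two $h\ge h_0$, the event $\{h_{\max}=h\}$ is contained in $\mathcal E_h$. Taking conditional expectations therefore gives
\[
 \E\bigl[e^{\eta c_*S}\bigm|\mathcal H\bigr]
 \le e^{2\eta c_*h_0}+\sum_{h\ge h_0}e^{2\eta c_*h}A_*e^{-a_*h}.
\]
This step uses only the marginal conditional bounds, so no independence between scales is needed.

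Finally, the condition $\eta<a_*/(2c_*)$ gives $\delta:=a_*-2\eta c_*>0$. The sum over powers of two is then at most $A_*\sum_{r\ge0}e^{-\delta 2^r h_0}\le A_*\sum_{m\ge1}e^{-\delta m}=A_*/(e^\delta-1)$. So $K_*=e^{2\eta c_*h_0}+A_*/(e^\delta-1)$ works, and it depends only on $A_*,a_*,h_*,c_*,\eta$.

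I do not expect any real obstacle. The only care points are the measure-zero event where $S=\infty$ and the requirement that the bound hold uniformly in $\mathcal H$; both are handled because the conditional bounds are deterministic.
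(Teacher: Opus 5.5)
Your proof is correct and follows essentially the same maximum-scale argument as the paper: you bound the weighted sum by $2h_{\max}$, then use $\{h_{\max}=h\}\subseteq\mathcal E_h$ together with only the marginal conditional tails. The differences are cosmetic: the paper splits off the scales below $h_*$ as an additive constant $S_*$ and handles infinitely many scales by truncation and conditional monotone convergence, whereas you put the cutoff $h_0$ into the case split and dispose of the infinite case with Borel--Cantelli.
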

\end{samepage}

The tail hypothesis is verified for the algorithm's witnesses in
Sections~\ref{prob:growing} and~\ref{prob:shrinking}.

\phantomsection\label{proof:dyadic-mgf}
\begin{proof}
First retain only finitely many scales. Put
$S_*:=\sum_{h<h_*,\,h\in\{1,2,4,\ldots\}}h$, and let $h_{\max}$ be the largest
scale $h\ge h_*$ for which $\mathcal E_h$ occurs, or zero if there
is none. Split the sum at $h_*$ and bound each small-scale
indicator by one:
\[
 \begin{aligned}
 \sum_h h\,\mathbf1[\mathcal E_h]
 &=\sum_{h<h_*}h\,\mathbf1[\mathcal E_h]
   +\sum_{h\ge h_*}h\,\mathbf1[\mathcal E_h]\\
 &\le S_*+\sum_{h\ge h_*}h\,\mathbf1[\mathcal E_h].
 \end{aligned}
\]
If $h_{\max}>0$, all terms with $h>h_{\max}$ vanish. Including every power-of-two scale up
to $h_{\max}$, even those whose indicators are zero, gives
\[
 \sum_{h\ge h_*}h\,\mathbf1[\mathcal E_h]
 \le\sum_{\substack{h\le h_{\max}\\h\in\{1,2,4,\ldots\}}}h
 \le h_{\max}+\frac{h_{\max}}{2}+\frac{h_{\max}}{4}+\cdots=2h_{\max}.
\]
If $h_{\max}=0$, the large-scale contribution is zero. Thus, in either case,
\[
 \sum_h h\,\mathbf1[\mathcal E_h]\le S_*+2h_{\max}.
\]
Moreover, $h_{\max}=h>0$ implies that $\mathcal E_h$ occurs, giving
$\Prb(h_{\max}=h\mid\mathcal H)\le A_*e^{-a_*h}$.
Using the preceding bound and expanding the conditional expectation
over the possible values of $h_{\max}$, we obtain
\[
\begin{aligned}
 &\E\!\left[
   e^{\eta c_*\sum_h h\,\mathbf1[\mathcal E_h]}
   \mid\mathcal H
 \right]\\
 &\quad\le
 e^{\eta c_*S_*}\E[e^{2\eta c_*h_{\max}}\mid\mathcal H]\\
 &\quad=
 e^{\eta c_*S_*}
 \left(
   \Prb(h_{\max}=0\mid\mathcal H)
   +\sum_{\substack{h\ge h_*\\h\in\{1,2,4,\ldots\}}}
      e^{2\eta c_*h}\Prb(h_{\max}=h\mid\mathcal H)
 \right)\\
 &\quad\le
 e^{\eta c_*S_*}
 \left(
   1+A_*\sum_{\substack{h\ge h_*\\h\in\{1,2,4,\ldots\}}}
      e^{-(a_*-2\eta c_*)h}
 \right).
\end{aligned}
\]
The series converges because $a_*-2\eta c_*>0$.
The resulting bound does not depend on the truncation.
Conditional monotone convergence therefore proves the claim for all scales.
\end{proof}
\subsubsection{Lengths of selected empirical intervals}

Empirical bucket boundaries control observation counts, but interval
lengths remain random. We bound the probability that either child of
the parent selected by $q$ is too long or too short.
The proof first confines the parent near $q$, then detects abnormal
child lengths using fixed grid intervals.

\begin{samepage}
\begin{lemma}[Child interval lengths from an i.i.d. prefix]
\label{prob:iid-selected}
Let an observed prefix of $m$ independent uniform samples in $[0,1]$
define an empirical partition tree with $b\le m$ leaves, where $b$
is a power of two.
Fix a deterministic cut $q\in(0,1)$ and a level with $2s$ leaves
per parent. Let $J_q$ be the parent containing $q$.
Each of its two children contains $s$ consecutive buckets, with
nominal length $p=s/b$ and nominal sample count $r=mp$.
If $r\ge1024$, then
\begin{equation}
 \Prb\left(\exists\text{ child }I\text{ of }J_q:
       |I|\notin[15p/16,17p/16]\right)
 \le A_p e^{-a_pr},
 \label{prob:iid-selected-tail}
\end{equation}
where one may take
\[
 A_p=2+2(8195)^2,\qquad a_p=\frac1{14336}.
\]
The constants are uniform in $q,m,b,s$, including cuts outside the
observed range of the prefix.
\end{lemma}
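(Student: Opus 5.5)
The proof has two steps. First we confine the selected parent $J_q$ to a window of length $O(p)$ around $q$. Then we cover that window by a fixed grid whose spacing is a small constant multiple of $p$, and show that an abnormal child length forces an atypical binomial count in some grid interval. Recall the structure of the tree. Internal boundaries are order statistics $y_{\lfloor mj/b\rfloor}$. Each child of $J_q$ is an interval between boundaries whose rank indices differ by exactly $s$, so it contains $\lfloor m(j+s)/b\rfloor-\lfloor mj/b\rfloor=r\pm1$ prefix points, counting the left endpoint and excluding the right. An outer child ending at a virtual endpoint contains $r\pm1$ points as well, after clipping to $[0,1]$. Hence the two children together contain about $2r$ points and span about $2p$ in nominal length.

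\emph{Localization.} Let $W=[q-3p,q+3p]\cap[0,1]$. Suppose $J_q$ is not contained in $W$. Then one of its endpoints lies at distance more than $3p$ from $q$, on one side. Since $J_q\ni q$ and contains at most $2r+2$ sample points, the fixed interval between $q$ and $q\pm3p$, restricted to $[0,1]$, contains at most $2r+2$ points. This is only a genuine constraint when that fixed interval has length $3p$ inside $[0,1]$. If it is clipped, the virtual endpoint is already within $W$ on that side, and $J_q$ cannot exceed $W$ there. A $\operatorname{Bin}(m,3p)$ variable has mean $3r$, so by the lower tail \eqref{prob:lower-tail} the probability of at most $2r+2$ points is at most $\exp(-(r-2)^2/(6r))\le e^{-r/7}$ for $r\ge1024$. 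This gives the additive term $2e^{-a_pr}$, one term per side.

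\emph{Grid detection.} Put grid points at spacing $\epsilon p$ with $\epsilon=1/64$, anchored at $q$. The window $W$ then meets at most $6/\epsilon+3\approx 387$ grid cells. The count of grid intervals we need is what the constant $8195^2$ suggests: pairs of grid points in a window, with a finer spacing $\epsilon\approx 1/1366$, give about $8195$ points per window, hence $8195^2$ pairs.

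Now suppose a child $I\subseteq W$ has $|I|>17p/16$. Then $I$ contains a grid-aligned interval $G$, one of the $O(\epsilon^{-2})$ intervals between two grid points of $W$, with $|G|\ge|I|-2\epsilon p\ge(17/16-2\epsilon)p$. Yet $G$ holds at most $r+1$ sample points, while its mean is $m|G|\ge(1+1/16-2\epsilon)r$. Symmetrically, suppose $|I|<15p/16$ with $I$ not clipped. Then some grid interval $G\supseteq I$ has $|G|\le(15/16+2\epsilon)p$ and at least $r-1$ points. When $I$ is clipped at $0$ or $1$, the same argument applies using the clipped length, since both memberships and lengths refer to $I\cap[0,1]$.

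Each such event is a fixed-interval binomial deviation by a relative amount of about $1/32$. By \eqref{prob:upper-tail} and \eqref{prob:lower-tail}, each has probability $\exp(-\Omega(r/2^{10}))$; with explicit bookkeeping the exponent comes out near $r/14336$. A union bound over the $(\#\text{grid points})^2$ fixed intervals gives the factor $2(8195)^2$.

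\emph{Main obstacle.} The hard part is the bookkeeping that makes the constants uniform. Three issues need care. The first is the $\pm1$ rounding in counts, which is absorbed using $r\ge1024$. The second is clipping at $0$, $1$ and at virtual endpoints, including cuts $q$ outside the sample range; there $J_q$ is an outer interval, and only one-sided localization is needed. The third is ensuring that the grid depends only on $q,p$ and not on the sample, so that the binomial tails apply to fixed intervals. I would fix $\epsilon$ first, then check that the relative deviation $1/16-2\epsilon-2/r$ yields the stated exponent.
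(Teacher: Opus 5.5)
Your proposal is correct and follows the paper's own argument. Both proofs localize $J_q$ to a fixed window around $q$ using a binomial lower tail. Both then detect an abnormal child length by rounding inward for long children and outward for short ones to a fixed grid, followed by a union bound over grid intervals.

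The differences are only in the parameters:
\begin{itemize}
\item The paper takes the window $[q-16p,q+16p]$ and grid spacing $p/256$; this is where $8195=32\cdot256+3$ comes from.
\item The paper proves the tighter tolerance $[31p/32,33p/32]$ and then deduces $[15p/16,17p/16]$ from it.
\item Your choices (window of half-width $3p$, spacing $p/64$, testing the $1/16$ tolerance directly) give roughly $387$ grid points and an exponent near $r/2400$. The stated $A_p,a_p$ therefore still hold, and your reverse-engineered finer spacing is unnecessary.
\item Outer children can contain $r\pm2$ observations, not $r\pm1$. This is harmless because $r\ge1024$.
\end{itemize}
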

\end{samepage}

\begin{proof}
Let $Z(I)$ count observations from this same prefix lying in $I$.
Throughout the proof, intersect node intervals with $[0,1]$.
Because the buckets are defined by rank, a child's $s$ consecutive
leaves have boundary-rank difference $ms/b$, up to rounding and
endpoint conventions. Thus
\begin{equation}
 r-2\le Z(I)\le r+2,\qquad Z(J_q)\le2r+2.
 \label{prob:rank-counts}
\end{equation}
For every prefix, each child's count is within two of $r$,
although its length remains random. Fixed intervals will detect
when this count is unusually small or large for that length.

First, restrict the parent to a fixed neighborhood of $q$.
Set $\varepsilon=1/32$ and $W=[q-16p,q+16p]\cap[0,1]$.
If $J_q\nsubseteq W$, it contains one of the fixed intervals
$H_-=[q-16p,q]$ and $H_+=[q,q+16p]$.
Only directions with $H_\pm\subseteq[0,1]$ can cause escape.
For either valid interval $H$,
\[
 Z(H)\sim\operatorname{Bin}(m,16p),\qquad
 \E[Z(H)]=16mp=16r.
\]
Yet containment in $J_q$ forces $Z(H)\le2r+2$.
Thus escape requires far fewer prefix observations in this fixed
interval than its mean predicts. A union bound over the two directions
and~\eqref{prob:lower-tail} give
\begin{equation}
 \begin{aligned}
 \Prb(J_q\nsubseteq W)
 &\le\sum_{\substack{H\in\{H_-,H_+\}\\H\subseteq[0,1]}}
             \Prb(Z(H)\le2r+2)\\
 &\le2\exp\left(-\frac{(16r-(2r+2))^2}{2\cdot16r}\right)
 \le2e^{-3r}.
 \end{aligned}
 \label{prob:iid-localization}
\end{equation}

Next, place a fixed grid on $W$ with spacing
$\delta=\varepsilon p/8=p/256$, including both endpoints.
Since $|W|\le32p$, at most $32p/\delta+3=8195$ grid points suffice.
The grid depends only on $q$ and $p$, not on the observed prefix.
Consequently, for every fixed interval $H$ with grid endpoints,
\[
 Z(H)\sim\operatorname{Bin}(m,|H|),\qquad
 \E[Z(H)]=m|H|.
\]
We will round each abnormal child to a grid interval.
Although the interval chosen by rounding depends on the prefix,
all candidates belong to this fixed family. We bound each candidate
and then take a union bound, without conditioning on $J_q\subseteq W$.

Suppose $J_q\subseteq W$ and a child $I$ has length greater than
$(1+\varepsilon)p$. Round its endpoints inward to grid points lying in $I$; if an excluded endpoint is a grid point, use the next grid point inward. This gives a closed grid
interval $H\subseteq I$, removing at most $2\delta=\varepsilon p/4$ of length.
Both counts $Z(H)$ and $Z(I)$ use the same prefix. Thus containment
and the rank bound give
\[
 |H|\ge |I|-2\delta>(1+3\varepsilon/4)p,\qquad
 Z(H)\le Z(I)\le r+2.
\]
Thus a child that is too long produces a long grid interval containing
too few observations. For any fixed grid interval of this length,
the mean is at least $\mu_0=(1+3\varepsilon/4)r$.
Since $\varepsilon r\ge32$ and $\varepsilon\le1$,
\[
 \mu_0-(r+2)=\frac{3\varepsilon r}{4}-2
       \ge\frac{\varepsilon r}{2},\qquad
 \mu_0=\left(1+\frac{3\varepsilon}{4}\right)r\le\frac{7r}{4}.
\]
The lower-tail exponent increases with the mean above the threshold, so
\begin{equation}
 \begin{aligned}
 \Prb(Z(H)\le r+2)
 &\le\exp\left(-\frac{(\mu_0-(r+2))^2}{2\mu_0}\right)\\
 &\le\exp\left(-\frac{(\varepsilon r/2)^2}{2(7r/4)}\right)
 =e^{-\varepsilon^2r/14}.
 \end{aligned}
 \label{prob:iid-long}
\end{equation}

If instead $|I|<(1-\varepsilon)p$, round outward to obtain
a grid interval $H\supseteq I$:
\[
 |H|\le |I|+2\delta<(1-3\varepsilon/4)p,\qquad
 Z(H)\ge Z(I)\ge r-2.
\]
Thus a child that is too short produces a short grid interval
containing too many observations from the same prefix.
For any fixed grid interval of this length, its mean
$\mu=m|H|\le(1-3\varepsilon/4)r$ satisfies
\[
 (r-2)-\mu\ge\frac{3\varepsilon r}{4}-2
       \ge\frac{\varepsilon r}{2},\qquad r-2\le r.
\]
The upper-tail bound therefore gives
\begin{equation}
 \begin{aligned}
 \Prb(Z(H)\ge r-2)
 &\le\exp\left(-\frac{((r-2)-\mu)^2}{2(r-2)}\right)\\
 &\le\exp\left(-\frac{(\varepsilon r/2)^2}{2r}\right)
 =e^{-\varepsilon^2r/8}.
 \end{aligned}
 \label{prob:iid-short}
\end{equation}

Finally, two grid endpoints determine $H$, so there are at most
$8195^2$ candidate intervals. If either child has abnormal length,
then either $J_q$ escapes $W$, or one of the fixed grid intervals
has the low or high count just described. A union bound over these
three possibilities gives
\[
 \begin{aligned}
 &\Prb\left(\exists\text{ child }I\text{ of }J_q:
       |I|\notin[(1-\varepsilon)p,(1+\varepsilon)p]\right)\\
 &\quad\le2e^{-3r}
       +8195^2e^{-\varepsilon^2r/14}
       +8195^2e^{-\varepsilon^2r/8}\\
 &\quad\le(2+2\cdot8195^2)e^{-r/14336}=A_pe^{-a_pr}.
 \end{aligned}
\]
Since $[(1-\varepsilon)p,(1+\varepsilon)p]\subseteq[15p/16,17p/16]$,
this also bounds the event in~\eqref{prob:iid-selected-tail}.
Clipping $W$ and retaining its endpoints covers outer cuts as well.
\end{proof}
\subsection{Growing phases by reverse exposure}
\label{prob:growing}\label{sec:growing-moments}

Assume $n\ge8192$. Recall $t_j=T2^j$ and $L=J+1$.
Let $P_j$ be the unordered set of the first $t_j$ observations,
$0\le j\le L$, so $|P_L|=m=2t_J$.
Growing phase $j$ uses the ranks in $P_j$ to define its bucket
boundaries, then places the next $t_j$ arrivals,
$P_{j+1}\setminus P_j$. The following allocation uses the enlarged
prefix $P_{j+1}$.

For the analysis, we reverse this order of exposure: fix the
unordered set $P_{j+1}$ while leaving its arrival order random.
Conditional on $P_{j+1}$, the earlier prefix $P_j$ is a uniform
$t_j$-subset of its $2t_j$ elements, and the phase-$j$ arrivals
form the complementary half.

\subsubsection{Half-sample overflow}

We first bound the number of new arrivals canonical in a selected
child interval. The interval depends on $P_j$, so we again use
fixed grid intervals to detect an abnormal empirical interval.
Here the relevant counts arise from sampling without replacement.

\begin{samepage}
\begin{lemma}[Half-sample overflow]
\label{prob:half-selected}
Fix a growing phase $j$, its strictly ordered enlarged prefix
$P_{j+1}$ of $2t_j$ elements, and a strict cut in this set, allowing
either outer cut. Conditional on $P_{j+1}$, the earlier prefix $P_j$
is a uniform $t_j$-subset. The empirical partition tree constructed
from $P_j$ has $g_j=t_j/T$ leaves, and $P_{j+1}\setminus P_j$
comprises the arrivals of phase $j$.
At a level with $2s$ leaves per parent, each child of the selected
parent has $s$ buckets and capacity $2Ts$. Then
\begin{equation}
 \Prb\left(\exists\text{ child }I:
       |(P_{j+1}\setminus P_j)\cap I|>2Ts
       \,\middle|\,P_{j+1}\right)
 \le A_ge^{-Ts/224},\qquad A_g=2+1027^2.
 \label{prob:half-overflow}
\end{equation}
The bound is independent of the values in $P_{j+1}$ and of the cut.
\end{lemma}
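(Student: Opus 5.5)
We follow the template of Lemma~\ref{prob:iid-selected}, with hypergeometric counts replacing binomial ones. Condition on $P_{j+1}$ and map its $2t_j$ ordered elements to equally spaced normalized ranks $1/(2t_j),\dots,1$, placing the cut in the matching gap. This preserves interval memberships, so the event depends only on which ranks fall into $P_j$. Put $r=Ts$ for the nominal count of $P_j$ in a child, and $p=s/g_j=Ts/t_j$ for the child's nominal length in rank coordinates. For any fixed set $H$ of consecutive ranks of $P_{j+1}$ containing $K$ elements, $|P_j\cap H|\sim\operatorname{Hypergeom}(2t_j,K,t_j)$ with mean $K/2$, and the arrival count is $K-|P_j\cap H|$. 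Lemma~\ref{prob:chernoff} therefore supplies the same tails as in the i.i.d.\ case.

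\textbf{Step 1 (rank counts).} Each child $I$ of the selected parent contains $s$ consecutive buckets of $P_j$. Boundaries fall every $T$ ranks, so $Ts-2\le|P_j\cap I|\le Ts+2$, with slack absorbing endpoint conventions and the outer intervals. Overflow means $|(P_{j+1}\setminus P_j)\cap I|>2Ts$. This forces $|P_{j+1}\cap I|>3Ts-2$, so $I$ holds roughly $1.5$ times its nominal share of $P_{j+1}$ while still containing only about $Ts$ elements of $P_j$. The tolerance here is a constant factor, so the grid can be coarser than in Lemma~\ref{prob:iid-selected}.

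\textbf{Step 2 (localization).} Let $W$ be the window of $16Ts$ (more generally $O(Ts)$) ranks of $P_{j+1}$ on each side of the cut, clipped. If the selected parent escapes $W$, it contains one of the fixed one-sided blocks $H_\pm$ of $16Ts$ ranks. That block then has $|P_j\cap H|\le2Ts+2$ against a mean of $8Ts$. By the hypergeometric lower tail, this has probability at most $e^{-cTs}$.

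\textbf{Step 3 (grid).} Place a grid on $W$ with spacing a small fixed multiple of $Ts$ ranks, say $Ts/32$, so that $W$ carries about $1025$ to $1027$ points. This matches the constant $A_g=2+1027^2$. On an overflowing child $I$ inside $W$, round inward to a grid interval $H\subseteq I$. This loses at most $Ts/16$ ranks, so $|P_{j+1}\cap H|\ge 3Ts-2-Ts/16$ while $|P_j\cap H|\le Ts+2$. The mean of $|P_j\cap H|$ is at least about $1.47Ts$, so the lower tail gives $e^{-(0.47Ts)^2/(2\cdot1.5Ts)}$. A union bound over at most $1027^2$ fixed grid intervals, plus the two escape events, yields $A_ge^{-Ts/224}$ after simplifying the constants with $Ts\ge T=512$. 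Since only the overflow direction is needed, no outward rounding case arises.

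The main obstacle is the bookkeeping. We must check that the rank rounding and the $\pm2$ endpoint slack (including the label $-\infty$ and outer cuts) leave a margin that still produces the exponent $1/224$ uniformly. We must also confirm that conditioning on $P_{j+1}$ makes every count over a fixed grid interval exactly hypergeometric, so that Lemma~\ref{prob:chernoff} applies without any further conditioning on $J_q\subseteq W$.
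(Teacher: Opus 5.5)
Your proposal is correct and follows the paper's proof. Both condition on $P_{j+1}$ so that counts over fixed intervals are hypergeometric, confine the selected parent to a fixed window via the lower tail, and round a bad child inward to a fixed grid of about $1027$ points, finishing with a union bound over at most $1027^2$ candidates. The only difference is cosmetic: the paper's grid test bounds each child's normalized length by $5p/4$ and then deduces $|(P_{j+1}\setminus P_j)\cap I|\le 3r/2+5<2r$, whereas you feed the overflow condition $|P_{j+1}\cap I|>3Ts-2$ straight into the grid test, which yields a larger exponent (about $0.07Ts$) than the required $Ts/224$.
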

\end{samepage}

\begin{proof}
Fix $P_{j+1}$ and map its ordered elements to $i/(2t_j+1)$,
$1\le i\le2t_j$. This preserves partition memberships and all
overflow events. Throughout this proof, retain the notation
$P_j,P_{j+1}$ for the mapped sets and intersect node intervals
with $[0,1]$. Thus lengths are measured in normalized rank
coordinates, not in the original numerical values.
Choose a fixed $q'\in(0,1)$ in the gap representing the cut,
including an outer gap when appropriate.

For an interval $H\subseteq[0,1]$, define
\[
 K(H)=|P_{j+1}\cap H|,\qquad Z(H)=|P_j\cap H|.
\]
The first count uses the entire enlarged prefix, while the second
uses the earlier prefix that defines the empirical tree.
Their difference counts the phase-$j$ arrivals in $H$.

For each fixed $H$, the population count $K(H)$ is fixed
conditional on $P_{j+1}$. Among the $2t_j$ elements of $P_{j+1}$,
exactly $K(H)$ lie in $H$. The earlier prefix $P_j$ is a uniformly
chosen $t_j$-subset of this population. Therefore
\[
 Z(H)\mid P_{j+1}
 \sim\operatorname{Hypergeom}(2t_j,K(H),t_j),
 \qquad
 \E[Z(H)\mid P_{j+1}]
 =t_j\frac{K(H)}{2t_j}=\frac{K(H)}2.
\]
Counting the equally spaced population points incurs an error of
at most one at each endpoint, so
$|K(H)-(2t_j+1)|H||\le2$. Since $|H|\le1$,
\begin{equation}
 \begin{aligned}
 \bigl|K(H)-2t_j|H|\bigr|
 &\le\bigl|K(H)-(2t_j+1)|H|\bigr|+|H|
 \le2+1=3,\\
 \bigl|\E[Z(H)\mid P_{j+1}]-t_j|H|\bigr|
 &=\frac12\bigl|K(H)-2t_j|H|\bigr|\le\frac32.
 \end{aligned}
 \label{prob:rank-discrepancy}
\end{equation}

Put $p=s/g_j$ and $r=t_jp=Ts\ge512$, and let $J_{q'}$
be the selected parent. Its children are defined by ranks in $P_j$,
so~\eqref{prob:rank-counts} gives
\[
 r-2\le Z(I)\le r+2,\qquad Z(J_{q'})\le2r+2.
\]
We will show that neither child is likely to be too long.
Together with its earlier-observation count, this bounds how many
complementary arrivals it can contain.

First localize the parent.
Set $W=[q'-16p,q'+16p]\cap[0,1]$.
If $J_{q'}\nsubseteq W$, it contains a fixed interval $H$ of
length $16p$ immediately to the left or right of $q'$.
Only directions for which this interval lies in $[0,1]$ can
cause escape. Containment forces
$Z(H)\le Z(J_{q'})\le2r+2$, whereas its conditional mean is at
least $16r-2$ by~\eqref{prob:rank-discrepancy}.
The lower-tail bound and a union bound over the two directions give
\[
 \Prb(J_{q'}\nsubseteq W\mid P_{j+1})
 \le2\exp\left(-\frac{(14r-4)^2}{2(16r-2)}\right)
 \le2e^{-3r}.
\]

Next, place a fixed grid on $W$ with spacing $p/32$, including
its endpoints. Since $|W|\le32p$, at most
$32p/(p/32)+3=1027$ grid points suffice.
Conditional on $P_{j+1}$, this grid is fixed and does not depend
on which half forms $P_j$. For each fixed grid interval $H$,
the count $Z(H)$ therefore has the hypergeometric distribution
described above.

Suppose $J_{q'}\subseteq W$ and a child $I$ has length greater
than $5p/4$. Round its endpoints inward to grid points lying in $I$; if an excluded endpoint is a grid point, use the next grid point inward. This gives a closed grid interval
$H\subseteq I$, removing at most $2(p/32)=p/16$ of length.
Both counts use the same earlier prefix, so
\[
 |H|\ge |I|-\frac{p}{16}>\frac{19p}{16},
 \qquad Z(H)\le Z(I)\le r+2.
\]
Although the interval selected by rounding depends on $P_j$,
it belongs to the fixed grid family. We bound the low-count event
for each sufficiently long fixed candidate and then take a union
bound.

For each such fixed interval $H$,
equation~\eqref{prob:rank-discrepancy} bounds the conditional mean
of $Z(H)$ from below by $\mu_0=19r/16-2$.
Since $r\ge512$,
\[
 \mu_0-(r+2)=\frac{3r}{16}-4\ge\frac r8,
 \qquad \mu_0\le\frac{7r}{4}.
\]
Applying the lower-tail bound at this minimum mean gives
\[
 \Prb(Z(H)\le r+2\mid P_{j+1})
 \le\exp\left(-\frac{(r/8)^2}{2\cdot7r/4}\right)
 =e^{-r/224}.
\]
A union bound over at most $1027^2$ fixed grid intervals and
the escape event shows that both child lengths are at most
$5p/4$, except with probability at most
\[
 2e^{-3r}+1027^2e^{-r/224}\le A_ge^{-r/224}.
\]
These tail bounds condition only on $P_{j+1}$, not on the
selected child's identity or on $J_{q'}\subseteq W$.

Finally, consider any child satisfying $|I|\le5p/4$.
The population discrepancy gives $K(I)\le2t_j|I|+3$,
while the empirical rank bound gives $Z(I)\ge r-2$.
Both bounds hold for every realization, including this
sample-dependent interval. Subtracting the earlier observations
from the full population therefore gives
\[
 \begin{aligned}
 |(P_{j+1}\setminus P_j)\cap I|
 &=K(I)-Z(I)\\
 &\le2t_j\frac{5p}{4}+3-(r-2)\\
 &=\frac32r+5<2r=2Ts.
 \end{aligned}
\]
Thus neither child locally overflows outside the exceptional
event above. Since $r=Ts$, this proves~\eqref{prob:half-overflow}.
\end{proof}

\subsubsection{Reallocation after a half-split}

After growing phase $j$, surviving empty cells retain labels from
$P_j$, but reallocation uses the new tree built from $P_{j+1}$.
We bound their requests to a child interval of this new tree.

For a new node interval $I$, put $Z(I)=|P_j\cap I|$, the number
of earlier observations in $I$. Counting all cells assigned to old
labels in $I$, Lemma~\ref{acc:old-labels} gives
\begin{equation}
 \text{old empty-cell requests canonical in }I
 \le 2Z(I)+4T.
 \label{prob:all-labels}
\end{equation}

Conditional on $P_{j+1}$, the new tree and the two children
of its selected parent are fixed, while $P_j$ remains
a uniform half-subset.
For each such $I$,
\[
 Z(I)\mid P_{j+1}
 \sim\operatorname{Hypergeom}
       \bigl(2t_j,|P_{j+1}\cap I|,t_j\bigr),
 \qquad
 \E[Z(I)\mid P_{j+1}]=\frac{|P_{j+1}\cap I|}{2}.
\]
Unlike in Lemma~\ref{prob:half-selected}, the interval being tested
is fixed by the conditioning, so we can apply the sampling tail
bound directly, without a grid.

\begin{samepage}
\begin{lemma}[Reallocation after a half-split]
\label{prob:half-reallocation}
Conditional on the enlarged prefix set $P_{j+1}$, consider the two
children $I$ of the selected $2s$-leaf parent in the new tree.
If $j<J$, the new tree is a growing tree, and each child has
$s$ buckets of capacity $2T$. Then
\begin{equation}
 \Prb\bigl(\exists I:2Z(I)+4T>2Ts\mid P_{j+1}\bigr)
 \le2e^{-Ts/96}\qquad(s\ge8).
 \label{prob:growing-reallocation}
\end{equation}
If $j=J$, the new tree is the first shrinking tree, with
$b=b_1$ finest leaves, and each child has capacity at least $Bs$.
Then
\begin{equation}
 \Prb\bigl(\exists I:2Z(I)+4T>Bs\mid P_L\bigr)
 \le2e^{-Bs/100}\qquad(s\ge64),
 \label{prob:first-frozen-reallocation}
\end{equation}
where $P_j=P_J$ and $P_{j+1}=P_L$.
Each event above contains the corresponding local reallocation-overflow
event.
\end{lemma}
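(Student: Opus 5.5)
Conditional on $P_{j+1}$, the new tree and the two children $I$ of the selected parent are fixed, since the tree is built from ranks in $P_{j+1}$. So $Z(I)$ is exactly hypergeometric with population $2t_j$, marked count $K(I)=|P_{j+1}\cap I|$, and sample size $t_j$. Its mean is $K(I)/2$. The plan is to bound $K(I)$ deterministically from the rank structure of the new tree, and then apply the upper tail \eqref{prob:upper-tail} of Lemma~\ref{prob:chernoff} to each of the two children, finishing with a union bound (the factor $2$). The final sentence of the statement, that each event contains the local reallocation-overflow event, is immediate from Lemma~\ref{acc:old-labels}. Overflow of $I$ means the old requests canonical in $I$ exceed $\capc(I)$; these requests are at most $2Z(I)+4T$, and $\capc(I)$ is $2Ts$, respectively at least $Bs$ (Lemma~\ref{alg:feasibility}).

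\emph{Growing case ($j<J$).} The new tree has $g_{j+1}=2t_j/T$ leaves with boundaries every $T$ ranks of $P_{j+1}$. A child with $s$ buckets therefore satisfies $K(I)\le Ts+2$ by the rank-count estimate, as in \eqref{prob:rank-counts}. Hence $\mu=\E[Z(I)\mid P_{j+1}]\le Ts/2+1$. The bad event is $Z(I)>Ts-2T$. Take $a=Ts-2T$. For $s\ge8$ we have $a\ge 3Ts/4$, and $a-\mu\ge Ts/4-2T-1\ge Ts/4-Ts/4\cdot(\ldots)$. Concretely, $2T\le Ts/4$ at $s\ge8$, which is only borderline. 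So I would instead write $a-\mu\ge Ts/2-2T-1\ge Ts/4-1$, and then $(a-\mu)^2/(2a)\gtrsim (Ts/4)^2/(2Ts)=Ts/32$. This comfortably beats $Ts/96$ after absorbing the $-1$, using $Ts\ge4096$. If $\mu$ is smaller, monotonicity of the tail in the mean (or dominating by a hypergeometric with larger $K$) handles it.

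\emph{First shrinking case ($j=J$).} The tree has $b=b_1$ leaves built from $m$ keys, so a child has $K(I)\le ms/b+2$. Since $P_J$ is half of $P_L$, $\mu\le ms/(2b)+1$. The difficulty is relating $ms/b$ to $Bs$: here $m/b$ is not tied to $T$. The bad event is $Z(I)>(Bs-4T)/2$. I would use $b=2^{k-1}$ with $2^{k+7}\le N<2^{k+8}$, which gives $b> N/512$, and $m\le N/3$, which gives $m/b<512/3$. Combined with $\lambda\ge 512/3\cdot\ldots$, more precisely $\lambda=4N/(3\cdot2^k)=2N/(3b)$ and $m/b\le N/(3b)=\lambda/2$, this gives $\mu\le \lambda s/4+1\le (B+4)s/4+1$. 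The threshold is $a=Bs/2-2T$, and $s\ge64$ makes $2T=1024\le Bs/10$ since $B\ge167$. So $a\ge 2Bs/5$ and $a-\mu\ge Bs/10$ after the small corrections. The tail is then $\exp(-(Bs/10)^2/(2a))\le \exp(-Bs/100)$ using $a\le Bs/2$. This constant bookkeeping, that is, verifying that the slack between $\lambda/4$ and $B/2$ survives the $4T$ term and the rounding, is the main obstacle. It is also why the threshold $s\ge64$ is needed. If the constants are too tight I would revisit $m/b$ using $m\le N/3$ more carefully, since $N\ge 3m$ gives exactly $m/b\le\lambda/2$.
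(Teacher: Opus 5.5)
Your proposal is correct and takes essentially the same route as the paper. Conditional on $P_{j+1}$ the two children are fixed, and $Z(I)$ is hypergeometric with mean $|P_{j+1}\cap I|/2$. The rank counts give $\mu\le Ts/2+1$, respectively $\mu\le\lambda s/4+1\le(B+4)s/4+1$. The upper tail and a union bound over the two children then finish the proof, and the containment claim comes from Lemma~\ref{acc:old-labels}.

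The only difference is cosmetic. You apply the tail at the exact thresholds $T(s-2)$ and $Bs/2-2T$, bounding $a-\mu$ from below and $a$ from above separately, whereas the paper first relaxes the thresholds to $3Ts/4$ and $3Bs/8$. Your garbled first estimate $a-\mu\ge Ts/4-2T-1$ should be deleted in favor of your corrected $a-\mu\ge Ts/2-2T-1\ge Ts/4-1$.
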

\end{samepage}

\begin{proof}
Fix $P_{j+1}$ and one of the two new children $I$, and write
$\mu=\E[Z(I)\mid P_{j+1}]$.
We bound the probability that $2Z(I)+4T$ exceeds $2Ts$
in the growing case or $Bs$ in the first shrinking case.

For a subsequent growing tree, the $s$ consecutive buckets in $I$
contain at most $Ts+2$ observations from $P_{j+1}$ by
\eqref{prob:rank-counts}. Thus, for $s\ge8$,
\[
 \begin{aligned}
 \mu&\le\frac{Ts+2}{2}
       =\frac{Ts}{2}+1\le\frac{5Ts}{8},\\
 2Z(I)+4T>2Ts
 &\implies Z(I)>(s-2)T\ge\frac{3Ts}{4}.
 \end{aligned}
\]
The threshold $3Ts/4$ exceeds this mean, so
\eqref{prob:upper-tail} gives
\[
 \begin{aligned}
 &\Prb\bigl(2Z(I)+4T>2Ts\mid P_{j+1}\bigr)\\
 &\quad\le
 \exp\left(-\frac{(3Ts/4-5Ts/8)^2}{2\cdot3Ts/4}\right)
 =e^{-Ts/96}.
 \end{aligned}
\]

For the first shrinking tree, a child with $s$ finest leaves
contains at most $ms/b+2$ observations from $P_L$.
Here $m\le N/3$, $b=2^{k-1}$, and
$\lambda=4N/(3\cdot2^k)$.
Since $B=\lfloor\lambda\rfloor-3$, we have $\lambda<B+4$.
Consequently,
\[
 \mu\le\frac{ms}{2b}+1
 \le\frac{\lambda s}{4}+1
 \le\frac{(B+4)s}{4}+1
 <\frac{13}{50}Bs.
\]
The last inequality uses $B\ge167$ and $s\ge64$.
These bounds also give $4T\le Bs/4$, so
\[
 2Z(I)+4T>Bs
 \implies Z(I)>\frac{Bs-4T}{2}\ge\frac{3Bs}{8}.
\]
Applying~\eqref{prob:upper-tail} at threshold $3Bs/8$ gives
\[
 \begin{aligned}
 &\Prb\bigl(2Z(I)+4T>Bs\mid P_L\bigr)\\
 &\quad\le
 \exp\left(-\frac{(3/8-13/50)^2Bs}{2\cdot3/8}\right)
 \le e^{-Bs/100}.
 \end{aligned}
\]
Taking a union bound over the two children proves both claims.
\end{proof}

\subsubsection{The reverse conditional product}

For a fixed cut $q$, we pair the arrival charge in growing phase $j$
with the reallocation charge immediately after that phase.
Arrivals are placed using the tree built from $P_j$, whereas the
following reallocation uses the new tree built from $P_{j+1}$.
We first bound these two charges using their overflow witnesses,
then combine their exponential moments across growing phases.

For each scale $s\in\{1,2,4,\ldots\}$, let
$\mathcal E^{\mathrm{arr}}_{j,s}$ be the event that a child $I$ of the selected
$2s$-leaf parent in the current tree satisfies
\[
 |(P_{j+1}\setminus P_j)\cap I|>2Ts.
\]
Each child has capacity $2Ts$, so this event is local arrival
overflow in at least one child.

Let $\mathcal E^{\mathrm{re}}_{j,s}$ be the event that a child $I$ of the
selected $2s$-leaf parent in the next tree satisfies
\[
 \begin{aligned}
 2|P_j\cap I|+4T&>2Ts &&(j<J),\\
 2|P_j\cap I|+4T&>Bs  &&(j=J).
 \end{aligned}
\]
The next tree is a growing tree for $j<J$ and the first shrinking
tree for $j=J$. The corresponding child capacity is $2Ts$ in the
first case and at least $Bs$ in the second.
By~\eqref{prob:all-labels}, local reallocation overflow
implies $\mathcal E^{\mathrm{re}}_{j,s}$, although the converse need not hold.
If the selected parent does not exist in the relevant tree, we take
the corresponding event to be empty.

In~\eqref{acc:charge-majorant}, bound every indicator at scale
$s<64$ by one and combine these contributions with the selected-leaf
bound in a constant $D_g$. Set
\begin{equation}
 \begin{split}
 D_g&=2048+4096(1+2+4+8+16+32)=260096,\\
 A_j^{\mathrm{arr}}
 &=D_g+4096\sum_{s\in\{64,128,256,\ldots\}}
                   s\,\mathbf1[\mathcal E^{\mathrm{arr}}_{j,s}],\\
 R_j
 &=D_g+4096\sum_{s\in\{64,128,256,\ldots\}}
                   s\,\mathbf1[\mathcal E^{\mathrm{re}}_{j,s}],\\
 G_j&=A_j^{\mathrm{arr}}+R_j.
 \end{split}
 \label{prob:growing-majorants}
\end{equation}
By~\eqref{acc:charge-majorant}, $A_j^{\mathrm{arr}}$ and $R_j$
bound the phase-$j$ arrival charge and the following reallocation
charge, respectively.
Thus $G_j$ bounds their sum, including the first shrinking
reallocation when $j=J$.
For fixed $n$ and $q$, these majorants depend only on
$(P_j,P_{j+1})$, not on the within-phase arrival order or surviving
empty cells. This is what permits the reverse-exposure argument below.

\begin{samepage}
\begin{lemma}[Uniform reverse moment]
\label{prob:reverse-mgf}
Set
\begin{equation}
 \theta_g=\frac1{32768},\qquad
 M_g=e^{D_g/16384}
       \left(1+\frac{A_ge^{-32}}{1-e^{-1/2}}\right).
 \label{prob:growing-constants}
\end{equation}
For every fixed prefix set $P_L$ and every strict cut induced in $P_L$,
\begin{equation}
 \E\left[\exp\left(\theta_g\sum_{j=0}^{L-1}G_j\right)
                 \middle|P_L\right]\le M_g^L.
 \label{prob:reverse-product}
\end{equation}
For every deterministic value cut $q$, this conditional inequality
holds almost surely in the i.i.d.\ experiment.
\end{lemma}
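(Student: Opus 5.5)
The proof goes by peeling off one growing phase at a time, from $j=0$ upward, using the tower property for the reverse filtration $\mathcal H_j=\sigma(P_j,P_{j+1},\ldots,P_L)$. By construction, $G_j$ depends only on $(P_j,P_{j+1})$, so it is $\mathcal H_j$-measurable. Conditioning on $\mathcal H_{j+1}$ fixes the larger set $P_{j+1}$ and the later sets. Under the i.i.d.\ (equivalently, random-order) model, the arrival order of $P_{j+1}$ is uniform given $\mathcal H_{j+1}$. Hence $P_j$ is a uniform $t_j$-subset of $P_{j+1}$, and this conditional law depends only on $P_{j+1}$. The cut is fixed throughout. For a deterministic value cut $q$ it induces a strict cut in every $P_{j+1}$, almost surely, so Lemmas~\ref{prob:half-selected} and~\ref{prob:half-reallocation} apply conditionally on $\mathcal H_{j+1}$.

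The core step is a one-phase bound
\[
\E\bigl[e^{\theta_g G_j}\mid\mathcal H_{j+1}\bigr]\le M_g .
\]
Given this, the product inequality follows by induction. We have $\E[e^{\theta_g\sum_{j<L}G_j}\mid P_L]=\E[e^{\theta_g\sum_{j\ge1}G_j}\,\E[e^{\theta_gG_0}\mid\mathcal H_1]\mid P_L]$, the inner factor is at most $M_g$, and we repeat the same step for $j=1,2,\ldots$.

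For the one-phase bound, I write $G_j=A^{\rm arr}_j+R_j$ and apply conditional Cauchy--Schwarz:
\[
\E[e^{\theta_g G_j}\mid\cdot]\le\bigl(\E[e^{2\theta_gA^{\rm arr}_j}\mid\cdot]\,\E[e^{2\theta_gR_j}\mid\cdot]\bigr)^{1/2}.
\]
Each factor gets a bound of the form $e^{2\theta_gD_g}(1+\cdots)$, and here $2\theta_gD_g=D_g/16384$, which matches the prefactor in $M_g$. To control the random parts I use the argument of Lemma~\ref{prob:dyadic-mgf} with scales $h=s\ge64$ and weight $2\theta_g\cdot4096=1/4$. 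The tail inputs are the following. Lemma~\ref{prob:half-selected} gives $\Prb(\mathcal E^{\rm arr}_{j,s}\mid P_{j+1})\le A_ge^{-Ts/224}=A_ge^{-16s/7}$. Lemma~\ref{prob:half-reallocation} gives $2e^{-Ts/96}$ for $j<J$, and $2e^{-Bs/100}\le 2e^{-1.67s}$ for $j=J$. All three rates are at least $1$, and $A_g\ge2$, so all are dominated by $A_ge^{-s}$. Then $\sum s\mathbf1[\mathcal E_s]\le2h_{\max}$ gives
\[
\E\bigl[e^{(1/4)\cdot 2h_{\max}}\bigr]\le1+\sum_{s\ge64\text{ dyadic}}A_ge^{-s/2}\le1+A_g\frac{e^{-32}}{1-e^{-1/2}},
\]
where the dyadic sum is bounded by the geometric series starting at $e^{-32}$. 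Both Cauchy--Schwarz factors are therefore at most $e^{D_g/16384}(1+A_ge^{-32}/(1-e^{-1/2}))=M_g$, and their geometric mean is at most $M_g$.

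The main subtlety is measurability. One must check that $\mathcal E^{\rm arr}_{j,s}$ and $\mathcal E^{\rm re}_{j,s}$ are functions of $(P_j,P_{j+1})$ and the cut alone, that the selected parents are determined by these sets, and that the convention "empty event if the parent is absent" leaves the bounds intact. It must also be checked that conditioning on the later sets $P_{j+2},\ldots,P_L$ does not change the uniform-subset law of $P_j$ given $P_{j+1}$. This holds because, under exchangeability, the order inside the first $t_{j+1}$ arrivals is independent of which elements arrive afterwards. For an induced strict cut in a fixed $P_L$, the same argument applies verbatim, since the lemmas are uniform over cuts and over the values.
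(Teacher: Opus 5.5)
Your proposal is correct and follows the paper's proof essentially step for step. You dominate every witness tail at scales $s\ge64$ by $A_ge^{-s}$, bound each of $A^{\mathrm{arr}}_j$ and $R_j$ at parameter $2\theta_g=1/16384$ by the maximum-scale argument to obtain $M_g$, combine the two by conditional Cauchy--Schwarz, and peel off the phases with the tower property for the reverse filtration $\mathcal H_j=\sigma(P_j,\ldots,P_L)$; your measurability remarks and your justification that $P_j$ stays a uniform half-subset under conditioning on the later prefix sets match the paper's.
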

\end{samepage}

\begin{proof}
We first bound the exponential moments of
$A_j^{\mathrm{arr}}$ and $R_j$ separately.
For $s\ge64$, Lemmas~\ref{prob:half-selected}
and~\ref{prob:half-reallocation} give
\[
 \begin{aligned}
 \Prb(\mathcal E^{\mathrm{arr}}_{j,s}\mid P_{j+1})
 &\le A_ge^{-s},\\
 \Prb(\mathcal E^{\mathrm{re}}_{j,s}\mid P_{j+1})
 &\le A_ge^{-s}.
 \end{aligned}
\]
Here $T/224,T/96,B/100>1$, and $A_g\ge2$ covers the
prefactors in both reallocation bounds.

For either $U=A_j^{\mathrm{arr}}$ or $U=R_j$, let $s\ge64$
be its largest active scale. The maximum-scale argument of
Lemma~\ref{prob:dyadic-mgf} gives $U-D_g\le4096\cdot2s$,
so its scale-dependent exponential factor at parameter $1/16384$
is at most $e^{s/2}$. The conditional probability of this
largest scale is at most $A_ge^{-s}$.
Including the case with no active scale and extending this
sum to all integers $s\ge64$ gives
\[
 \begin{aligned}
 \E[e^{U/16384}\mid P_{j+1}]
 &\le e^{D_g/16384}
       \left(1+A_g\sum_{s=64}^{\infty}e^{s/2}e^{-s}\right)\\
 &=e^{D_g/16384}
       \left(1+\frac{A_ge^{-32}}{1-e^{-1/2}}\right)
 =M_g.
 \end{aligned}
\]

The two majorants need not be independent. Since
$2\theta_g=1/16384$, the conditional Cauchy--Schwarz inequality
combines their moment bounds:
\begin{equation}
 \begin{split}
 \E[e^{\theta_gG_j}\mid P_{j+1}]
 &\le\Bigl(
   \E[e^{2\theta_g A_j^{\mathrm{arr}}}\mid P_{j+1}]
   \E[e^{2\theta_g R_j}\mid P_{j+1}]
   \Bigr)^{1/2}\\
 &\le M_g.
 \end{split}
 \label{prob:one-split-mgf}
\end{equation}

We now combine the bounds across growing phases.
Let $\mathcal H_j=\sigma(P_j,\ldots,P_L)$.
Conditioning on $\mathcal H_{j+1}$ reveals the larger prefix sets,
while the arrival order within $P_{j+1}$ remains uniform.
Hence $P_j$ remains a uniform half-subset of $P_{j+1}$.
Because $G_j$ depends only on $(P_j,P_{j+1})$,
the same argument as in~\eqref{prob:one-split-mgf} gives
\[
 \E[e^{\theta_gG_j}\mid\mathcal H_{j+1}]\le M_g.
\]
In contrast, each later $G_\ell$, with $\ell>j$, is already
determined by the revealed prefix sets and is therefore
$\mathcal H_{j+1}$-measurable.

The tower property lets us average over $G_j$ while keeping
the later factors fixed:
\[
 \begin{aligned}
 \E\!\left[e^{\theta_g\sum_{\ell=j}^{L-1}G_\ell}\mid P_L\right]
 &=\E\!\left[
 e^{\theta_g\sum_{\ell=j+1}^{L-1}G_\ell}
 \E[e^{\theta_gG_j}\mid\mathcal H_{j+1}]
 \middle|P_L\right]\\
 &\le M_g\,
 \E\!\left[e^{\theta_g\sum_{\ell=j+1}^{L-1}G_\ell}\mid P_L\right].
 \end{aligned}
\]
Repeating this step for $j=0,\ldots,L-1$ contributes one factor
$M_g$ per phase and proves~\eqref{prob:reverse-product}.

The one-split bounds hold for every fixed population and strict cut,
including outer cuts, so they apply almost surely to the cut
induced by deterministic $q$ in $P_L$.
\end{proof}
\subsection{The fixed tree: persistent errors and future blocks}
\label{prob:shrinking}\label{sec:shrinking-moments}

The prefix $P_L$ observed by the end of the growing phases defines
the first shrinking tree, with $b=b_1$ initial buckets, the finest
leaves of the fixed tree.
Subsequent phases merge adjacent buckets within this tree.
In phase $i$, each current bucket combines $2^{i-1}$ initial buckets,
so the current tree has $b_i=b/2^{i-1}$ leaves.

We first bound an exponential moment of the accumulated charge
from initial interval-length errors, averaging over the random
prefix $P_L$. We then condition on $P_L$ to bound the remaining
charges using independent future input blocks.

\subsubsection{Persistent partition errors}

While a node remains internal, its interval and the intervals of
its two children remain unchanged, although they contain fewer
current buckets as phases advance. Thus an initial error in interval
length can affect several phases.

We apply Lemma~\ref{prob:iid-selected} to the initial intervals
and bound the charges their errors produce through later merges.
For each scale $\nu\in\{2,4,\ldots,b\}$, let $J_{q,\nu}$ be the selected
parent containing $q$ and consisting of $\nu$ initial buckets.
Each of its two children contains $\nu/2$ initial buckets.
This is the setting of Lemma~\ref{prob:iid-selected} with
$s=\nu/2$, so the nominal child length and sample count are
\[
 p=\frac{\nu}{2b},
 \qquad
 r=mp=\frac{m\nu}{2b},
 \qquad m=|P_L|.
\]
For $\nu\ge64$, define the same length-error event as in that lemma:
\begin{equation}
 \mathrm{Bad}_\nu
 =
 \left\{
 \exists\text{ child }I\text{ of }J_{q,\nu}:
 |I|\notin[15p/16,17p/16]
 \right\}.
 \label{prob:bad-pilot-definition}
\end{equation}
Substituting $p=\nu/(2b)$ gives the permitted length range
$[15\nu/(32b),17\nu/(32b)]$.

Equation~\eqref{alg:pilot-density} gives $m/b>32$. Hence
\[
 r=\frac{m\nu}{2b}>16\nu\ge1024
 \qquad(\nu\ge64),
\]
which verifies the hypothesis of Lemma~\ref{prob:iid-selected}.
Since $a_p=1/14336$, that lemma gives
\begin{equation}
 \Prb(\mathrm{Bad}_\nu)
 \le A_pe^{-a_pr}
 =A_pe^{-r/14336}
 \le A_pe^{-\nu/896}.
 \label{prob:bad-pilot-tail}
\end{equation}
This probability is taken over the random prefix $P_L$, before
conditioning on its observed values.
For scales $\nu<64$, we declare $\mathrm{Bad}_\nu$ to hold
unconditionally and bound its contribution deterministically.

To bound their accumulated charges across phases, define
\begin{equation}\label{prob:pilot-penalty}
 D_{\rm err}
 =4\kappa\sum_{\nu\in\{2,4,\ldots,b\}}
          \nu\,\mathbf1[\mathrm{Bad}_\nu].
\end{equation}
The proof below bounds each bad parent's cumulative charge by
$4\kappa\nu$. For fixed $n,q$, $D_{\rm err}$ depends only on $P_L$,
not on future arrivals.

\begin{samepage}
\begin{lemma}[Geometrically weighted fixed-partition errors]
\label{prob:pilot-mgf}
The variable $D_{\rm err}$ in~\eqref{prob:pilot-penalty}
dominates all bad-parent charges in nonfinal
shrinking arrivals and subsequent nonfinal bucket-merging reallocations.
There are absolute constants $\theta_b>0$ and $M_b\ge1$ such that,
uniformly in $q$ and $n$,
\begin{equation}
 \E[e^{\theta_bD_{\rm err}}]\le M_b.
 \label{prob:pilot-moment}
\end{equation}
\end{lemma}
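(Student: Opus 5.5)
The lemma has two parts: a deterministic domination claim and an exponential-moment bound. I would prove them in that order.

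\emph{Domination.} Fix a bad selected parent $J_{q,\nu}$, meaning a parent with $\nu$ initial buckets for which $\mathrm{Bad}_\nu$ holds. In shrinking phase $i$ it contains $\nu/2^{i-1}$ current leaves, and it remains an internal node with two children while $\nu/2^{i-1}\ge2$. Its charge per operation is bounded by~\eqref{acc:component-charge}. This gives at most $2\capc(v)$, since the span is at most one. Each nonfinal shrinking bucket has capacity at most $B+7\le345<1024$, so the charge is at most $2\cdot1024\cdot\nu/2^{i-1}=\kappa\nu/2^{i-1}$.

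In each phase there is one arrival operation, and the following nonfinal merging reallocation is a second operation. In that reallocation the old tree's node $J_{q,\nu}$ has $\nu/2^{i}$ new leaves when it is still internal, so its contribution is no larger. Summing over phases, I would aim for
\[
 \sum_{i\ge1}2\kappa\nu/2^{i-1}=4\kappa\nu .
\]
This matches the weight in~\eqref{prob:pilot-penalty}. Thus every bad-parent charge, in the majorant~\eqref{acc:charge-majorant}, is dominated by its summand $4\kappa\nu\,\mathbf1[\mathrm{Bad}_\nu]$.

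Some bookkeeping is needed here. The selected parent at scale $2s$ in phase $i$ is exactly $J_{q,\nu}$ with $\nu=2s\cdot2^{i-1}$, so each (phase, scale) pair is charged to a unique $\nu$. The first shrinking reallocation is excluded, because it is already covered by $G_{L-1}$. The final phase is excluded by Lemma~\ref{acc:global-accounting}.

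\emph{Moment.} I would apply Lemma~\ref{prob:dyadic-mgf} with the trivial sigma-field and events $\mathcal E_h=\mathrm{Bad}_h$ for $h=\nu$. For $h\ge h_*=64$, the tail~\eqref{prob:bad-pilot-tail} gives $A_*=A_p$ and $a_*=1/896$. The scales $\nu<64$, which are declared bad unconditionally, are absorbed into the constant $S_*$ of that lemma. The dependence among the events across scales is irrelevant, since the lemma only uses the largest active scale.

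The constants are $c_*=4\kappa$ and any $\eta<1/(896\cdot8\kappa)$. I would set $\theta_b=\eta c_*$, and then~\eqref{prob:dyadic-bound} gives $M_b=K_*$, which is absolute and uniform in $q,n$. The only $n$-dependence is through the range $\nu\le b$, and restricting to that range only decreases the sum.

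The main obstacle is the domination step, in particular checking that the reallocation between phases $i$ and $i+1$ is correctly attributed. The new tree's selected parent there may be an ancestor or a child of $J_{q,\nu}$, so one must verify that the charge is indexed consistently in the current tree and that each operation is counted only once. My first move would be to index everything by the new tree, where scale $2s$ corresponds to $\nu=2s\cdot2^{i}$. This still yields a geometric series with ratio $1/2$ and total at most $2\kappa\nu$ from reallocations, keeping the overall bound at $4\kappa\nu$.
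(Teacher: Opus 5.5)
Your domination argument and your application of Lemma~\ref{prob:dyadic-mgf} with $(c_*,h_*,A_*,a_*)=(4\kappa,64,A_p,1/896)$ are the same as the paper's proof. The paper avoids the attribution issue you raise at the end by pairing the reallocation at the start of phase $i\ge2$ with the arrivals of phase $i$. Both operations then use the phase-$i$ tree, in which $J_{q,\nu}$ has $\nu/2^{i-1}$ leaves. Each contributes at most $\kappa\nu/2^{i-1}$, and the total is $4\kappa\nu$.

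There is one concrete error in the moment step: the parameter must be $\theta_b=\eta$, not $\theta_b=\eta c_*$. The variable $D_{\rm err}=4\kappa\sum_\nu\nu\,\mathbf 1[\mathrm{Bad}_\nu]=c_*\sum_h h\,\mathbf 1[\mathcal E_h]$ already carries the factor $c_*$, so Lemma~\ref{prob:dyadic-mgf} bounds $\E[e^{\eta D_{\rm err}}]$. With your choice you would instead need to control $\E[e^{\eta c_*^2\sum_h h\,\mathbf 1[\mathcal E_h]}]$. For $\eta$ near your admissible limit $1/(896\cdot8\kappa)$, the largest-scale factor is then about $e^{2\theta_b c_*\nu}\approx e^{9\nu}$. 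This overwhelms the tail $A_pe^{-\nu/896}$, so the argument gives no bound uniform in $n$. The paper takes $\theta_b=1/(16\kappa\cdot896)$, that is, $\eta=a_*/(4c_*)$. This gives the large-scale exponent $-\nu/896+2\theta_b(4\kappa)\nu=-\nu/1792$ and the explicit $M_b$ in~\eqref{prob:explicit-error-constants}. With this correction your proof is complete.
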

\end{samepage}

\begin{proof}
Fix a realization of $P_L$ and any continuation.
Consider a selected bad parent containing $\nu$ initial buckets.
In shrinking phase $i$, each current bucket combines $2^{i-1}$
initial buckets.
Whenever this parent remains internal in the current tree, it
therefore contains $\nu/2^{i-1}$ current buckets.
By~\eqref{acc:component-charge}, its charge is at most
$\kappa\nu/2^{i-1}$ per arrival or reallocation operation.

There are at most two such operations per phase.
Extending the sum over contributing phases to all $i\ge1$ bounds
its cumulative charge by
\[
 2\kappa\nu\sum_{i\ge1}2^{-(i-1)}
 =4\kappa\nu.
\]
Summing this bound over the selected bad parents gives
$D_{\rm err}$ and proves the domination.
The first shrinking reallocation is already included in $G_J$.
Selected-leaf charges and the final phase are covered by the baseline.

We now average over $P_L$. By~\eqref{prob:bad-pilot-tail},
Lemma~\ref{prob:dyadic-mgf} applies with
\[
 (c_*,h_*,A_*,a_*)
 =(4\kappa,64,A_p,1/896)
\]
and therefore proves~\eqref{prob:pilot-moment}.
Unused scales are assigned empty events.
Here the scales start at $\nu=2$, so the small-scale sum,
before multiplication by $4\kappa$, is at most
$2+4+8+16+32=62$.
No independence between the bad events is needed.
\end{proof}
\subsubsection{Good intervals and future arrivals}

Condition on $P_L$. The fixed-tree intervals and their bad events
are now determined, while later inputs remain independent uniform
samples.
Recall from Lemma~\ref{acc:underflow} that $N_i(I)$ counts arrivals
canonical in $I$ during shrinking phase $i$.
Writing $X_t$ for arrival $t$, this is the number of phase-$i$
inputs $X_t$ that fall in $I$, regardless of their actual
destination cells.

Shrinking phase $i$ contains $n_i$ arrivals. Its length is determined
by $n$, not by placements or surviving empty cells.
For any interval $I$ determined by $P_L$,
\[
 N_i(I)\mid P_L\sim\operatorname{Bin}(n_i,|I|),
 \qquad
 \E[N_i(I)\mid P_L]=n_i|I|.
\]
We use this same count to control two overflow mechanisms.
Local arrival overflow requires too many arrivals in a current
interval, whereas local overflow in the following reallocation
requires too few arrivals in a next-tree interval.

A selected parent with $2s$ current buckets in phase $i$ contains
$2^i s$ initial buckets, so it is $J_{q,2^i s}$ in the fixed tree.
We call this parent good when $\mathrm{Bad}_{2^i s}$ does not occur.
By~\eqref{prob:bad-pilot-definition}, each of its two children has
length in
\[
 \left[\frac{15s}{16b_i},\frac{17s}{16b_i}\right].
\]
Thus the length bounds from the preceding subsection directly
control the conditional mean number of arrivals in each child.

\begin{samepage}
\begin{lemma}[Arrival and next-reallocation tails]
\label{prob:good-block-tails}
Condition on $P_L$ and let $1\le i<k$.
A good selected parent $J_{q,2^i s}$ with $2s$ current buckets
in phase $i$ satisfies
\begin{equation}
 \Prb\left(
 \exists\text{ child }I\text{ of }J_{q,2^i s}:
 N_i(I)>Bs
 \,\middle|\,P_L\right)
 \le2e^{-Bs/200}.
 \label{prob:good-arrival}
\end{equation}
If $i+1<k$, a good selected parent $J_{q,2^{i+1}s}$ with
$2s$ current buckets in phase $i+1$ satisfies
\begin{equation}
 \Prb\left(
 \exists\text{ child }I\text{ of }J_{q,2^{i+1}s}:
 N_i(I)<1.1Bs
 \,\middle|\,P_L\right)
 \le2e^{-Bs/65}.
 \label{prob:good-reallocation}
\end{equation}
Each event contains the corresponding local overflow event,
for current arrivals and the following reallocation, respectively.
\end{lemma}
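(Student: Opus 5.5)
Both tail bounds are Chernoff estimates for a binomial count whose mean is controlled by the good-length window. Conditional on $P_L$, the children of a good selected parent are fixed intervals. Each count $N_i(I)$ is then $\operatorname{Bin}(n_i,|I|)$, because the phase-$i$ inputs are independent uniforms and $n_i$ depends only on $n$. The plan is to bound the mean $n_i|I|$ from above for \eqref{prob:good-arrival} and from below for \eqref{prob:good-reallocation}. We then apply \eqref{prob:upper-tail} or \eqref{prob:lower-tail} to each child and take a union bound over the two children, which accounts for the factor $2$.

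\emph{Arrival tail.} In phase $i$, a child of the good parent $J_{q,2^is}$ has length at most $17s/(16b_i)$. By \eqref{alg:arrival-density}, $n_i\le\frac34(B+7)b_i$. Hence
\[
 \mu\le\frac{17}{16}\cdot\frac34(B+7)s=\frac{51}{64}(B+7)s .
\]
Since $B\ge167$, this is at most about $0.83Bs$; say $\mu\le\frac{5}{6}Bs$. Applying \eqref{prob:upper-tail} with $a=Bs$ gives exponent $(Bs/6)^2/(2Bs)=Bs/72$, which is comfortably better than $Bs/200$. Each child has capacity at least $Bs$ by Lemma~\ref{alg:feasibility}, so local arrival overflow implies $N_i(I)>Bs$. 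The event therefore contains the overflow event.

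\emph{Reallocation tail.} A child of the good parent $J_{q,2^{i+1}s}$ has $2s$ buckets of phase $i$ and length at least $15s/(16b_{i+1})=15s/(8b_i)$. Using $n_i\ge(\frac34B-1)b_i$ from \eqref{alg:arrival-density}, we get
\[
 \mu\ge\frac{15}{8}\Bigl(\frac34B-1\Bigr)s=\Bigl(\frac{45}{32}B-\frac{15}{8}\Bigr)s\ge1.39Bs ,
\]
again using $B\ge167$. The gap to the threshold $1.1Bs$ is at least $0.29Bs$. Applying \eqref{prob:lower-tail} with $\mu$ replaced by its lower bound gives exponent at least $(0.29)^2Bs/(2\cdot1.39)\ge Bs/34$, which beats $Bs/65$. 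This replacement is valid because the exponent $(\mu-a)^2/(2\mu)$ is increasing in $\mu$ for $\mu\ge a$, the same monotonicity used in Lemma~\ref{prob:iid-selected}.

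For the containment claim, the phase-$(i+1)$ reallocation is nonfinal because $i+1<k$. The child is a node interval with $s$ current buckets formed from $2s$ phase-$i$ buckets. Lemma~\ref{acc:underflow} applied at phase $i+1$ shows that local reallocation overflow forces $N_i(I)<(B+14)s\le1.1Bs$, so the event contains the overflow event.

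The main care point is not the calculus but checking that the inputs are used correctly. The children must be measurable with respect to $P_L$; they are, because the fixed tree is built from $P_L$ and merging is deterministic. The phase-$i$ inputs must be independent of $P_L$ with a deterministic count; they are, since phase lengths depend only on $n$. With these two facts the binomial law holds despite the data-dependent placements. The remaining routine work is verifying the numerical constants with $B\ge167$.
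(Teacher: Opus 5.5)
Your proposal is correct and follows the paper's proof essentially step for step. As in the paper, you bound the binomial mean $n_i|I|$ using the good-length window and \eqref{alg:arrival-density}, apply \eqref{prob:upper-tail} at $Bs$ or \eqref{prob:lower-tail} at $1.1Bs$ (using that the exponent increases in $\mu$), obtain the containments from the capacity floor $Bs$ and Lemma~\ref{acc:underflow}, and take a union bound over the two children; your sharper intermediate bounds ($\mu\le\frac56Bs$ and $\mu\ge1.39Bs$ versus the paper's $0.9Bs$ and $1.3Bs$) give exponents $Bs/72$ and $Bs/34$, which imply the stated ones.
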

\end{samepage}

\begin{proof}
Fix one child interval $I$ of the parent under consideration and
let $\mu=\E[N_i(I)\mid P_L]=n_i|I|$.

For current arrivals, $I$ contains $s$ phase-$i$ buckets.
Goodness of the parent and~\eqref{alg:arrival-density} give
\[
 \mu
 \le n_i\frac{17s}{16b_i}
 \le\frac{17}{16}\frac34(B+7)s
 \le0.9Bs.
\]
The child's capacity is at least $Bs$, so local arrival overflow
requires $N_i(I)>Bs$.
Applying the upper-tail bound~\eqref{prob:upper-tail} gives
\[
 \Prb(N_i(I)>Bs\mid P_L)
 \le
 \exp\left(-\frac{(Bs-0.9Bs)^2}{2Bs}\right)
 =e^{-Bs/200}.
\]

For the following reallocation, $I$ contains $s$ phase-$(i+1)$
buckets, formed from $2s$ phase-$i$ buckets.
Since $b_{i+1}=b_i/2$, goodness of the next parent gives
\[
 |I|\ge\frac{15s}{16b_{i+1}}
       =\frac{15\cdot2s}{16b_i}.
\]
Together with~\eqref{alg:arrival-density}, this yields
\[
 \mu
 \ge\frac{15}{16}\frac{n_i}{b_i}\,2s
 \ge\frac{15}{16}\left(\frac{3B}{4}-1\right)2s
 \ge1.3Bs.
\]
The numerical comparisons in both cases use $B\ge167$.

Lemma~\ref{acc:underflow}, applied at the start of phase $i+1$,
shows that local reallocation overflow implies
\[
 N_i(I)<(B+14)s\le1.1Bs.
\]
Thus too few arrivals in this interval are necessary for the
following reallocation to overflow.
The lower-tail exponent increases with $\mu$ for $\mu\ge1.1Bs$.
Using $\mu\ge1.3Bs$ in~\eqref{prob:lower-tail} gives
\[
 \Prb(N_i(I)<1.1Bs\mid P_L)
 \le
 \exp\left(-\frac{(1.3Bs-1.1Bs)^2}{2\cdot1.3Bs}\right)
 =e^{-Bs/65}.
\]
Taking a union bound over the two children proves both claims.
\end{proof}

We now convert these events into charge bounds.
By~\eqref{acc:component-charge}, a top-marked parent with $2s$
current buckets contributes at most $2\kappa s$.
As in the growing analysis, we sum this bound over possible
overflow witnesses, including parents that are not top-marked.

For $1\le i<k$, define $W_i$ by the two sums below.
The first bounds good-parent arrival charges in phase $i$,
and the second bounds good-parent reallocation charges at the
start of phase $i+1$. Each indicator tests whether either child
of the parent specified under its sum satisfies the inequality:
\begin{equation}\label{prob:raw-block-majorant}
 \begin{split}
 W_i={}&2\kappa
 \sum_{\substack{2s\le b_i\\
                 s\in\{1,2,4,\ldots\}\\
                 J_{q,2^i s}\text{ good}}}
 s\,\mathbf1\!\left[
   \exists\text{ child }I:N_i(I)>Bs
 \right]\\
 &+\mathbf1[i+1<k]\,2\kappa
 \sum_{\substack{2s\le b_{i+1}\\
                 s\in\{1,2,4,\ldots\}\\
                 J_{q,2^{i+1}s}\text{ good}}}
 s\,\mathbf1\!\left[
   \exists\text{ child }I:N_i(I)<1.1Bs
 \right].
 \end{split}
\end{equation}
The second sum is omitted when the following phase is final.
Bad-parent charges are covered by $D_{\rm err}$, while selected-leaf
charges and the final phase are covered by the baseline.

Both sums use the same phase-$i$ arrival counts, but test intervals
in their respective trees.
Although the actual reallocation depends on earlier placements,
$W_i$ depends only on $P_L$ and the phase-$i$ input block.
We can therefore combine the moment bounds across disjoint future
blocks without conditioning on the surviving empty-cell configurations.

\begin{samepage}
\begin{lemma}[Conditional product for the future blocks]
\label{prob:future-product}
There are absolute constants $\theta_s>0$ and $M_s\ge1$ such that
\begin{equation}
 \E\left[\exp\left(\theta_s\sum_{i=1}^{k-1}W_i\right)
                  \middle|P_L\right]\le M_s^{k-1}.
 \label{prob:shrinking-mgf}
\end{equation}
\end{lemma}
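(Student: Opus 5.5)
The plan is to mirror the proof of Lemma~\ref{prob:reverse-mgf}: bound each $W_i$ separately conditional on $P_L$, then multiply across phases using conditional independence of the blocks.

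\emph{Step 1: one block.} Fix $P_L$ and $1\le i<k$, and write $W_i=U_i+V_i$, where $U_i$ is the arrival sum and $V_i$ the reallocation sum in~\eqref{prob:raw-block-majorant}. Conditional on $P_L$, the good parents are determined, so each summand's indicator is a deterministic interval event in the phase-$i$ input block. Lemma~\ref{prob:good-block-tails} gives, for every scale $s\in\{1,2,4,\ldots\}$ at which the parent exists and is good,
\[
 \Prb(\text{indicator at scale }s\mid P_L)\le 2e^{-Bs/200}\le 2e^{-167s/200}.
\]
For scales that do not exist or are bad, the corresponding event is set to be empty. Thus the hypothesis of Lemma~\ref{prob:dyadic-mgf} holds conditionally on $\mathcal H=\sigma(P_L)$, with $h_*=1$, $A_*=2$, $a_*=167/200$, and $c_*=2\kappa$, for both $U_i$ and $V_i$. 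Choosing $\eta<a_*/(4\kappa)$ gives $\E[e^{\eta U_i}\mid P_L]\le K_*$ and the same bound for $V_i$. Conditional Cauchy--Schwarz then gives
\[
 \E[e^{(\eta/2)W_i}\mid P_L]\le K_*.
\]
Set $\theta_s=\eta/2$ and $M_s=\max(K_*,1)$. These constants are absolute because $B\ge167$ uniformly by Lemma~\ref{alg:feasibility}. The sums in~\eqref{prob:raw-block-majorant} run only over $s\ge1$, and Lemma~\ref{prob:dyadic-mgf} permits $h_*=1$ with no small-scale remainder, so nothing further needs to be absorbed.

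\emph{Step 2: product.} Conditional on $P_L$, the phase lengths $n_i$ are deterministic functions of $n$ by Lemma~\ref{alg:feasibility}. The phase-$i$ blocks are therefore disjoint fixed index sets of the later i.i.d.\ uniform inputs, which are independent of $P_L$. Since $W_i$ is a measurable function of $P_L$ and block $i$ alone, the $W_i$ are conditionally independent given $P_L$. Hence
\[
 \E\Bigl[e^{\theta_s\sum_i W_i}\Bigm| P_L\Bigr]=\prod_{i=1}^{k-1}\E[e^{\theta_s W_i}\mid P_L]\le M_s^{k-1}.
\]
Alternatively, one can condition successively on the earlier blocks.

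\emph{Main obstacle.} The delicate point is making sure $W_i$ truly depends only on $(P_L,\text{block }i)$ and not on the actual placements. This holds because $W_i$ uses canonical counts $N_i(I)$, which are input counts, rather than top-mark events, and because goodness is $P_L$-measurable. The remaining checks are routine: the reallocation witnesses at phase $i+1$ use intervals of the $(i+1)$-tree, which are fixed by $P_L$, and the binomial law of $N_i(I)$ holds for every deterministic interval.
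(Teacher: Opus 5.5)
Your proposal is correct and follows essentially the same route as the paper. You bound each of the two sums in $W_i$ by the largest-active-scale argument using the tails of Lemma~\ref{prob:good-block-tails}, combine them by conditional Cauchy--Schwarz, and multiply across future blocks that are conditionally independent given $P_L$. The only difference is cosmetic: you apply Lemma~\ref{prob:dyadic-mgf} as a black box with $(h_*,A_*,a_*,c_*)=(1,2,167/200,2\kappa)$, whereas the paper redoes that computation inline with the explicit choice $\theta_s=1/(24\kappa)$ and the common rate $2e^{-2s/3}$.
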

\end{samepage}

\begin{proof}
Given $P_L$, the good parents included in each sum are fixed.
By Lemma~\ref{prob:good-block-tails}, each event appearing at scale
$s$ has conditional probability at most $2e^{-2s/3}$, since
\[
 \frac{Bs}{200},\frac{Bs}{65}\ge\frac{2s}{3}.
\]

Set $\theta_s=1/(24\kappa)$ and consider either sum $U$
in~\eqref{prob:raw-block-majorant}.
If $s$ is its largest active scale, Lemma~\ref{prob:dyadic-mgf}
gives $U\le2\kappa\cdot2s=4\kappa s$.
At parameter $2\theta_s=1/(12\kappa)$, the exponential factor
is at most $e^{s/3}$, and the conditional probability of this
largest scale is at most $2e^{-2s/3}$.
Including the case with no active scale and extending this
sum to all positive integers gives
\[
 \begin{aligned}
 \E[e^{2\theta_s U}\mid P_L]
 &\le1+2\sum_{s=1}^{\infty}e^{-s/3}\\
 &=1+\frac{2e^{-1/3}}{1-e^{-1/3}}\\
 &\le1+\frac{2e^{-1/3}}{1-e^{-1/6}}
 =:M_s.
 \end{aligned}
\]
An absent sum has exponential moment one.

The two sums in $W_i$ use the same input block and need not be
independent. As in~\eqref{prob:one-split-mgf}, conditional
Cauchy--Schwarz combines their bounds at $\theta_s$:
\[
 \E[e^{\theta_sW_i}\mid P_L]\le M_s.
\]

Finally, phase lengths are determined by $n$, so different phases
use disjoint future input blocks.
These blocks remain mutually independent conditional on $P_L$,
and each $W_i$ depends only on its own block and $P_L$.
Their conditional exponential moments therefore multiply:
\[
 \begin{aligned}
 \E\left[\exp\left(\theta_s\sum_{i=1}^{k-1}W_i\right)
          \middle|P_L\right]
 &=\prod_{i=1}^{k-1}
   \E[e^{\theta_sW_i}\mid P_L]\\
 &\le M_s^{k-1}.
 \end{aligned}
\]
This proves~\eqref{prob:shrinking-mgf}.
\end{proof}

\subsection{A fixed-cut exponential moment}
\label{prob:assembly}\label{sec:assembly-main}

We now combine the three charge bounds.
Conditional on $P_L$, the partition error is fixed, while the
growing and future-block sums are independent. We first combine
their conditional moments, then average over $P_L$.

For the partition error, take
\begin{equation}\label{prob:explicit-error-constants}
 \theta_b=\frac{1}{16\kappa\cdot896}=\frac1{29360128},\qquad
 M_b=e^{62/3584}
 \left(1+A_p\frac{e^{-64/1792}}{1-e^{-1/1792}}\right).
\end{equation}
In this application of Lemma~\ref{prob:dyadic-mgf}, scales start
at $\nu=2$, so the small-scale sum is at most $62$.
The large-scale exponent is
$-\nu/896+2\theta_b(4\kappa)\nu=-\nu/1792$.
Summing over integers $\nu\ge64$ gives $M_b$.
The three moment bounds are summarized in
Table~\ref{tab:moment-parameters}. Use the common parameter
\begin{equation}\label{prob:common-explicit-parameter}
 \theta=\min\{\theta_g,\theta_b,\theta_s\}
       =\frac1{29360128}.
\end{equation}
These parameters are independent of the cut, input size, and
failure exponent.

\begin{table}[htbp]
\centering
\caption{Fixed-cut moment bounds, $\kappa=2048$. Here $L,k$ count growing and
shrinking phases. Bounds remain valid at smaller positive parameters.}
\label{tab:moment-parameters}
\begin{tabularx}{\linewidth}{@{}>{\raggedright\arraybackslash}p{0.23\linewidth}>{\raggedright\arraybackslash}p{0.24\linewidth}Y@{}}
\toprule
Charge & Parameter & Moment bound \\
\midrule
Growing sum $\sum_jG_j$ & $\theta_g=1/32768$
& At most $M_g^L$, conditional on $P_L$ \\
\addlinespace
Partition error $D_{\rm err}$ & $\theta_b=1/29360128$
& At most $M_b$, averaged over $P_L$ \\
\addlinespace
Future sum $\sum_iW_i$ & $\theta_s=1/(24\kappa)$
& At most $M_s^{k-1}$, conditional on $P_L$ \\
\bottomrule
\end{tabularx}
\end{table}

\begin{samepage}
\begin{lemma}[Witness decomposition and phase moments]\label{main:phase-bounds}
For $n\ge8192$ independent uniform inputs and a fixed $q\in(0,1)$,
the nonnegative variables $G_j,D_{\rm err},W_i$ from
\eqref{prob:growing-majorants}, \eqref{prob:pilot-penalty}, and
\eqref{prob:raw-block-majorant} satisfy, on the probability-one event
of distinct keys unequal to $q$,
\begin{equation}\label{main:execution-decomposition}
 C(q)\le D_1(L+k+1)+\sum_{j=0}^{L-1}G_j
           +D_{\rm err}+\sum_{i=1}^{k-1}W_i,
 \qquad D_1=7000.
\end{equation}
The moment bounds in Table~\ref{tab:moment-parameters}
hold at the common parameter $\theta$,
almost surely for the conditional bounds.
The growing and future-block sums are independent conditional on $P_L$.
\end{lemma}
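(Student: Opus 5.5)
Start from Lemma~\ref{acc:global-accounting}, which gives $C(q)\le (T-1)+2F+\sum_{\rm op}D_{\rm op}(q)$ with $F\le L+k$. Then sort the operations into five classes:
\begin{itemize}
\item[(i)] arrivals in growing phase $j$ together with the reallocation that follows it, for $j=0,\ldots,J$; for $j=J$ this following reallocation is the first shrinking reallocation;
\item[(ii)] arrivals in nonfinal shrinking phases $i<k$;
\item[(iii)] the reallocations at the start of phases $i+1$ with $2\le i+1<k$;
\item[(iv)] the reallocation into the final phase $k$;
\item[(v)] the arrivals of the final phase $k$.
\end{itemize}
Classes (iv) and (v) together are charged less than $2060$ by Lemma~\ref{acc:global-accounting}. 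For class (i), \eqref{acc:charge-majorant} holds with $\kappa=2048$. Every indicator at scale $s<64$ is bounded by one, and this constant is absorbed into $D_g$. The remaining indicators dominate the local-overflow events: for arrivals directly by definition, and for reallocations via \eqref{prob:all-labels} and Lemma~\ref{acc:old-labels}, since $\mathcal E^{\rm re}_{j,s}$ contains local reallocation overflow. Hence the paired charge is at most $G_j$.

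For classes (ii) and (iii), split \eqref{acc:charge-majorant} at each scale by whether the selected parent is good or bad. A phase-$i$ parent with $2s$ current buckets is $J_{q,2^is}$, and the next-phase parent is $J_{q,2^{i+1}s}$. The bad-parent terms, including every parent with fewer than $64$ initial buckets, are dominated in total by $D_{\rm err}$ through Lemma~\ref{prob:pilot-mgf}.

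For good parents, the arrival witness $N_i(I)>\capc(I)\ge Bs$ implies the first indicator in \eqref{prob:raw-block-majorant}. The reallocation witness at the start of phase $i+1<k$ implies $N_i(I)<1.1Bs$ by Lemma~\ref{acc:underflow}. The per-scale charge coefficient is $2\kappa s$ by \eqref{acc:component-charge}, and this is exactly the coefficient used in $W_i$, using capacities at most $B+7\le 1024$. Thus the good charges are at most $\sum_{i<k}W_i$. What remains are the selected-leaf charges $\kappa$ per operation, at most $2(L+k)$ operations, together with the constants $T-1$, $2F$ and $2060$. All of these fit inside $D_1(L+k+1)$ with $D_1=7000$: we have $2\kappa\le 4096$ per phase plus $2$ for blocks, and $511+2060<7000$.

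For the moment bounds, Lemma~\ref{prob:reverse-mgf} gives the growing bound at $\theta_g$, Lemma~\ref{prob:pilot-mgf} with the explicit constants \eqref{prob:explicit-error-constants} gives the error bound, and Lemma~\ref{prob:future-product} gives the future bound. Since $\theta\le$ each of these parameters, Jensen's inequality ($x\mapsto x^{\theta/\theta_*}$ is concave) shows that each bound persists at $\theta$.

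For conditional independence, note that $\sum_jG_j$ depends only on the order of arrivals within the first $m$ inputs, through $(P_0,\ldots,P_L)$. Each $W_i$ depends only on $P_L$ and the later input blocks. Under i.i.d.\ input, conditional on the set $P_L$, the internal order of the prefix is independent of the suffix samples, which gives the claimed independence.

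The main obstacle is the bookkeeping for class (ii)–(iii). We must check that every top-marked selected parent falls into exactly one covering term. In particular, the first shrinking reallocation must not be double-counted between $G_J$ and $D_{\rm err}$, which Lemma~\ref{prob:pilot-mgf} excludes explicitly. The reallocation into phase $k$ must also be left to the $2060$ constant. Finally, parents with fewer than $64$ initial buckets must be verified to be declared bad, so that $D_{\rm err}$ absorbs them.
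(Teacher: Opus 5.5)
Your proposal is correct and follows the paper's proof essentially step for step. You use the same assignment of operations to $G_j$, $D_{\mathrm{err}}$, $W_i$ and the baseline, the same Lemmas~\ref{prob:reverse-mgf}, \ref{prob:pilot-mgf} and~\ref{prob:future-product} for the moments, and the same conditional-independence observation. The differences are cosmetic: you lower the parameters to $\theta$ by Jensen, where the paper simply uses $e^{\theta x}\le e^{\theta_* x}$ for $x\ge0$, and you charge $\kappa$ for the selected leaf of every operation, including growing ones already absorbed in $D_g$, which still fits within $D_1=7000$. Note also that double counting, for example of the first shrinking reallocation, would only loosen the upper bound, so it is not a real risk; the actual requirement is that no operation is left uncovered.
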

\end{samepage}

\phantomsection\label{proof:phase-bounds}
\begin{proof}
The $G_j$ cover growing arrivals and following reallocations,
including the first shrinking reallocation. Thereafter, $D_{\rm err}$
covers bad-parent charges, and $W_i$ covers good-parent phase-$i$
arrivals and the next nonfinal reallocation.
Lemma~\ref{acc:global-accounting} bounds the remaining charges by
\[
 (T-1)+2(L+k)+2\kappa(k-1)+2060\le7000(L+k+1),
\]
proving~\eqref{main:execution-decomposition}.
The moment bounds follow from Lemmas~\ref{prob:reverse-mgf},
\ref{prob:pilot-mgf}, and~\ref{prob:future-product}.
Nonnegativity permits decreasing their parameters to $\theta$.

Given the unordered $P_L$,
the growing sum depends on its uniform ordering, independent of
the later inputs determining the future-block sum.
\end{proof}

\begin{samepage}
\begin{proposition}[Fixed-cut exponential moment]
\label{prob:fixed-cut-mgf}
There are absolute constants $\theta>0$, $A_0\ge1$, and $\gamma_0\ge0$
such that \EmpiricalSort{}, as specified above, satisfies
\begin{equation}
 \E[e^{\theta C(q)}]\le A_0n^{\gamma_0}
 \qquad\text{for every deterministic }q\in(0,1)
 \label{prob:fixed-cut-bound}
\end{equation}
on $n\ge2$ independent uniform inputs in $[0,1]$, uniformly in $q,n$.
\end{proposition}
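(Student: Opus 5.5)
The plan is to combine the decomposition \eqref{main:execution-decomposition} of Lemma~\ref{main:phase-bounds} with the three moment bounds of Table~\ref{tab:moment-parameters}, taking $\theta$ to be the common parameter \eqref{prob:common-explicit-parameter}. For small inputs, $n<8192$, sequential placement gives $C(q)\le n-1<8192$. Hence $\E[e^{\theta C(q)}]\le e^{8192\theta}$, and this case is absorbed into $A_0$.

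Now take $n\ge8192$. Exponentiating \eqref{main:execution-decomposition} at parameter $\theta$ gives
\[
 e^{\theta C(q)}\le e^{\theta D_1(L+k+1)}\,
 e^{\theta\sum_jG_j}\,e^{\theta D_{\rm err}}\,e^{\theta\sum_iW_i}.
\]
We condition on $P_L$. The factor $e^{\theta D_{\rm err}}$ is $P_L$-measurable, because $D_{\rm err}$ depends only on $P_L$. Lemma~\ref{main:phase-bounds} states that the growing sum and the future-block sum are conditionally independent given $P_L$, so the conditional expectation of their product factors. The growing factor is at most $M_g^L$ by Lemma~\ref{prob:reverse-mgf}, and the future factor is at most $M_s^{k-1}$ by Lemma~\ref{prob:future-product}. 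Both bounds hold at $\theta\le\theta_g,\theta_s$, since nonnegativity lets us lower the parameter via Jensen/monotonicity. Averaging over $P_L$ and applying Lemma~\ref{prob:pilot-mgf} at $\theta\le\theta_b$, we get
\[
 \E[e^{\theta C(q)}]\le e^{\theta D_1(L+k+1)}M_g^LM_s^{k-1}M_b .
\]

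The last step is to turn the phase counts into a power of $n$. We have $L\le\log_2(n/8T)+1\le\log_2 n$, and $k\le\log_2 N\le\log_2 n$. Therefore
\[
 e^{\theta D_1(L+k+1)}M_g^LM_s^{k-1}
 \le e^{\theta D_1}\,n^{(2\theta D_1+\log M_g+\log M_s)/\log 2},
\]
using $M_g,M_s\ge1$. So we can set $\gamma_0=(2\theta D_1+\log M_g+\log M_s)/\log2$ and $A_0=\max\{e^{\theta D_1}M_b,\,e^{8192\theta}\}$. These are absolute constants, since all of $M_g,M_s,M_b,D_1,\theta$ are. Uniformity in $q$ comes from the uniformity already established in the cited lemmas.

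The one delicate point is the conditional factorization. The growing sum depends on the ordering within $P_L$, while the $W_i$ depend on the later inputs. Given the unordered set $P_L$, the within-prefix order is a uniform permutation independent of the future i.i.d.\ block, which is exactly the independence claimed in Lemma~\ref{main:phase-bounds}. The bound of Lemma~\ref{prob:reverse-mgf} is stated conditionally on $P_L$ for every cut induced there, so it applies almost surely for deterministic $q$. Everything else is bookkeeping.
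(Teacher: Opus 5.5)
Your proposal is correct and follows the paper's proof essentially step for step. You condition on $P_L$, treat $D_{\rm err}$ as $P_L$-measurable, factor the growing and future-block sums by their conditional independence, apply the three moment bounds at the common parameter $\theta$, convert $L,k\le\log_2 n$ into the factor $n^{\gamma_0}$ with the same $\gamma_0$, and absorb $n<8192$ into $A_0$ (your $e^{8192\theta}$ versus the paper's $e^{8190\theta}$ is immaterial).
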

\end{samepage}

\begin{proof}
\leavevmode\phantomsection\label{proof:fixed-cut}%
For $n\ge8192$, use $\theta$ from
\eqref{prob:common-explicit-parameter}.
Conditional on $P_L$, $D_{\rm err}$ is fixed and the other sums
are independent. Lemma~\ref{main:phase-bounds} gives
\[
 \begin{aligned}
 \E[e^{\theta C(q)}]
 &\le e^{\theta D_1(L+k+1)}
 \E\!\left[e^{\theta D_{\rm err}}
   \E[e^{\theta\sum_jG_j}\mid P_L]
   \E[e^{\theta\sum_iW_i}\mid P_L]\right]\\
 &\le e^{\theta D_1(L+k+1)}M_g^L M_s^{k-1}M_b.
 \end{aligned}
\]
The last step uses the conditional bounds for the two sums and
the unconditional moment of $D_{\rm err}$.
Since $L,k\le\log_2n$, take
\[
 \gamma_0=\frac{2\theta D_1+\log M_g+\log M_s}{\log2},\qquad
 A_0=\max\{M_be^{\theta D_1},e^{8190\theta}\}.
\]
The second term covers $2\le n<8192$, where
$C(q)\le n-1\le8190$.
\end{proof}

\subsection{From value cuts to arbitrary fixed inputs}
\label{sec:transfer}\label{sec:rank-overview}

For fixed strictly ordered keys, let $C_a$ count adjacent output pairs
with one rank at most $a$ and the other greater than $a$, $1\le a<n$.
Comparison invariance makes the output rank permutation depend only
on the arrival comparison pattern and $n$, not numerical gaps.
We first transfer the value-cut bound to these rank cuts, then use
the numerical gaps to recover the actual cost.

\subsubsection{Comparison-invariant conditioning}

A fixed value cut does not have a fixed rank.
To recover rank cut $a$ from Proposition~\ref{prob:fixed-cut-mgf},
we condition on exactly $a$ observations lying below a
deterministic $q$, without changing the arrival-rank permutation.

\begin{samepage}
\begin{lemma}[Transfer at a binomial mode]\label{tr:mode}
Suppose a deterministic comparison-based algorithm satisfies
\[
 \E[e^{\theta C(q)}]\le A_0 n^{\gamma_0}
\]
under i.i.d.\ uniform input, uniformly in deterministic
$q\in(0,1)$ and $n$. Then, for every fixed strict input exposed
in a uniform random permutation,
\[
 \E_\pi[e^{\theta C_a}]\le2A_0n^{\gamma_0+1}
 \qquad(1\le a<n).
\]
\end{lemma}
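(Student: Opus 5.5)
Take $q=a/n$ and let $E_a$ be the event that exactly $a$ of the $n$ i.i.d.\ uniform inputs lie below $q$. The plan is to show that, conditional on $E_a$, the value-cut count $C(q)$ has the same law as the rank-cut count $C_a$ for any fixed strict input in uniformly random order. Then the transfer costs only the factor $1/\Prb(E_a)$.

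\emph{Step 1 (the conditioning event).} The count below $q$ is $\operatorname{Bin}(n,a/n)$. Its mean $a$ is an integer, so $a$ is a mode, and the largest of the $n+1$ point masses is at least $1/(n+1)$. Hence $\Prb(E_a)\ge1/(n+1)\ge1/(2n)$, using $n\ge1$.

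\emph{Step 2 (decomposition of the uniform sample).} Almost surely the uniform keys are distinct and none equals $q$. Write the sample as its sorted values $U_{(1)}<\cdots<U_{(n)}$ together with the arrival-rank permutation $\tau$, where arrival $t$ has rank $\tau(t)$. For an i.i.d.\ continuous sample, $\tau$ is uniform and independent of the order statistics. The event $E_a=\{U_{(a)}<q<U_{(a+1)}\}$ is measurable with respect to the order statistics. So conditional on $E_a$, $\tau$ is still uniform.

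\emph{Step 3 (comparison invariance).} The algorithm is deterministic and uses only comparisons. Its whole execution, and in particular the output arrangement of ranks, is therefore a function of $\tau$ and $n$ alone. On $E_a$, an output entry lies above $q$ exactly when its rank exceeds $a$, so $C(q)=c_a(\tau)$ for a fixed function $c_a$. For a fixed strict input presented in random order $\pi$, the rank permutation is uniform and the same function gives $C_a=c_a(\pi\text{-ranks})$. Thus $\E_\pi[e^{\theta C_a}]=\E[e^{\theta C(q)}\mid E_a]$.

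\emph{Step 4 (combine).} The bound follows from
\[
\E[e^{\theta C(q)}\mid E_a]\le\E[e^{\theta C(q)}]/\Prb(E_a)\le(n+1)A_0n^{\gamma_0}\le2A_0n^{\gamma_0+1}.
\]
The main point needing care is Step 3. One must make precise that the tie-free uniform sample induces exactly the rank pattern of a fixed strict input. Boundaries at virtual endpoints and labels are pointers to observed keys, so they are determined by ranks, and the event $E_a$ does not bias $\tau$.
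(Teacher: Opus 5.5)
Your proposal is correct and follows essentially the same route as the paper. Both arguments condition the i.i.d.\ experiment at $q=a/n$ on exactly $a$ values below $q$, then use the independence of the arrival-rank permutation from the order statistics together with comparison invariance to identify $C(q)$ with $C_a$. Both then pay the factor $1/\Prb(N_q=a)\le n+1$ coming from the binomial mode. The one difference is that the paper proves $a$ is the mode explicitly through the ratio $\Prb(N_q=r+1)/\Prb(N_q=r)=\frac{n-r}{r+1}\cdot\frac{a}{n-a}$, whereas you cite this as the standard fact about binomials with integer mean.
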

\end{samepage}

\begin{proof}
Fix $q=a/n$ in the uniform experiment and let
$N_q=\sum_{t=1}^n\mathbf1[X_t\le q]$.
On $N_q=a$, the value cut $q$ is the rank cut $a$, so $C(q)=C_a$.
The arrival-rank permutation is independent of the sorted values,
hence of $N_q$. Conditioning therefore preserves its uniform law.

For $N_q\sim\operatorname{Bin}(n,a/n)$, compare adjacent counts:
\[
 \frac{\Prb(N_q=r+1)}{\Prb(N_q=r)}
 =\frac{\binom n{r+1}}{\binom nr}\frac{q}{1-q}
 =\frac{n-r}{r+1}\frac{a}{n-a},
 \qquad 0\le r<n.
\]
The ratio exceeds one for $r<a$ (probabilities increase) and is
below one for $r\ge a$ (they decrease). Thus $a$ is the most likely
of $n+1$ counts, giving $p_a:=\Prb(N_q=a)\ge1/(n+1)$.
By comparison invariance and conditioning,
\[
 \E_\pi[e^{\theta C_a}]
 =\frac{\E_{\rm iid}[e^{\theta C(q)}\mathbf1[N_q=a]]}{p_a}
 \le(n+1)\E_{\rm iid}[e^{\theta C(q)}]
 \le2A_0n^{\gamma_0+1}.
\] 
\end{proof}

\subsubsection{Metric gaps and the upper tail}
\label{tr:metric-tail}

\begin{samepage}
\begin{lemma}[Convex combination of crossings]\label{tr:gaps}
Let the keys have a fixed strict order and associated nondecreasing real values
$x_{(1)}\le\cdots\le x_{(n)}$. If $\OPT(X)=x_{(n)}-x_{(1)}>0$, then every placement satisfies
\[
 \rho(A)=\sum_{a=1}^{n-1}w_a C_a,\qquad
 w_a=\frac{x_{(a+1)}-x_{(a)}}{\OPT(X)},\quad
 w_a\ge0,\quad\sum_a w_a=1.
\]
Consequently,
$\E[e^{\theta\rho(A)}]\le\sup_a\E[e^{\theta C_a}]$.
\end{lemma}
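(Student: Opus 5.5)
The identity is a discrete coarea formula, and the moment bound then follows from convexity. I would first prove the identity edge by edge. Fix an adjacent output pair $(A[i],A[i+1])$ whose keys have ranks $u<v$; the case $u=v$ cannot occur because the keys are strict. Since the values are nondecreasing in rank, its length telescopes:
\[
 |A[i+1]-A[i]|=x_{(v)}-x_{(u)}=\sum_{a=u}^{v-1}\bigl(x_{(a+1)}-x_{(a)}\bigr).
\]
The rank cut $a$ separates this pair exactly when $u\le a<v$, so the right-hand side equals $\sum_{a=1}^{n-1}(x_{(a+1)}-x_{(a)})\,\mathbf1[\text{pair crosses }a]$. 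Summing over the $n-1$ edges and exchanging the finite sums gives $\cost(A)=\sum_a (x_{(a+1)}-x_{(a)})C_a$. Dividing by $\OPT(X)>0$ yields $\rho(A)=\sum_a w_aC_a$. The weights are nonnegative by monotonicity, and they sum to $(x_{(n)}-x_{(1)})/\OPT(X)=1$ by telescoping. Ties among values are harmless: they only produce $w_a=0$ for cuts between equal values.

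For the consequence, I would apply convexity of $y\mapsto e^{\theta y}$ pointwise. Since $(w_a)$ is a deterministic probability vector, Jensen's inequality gives $e^{\theta\rho(A)}\le\sum_a w_a e^{\theta C_a}$ on every execution. The weights do not depend on $\pi$ or $\sigma$ because they are fixed by the input multiset. Taking expectations then gives $\E[e^{\theta\rho(A)}]\le\sum_a w_a\E[e^{\theta C_a}]\le\sup_a\E[e^{\theta C_a}]$.

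There is no real obstacle. The only point needing care is that the weights are deterministic, which is what allows the expectation to pass through the convex combination. I would also note that the identity is purely combinatorial and holds for every placement, so it needs no assumption about the algorithm.
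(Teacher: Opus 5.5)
Your proposal is correct and follows essentially the same route as the paper. The paper also telescopes each edge's length over the gaps it spans, sums to get $\cost(A)=\sum_a(x_{(a+1)}-x_{(a)})C_a$, divides by $\OPT(X)$, and then applies convexity of the exponential with the deterministic weights before taking expectations.
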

\end{samepage}

\phantomsection\label{proof:gaps}
\begin{proof}
An edge joining ranks $r<s$ spans exactly the gaps $r,\ldots,s-1$.
Summing over edges gives
\[
 \cost(A)=\sum_{a=1}^{n-1}(x_{(a+1)}-x_{(a)})C_a.
\]
Divide by $\OPT(X)$. Each weight is the fraction of the total
value range occupied by its gap, so the weights sum to one.
Convexity and expectation give
\[
 \E[e^{\theta\sum_aw_aC_a}]
 \le\sum_aw_a\E[e^{\theta C_a}]
 \le\sup_a\E[e^{\theta C_a}].
\]
No independence of crossing counts is needed.
\end{proof}

\phantomsection\label{proof:upper}
\begin{proof}[Proof of Theorem~\ref{thm:upper}]
For distinct values, Proposition~\ref{prob:fixed-cut-mgf} and
Lemmas~\ref{tr:mode} and~\ref{tr:gaps} give
\begin{equation}\label{eq:ratio-mgf}
 \E[e^{\theta\rho(A)}]\le2A_0n^{\gamma_0+1}.
\end{equation}

For ties, we verify that the priority order is independent of the
arrival permutation. Let $\pi$ uniformly permute occurrences and
let $\sigma$ be an independent uniform priority permutation
assigned by arrival time.
Occurrence $i$ receives priority
$\xi_i=\sigma_{\pi^{-1}(i)}$. For each fixed $\pi$, the entire vector
$\xi$ is uniform, hence independent of $\pi$.
Conditional on $\xi$, the lexicographic keys $(x_i,\xi_i)$ form a fixed
strict order arriving uniformly. The output stores original values.
In strict key order, these values are nondecreasing. The strict-key
argument and Lemma~\ref{tr:gaps} apply, with zero weights within ties.
Averaging over $\xi$ proves
\eqref{eq:ratio-mgf}. Constant inputs have ratio one and are covered
by enlarging $A_0$ if necessary.

The exponential Markov inequality now gives
\[
 \Prb[\rho(A)>t]\le2A_0n^{\gamma_0+1}e^{-\theta t}.
\]
Taking $a_0=(\gamma_0+1)/\theta$, $a_1=\log(2A_0)/\theta$, and
$t=a_0\log n+a_1+z$ proves the claimed exponential tail.
For $z=d\log n/\theta$, the failure probability is at most $n^{-d}$.
Since $\log n\ge\log2$, this gives the high probability form with
$K_d=a_0+d/\theta+a_1/\log2$.
Section~\ref{alg:implementation} supplies the implementation bound.
\end{proof}

\begin{samepage}
\begin{corollary}[Simultaneous rank-cut control]\label{cor:all-cuts}
For distinct fixed keys, there are absolute $c_{\mathrm{cut}},c_{\mathrm{add}},\theta>0$ such that
\[
 \Prb_\pi\!\left[
 \max_{1\le a<n}C_a>c_{\mathrm{cut}}\log n+c_{\mathrm{add}}+z
 \right]\le e^{-\theta z}\qquad(z\ge0).
\]
In particular, with high probability a single output rank arrangement has
$O(\log n)$ competitive ratio simultaneously for every nonconstant
nondecreasing assignment of real values to those ranks.
\end{corollary}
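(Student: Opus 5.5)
The plan is a union bound over the $n-1$ rank cuts, using the per-cut exponential moment from Lemma~\ref{tr:mode}. For distinct fixed keys, the algorithm is deterministic and comparison-based. Proposition~\ref{prob:fixed-cut-mgf} and Lemma~\ref{tr:mode} therefore give, for every $1\le a<n$,
\[
 \E_\pi[e^{\theta C_a}]\le2A_0n^{\gamma_0+1},
\]
with the same absolute $\theta,A_0,\gamma_0$. By the exponential Markov inequality, for every $t\ge0$,
\[
 \Prb_\pi[C_a>t]\le2A_0n^{\gamma_0+1}e^{-\theta t}.
\]
A union bound over the $n-1<n$ cuts then yields
\[
 \Prb_\pi\Bigl[\max_aC_a>t\Bigr]\le2A_0n^{\gamma_0+2}e^{-\theta t}.
\]

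Next, choose the constants. Set $c_{\mathrm{cut}}=(\gamma_0+2)/\theta$ and $c_{\mathrm{add}}=\log(2A_0)/\theta$, and take $t=c_{\mathrm{cut}}\log n+c_{\mathrm{add}}+z$. The right-hand side becomes exactly $e^{-\theta z}$. No independence between cuts is needed, because the union bound only uses the marginal tails. For small $n$ (the sequential branch, $n<8192$), the bound is already covered: $A_0$ was enlarged in Proposition~\ref{prob:fixed-cut-mgf} so that the moment bound holds for all $n\ge2$.

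For the ``in particular'' clause, observe that the output rank arrangement, and hence each $C_a$, depends only on the relative order of the keys and the arrival permutation, not on the numerical values. Now take any nonconstant nondecreasing assignment of reals $x_{(1)}\le\cdots\le x_{(n)}$ to the ranks. Lemma~\ref{tr:gaps} writes
\[
 \rho(A)=\sum_aw_aC_a
\]
with $w_a\ge0$ and $\sum_aw_a=1$. Hence $\rho(A)\le\max_aC_a$ deterministically, simultaneously for all such assignments. On the complement of the event above, with $z=d\log n/\theta$, every such assignment therefore has ratio $O(\log n)$, with failure probability at most $n^{-d}$.

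I expect no real obstacle here. The only points to check are that the extra factor $n$ from the union bound is absorbed into $c_{\mathrm{cut}}$ via the exponent $\gamma_0+2$, and that the rank arrangement is genuinely independent of the value assignment. The latter holds by comparison invariance, since all partition boundaries and routing decisions use only key comparisons.
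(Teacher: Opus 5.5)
Your proposal is correct and matches the paper's proof: a union bound over the $n-1$ rank cuts using the per-cut moment bound $2A_0n^{\gamma_0+1}$ from Lemma~\ref{tr:mode}, the same constants $c_{\mathrm{cut}}=(\gamma_0+2)/\theta$ and $c_{\mathrm{add}}=\log(2A_0)/\theta$, and $\rho(A)=\sum_a w_aC_a\le\max_aC_a$ for the simultaneous guarantee. Your added remarks on small $n$ and on comparison invariance are accurate and consistent with the paper.
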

\end{samepage}

\begin{proof}
Exponential Markov inequality and a union bound give
\[
 \Prb_\pi[\max_aC_a>t]
 \le\sum_{a=1}^{n-1}e^{-\theta t}\E_\pi[e^{\theta C_a}]
 \le2A_0n^{\gamma_0+2}e^{-\theta t}.
\]
Take $c_{\mathrm{cut}}=(\gamma_0+2)/\theta$ and
$c_{\mathrm{add}}=\log(2A_0)/\theta$.
Every convex combination in Lemma~\ref{tr:gaps} is bounded by
$\max_a C_a$, proving the simultaneous guarantee.
\end{proof}

\subsection{Unknown i.i.d.\ input}
\label{tr:iid}

\phantomsection\label{proof:unknown-iid}
\begin{proof}[Proof of Corollary~\ref{cor:unknown-iid}]
Independently permute the complete i.i.d.\ sample. Conditional on the
sample, Theorem~\ref{thm:upper} applies with uniform constants.
Averaging preserves the failure bound, and permutation preserves the
i.i.d.\ law. The shuffle is only for analysis. The algorithm neither
shuffles its input nor needs to know its law.
No moment assumption is needed, since $\rho\le n-1$ for nonconstant
inputs and $\rho=1$ otherwise.
\end{proof}

Integrating the upper tail also gives $\E[\rho]=O(\log n)$ uniformly
over fixed multisets and unknown i.i.d.\ laws.

\section{Lower bounds and instance-sensitive guarantees}
\label{sec:lower-and-instances}

We first show that every randomized online algorithm incurs cost
$\Omega(\log n)$ with high probability under i.i.d.\ uniform input, then
extract a fixed hard input in random order. We then use numerical gaps
to sharpen our upper bounds for structured inputs.

\subsection{Worst-case lower bounds}
\label{lb:section}\label{sec:lower-overview}

We prove Theorems~\ref{thm:uniform-lower} and~\ref{thm:random-lower}
by adapting Hu's suffix-scale argument~\cite{Hu2026}.
With $i$ arrivals remaining, we show that, with high probability over
the remaining values, every completion has many long edges.
Bounds at doubling suffix sizes concern the same final array;
integrating them gives a lower bound on realized cost.

Let $X_1,\ldots,X_n$ be independent uniforms on $[0,1]$.
Fix an online algorithm $\mathcal A$ with random tape $\zeta$
independent of the input. Each placement is a measurable function of
$(n,\zeta,X_1,\ldots,X_t)$, uses an empty cell, and is irrevocable.
The algorithm may use the full numerical values.
For its completed array $A$, define
\[
 \operatorname{long}_A(u)
 :=\bigl|\{r\in\{1,\ldots,n-1\}:|A[r+1]-A[r]|>u\}\bigr|.
\]
Each edge is counted for thresholds below its length, so integration
recovers the total edge length:
\begin{equation}
 \cost(A)=\int_0^1\operatorname{long}_A(u)\,du.
 \label{lb:layer-cake}
\end{equation}

\subsubsection{Frozen anchors and suffix components}

Fix $1\le i\le n$ and freeze the algorithm after $n-i$ arrivals.
An occupied cell adjacent to an empty cell is an \emph{active anchor};
write $\mathsf{Anc}_i$ for their set. At most $i$ runs of empty cells,
each with at most two occupied boundary cells, give
\begin{equation}
 |\mathsf{Anc}_i|\le2i.
 \label{lb:anchor-count}
\end{equation}
The empty cells are precisely the positions that will hold the suffix.
Reindex the remaining inputs in arrival order as
$Y_j=X_{n-i+j}$, $1\le j\le i$.
Condition on $\cF_i^{\mathrm{fr}}=\sigma(\zeta,X_1,\ldots,X_{n-i})$:
the state and anchors are fixed, while the suffix values remain
independent uniforms.

For $\delta>0$, connect suffix occurrences $j,k$ if
$|Y_j-Y_k|\le\delta$; let $\operatorname{cc}_i(\delta)$ count the
components. Define
\[
 \operatorname{near}_i(\delta)
 :=\bigl|\{j:\exists h\in\mathsf{Anc}_i,
                 \ |Y_j-\operatorname{val}(h)|\le\delta\}\bigr|,
\]
where $\operatorname{val}(h)$ is the anchor's value.
This counts suffix occurrences, not anchors, and counts each only once.
Both counts depend only on the suffix values and frozen anchors,
not on future placements.

\begin{lemma}[Components force long edges]
\label{lb:component-edges}
For every frozen state, every suffix realization, and every legal completion,
\begin{equation}
 \operatorname{long}_A(\delta)
 \ge\operatorname{cc}_i(\delta)-\operatorname{near}_i(\delta)-1.
 \label{lb:component-inequality}
\end{equation}
\end{lemma}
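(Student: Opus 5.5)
The argument is purely combinatorial and follows the sketch in the technical overview. Fix the frozen state, the suffix values, and a legal completion, and delete from the path $A[1],\dots,A[n]$ every edge of length greater than $\delta$. This leaves $\operatorname{long}_A(\delta)+1$ maximal \emph{blocks} of consecutive cells, each joined internally by edges of length at most $\delta$. Within a block, mark the cells that were empty at the freeze, i.e.\ the suffix positions. These split into maximal contiguous \emph{runs} of suffix positions inside the block. The whole argument is counting runs against proximity components.

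First, every run lies in a single proximity component. Consecutive cells of a run are both suffix occurrences, and they are joined by a short edge, so their values differ by at most $\delta$ and they are adjacent in the proximity graph. Every suffix occurrence lies in some run, hence every component of the proximity graph meets at least one run. So the number of runs is at least $\operatorname{cc}_i(\delta)$.

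Second, consider a run that is not the first run of its block. The cell immediately to its left lies in the same block, so it is joined to the run's first cell by a short edge. It is not a suffix position, since the run is maximal. If it were a suffix position it would belong to the same run; if it lay in another block, the current run would be the first of its block. So this left neighbour is an occupied cell at freeze time adjacent to an empty cell, i.e.\ an active anchor in $\mathsf{Anc}_i$. The first suffix occurrence of the run is therefore within $\delta$ of an anchor value and is counted by $\operatorname{near}_i(\delta)$. Distinct runs have distinct first cells, so this charging is injective into the occurrences counted by $\operatorname{near}_i(\delta)$.

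Combining the two counts: $\operatorname{cc}_i(\delta)\le\#\text{runs}\le\#\text{blocks}+\operatorname{near}_i(\delta)$, since each block contributes at most one uncharged (first) run. Substituting $\#\text{blocks}=\operatorname{long}_A(\delta)+1$ gives \eqref{lb:component-inequality}. The only delicate step is the anchor identification in the second paragraph. It requires the neighbouring cell to have been occupied at the freeze (true, since it is not a suffix position) and to be adjacent to an empty cell at that moment (true, since the run's first cell was empty then). With this checked, no probabilistic input is used.
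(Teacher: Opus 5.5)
Your proposal is correct and follows the paper's argument. Like the paper, you delete the long edges, split each of the $\operatorname{long}_A(\delta)+1$ blocks into maximal suffix runs lying in single proximity components, and charge every non-first run injectively to $\operatorname{near}_i(\delta)$ through the anchor immediately preceding it.
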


\phantomsection\label{proof:component-edges}
\begin{proof}
Delete all final edges longer than $\delta$, leaving
$\operatorname{long}_A(\delta)+1$ blocks of short edges.
Within each block, call a maximal consecutive sequence of suffix
positions a \emph{suffix run}. Each run belongs to one proximity component.

The first suffix run need not follow a prefix cell within its block;
a block may consist entirely of suffix positions. Every later run starts
immediately after a prefix cell, which was next to an empty cell
when frozen and is therefore an anchor.
Their edge is short, so the run's first occurrence is counted by
$\operatorname{near}_i(\delta)$.
These occurrences are distinct. Since every component appears in a run,
\[
 \operatorname{cc}_i(\delta)
 \le \#\{\text{suffix runs}\}
 \le \operatorname{long}_A(\delta)+1+\operatorname{near}_i(\delta),
\]
proving the claim.
\end{proof}

\subsubsection{Concentration at one suffix size}

Set $\delta_i=1/(100i)$. We show that components are numerous and
anchor neighbors are few. For $0<p<a<1$, write the Bernoulli relative entropy
\[
 \operatorname{kl}(a\Vert p)
 :=a\ln\frac ap+(1-a)\ln\frac{1-a}{1-p}.
\]
In the following bound, $p$ bounds each conditional success probability,
while $a>p$ is the success fraction to be exceeded.

\begin{lemma}[Conditional Bernoulli bound]
\label{lb:bernoulli}
Fix $0<p<1$. Let nonnegative integer-valued random variables
$\omega_1,\ldots,\omega_i$ be adapted to sigma-fields
$\mathcal H_0\subseteq\cdots\subseteq\mathcal H_i$, with
$\Prb(\omega_j\ge1\mid\mathcal H_{j-1})\le p$ almost surely.
For $\lambda\ge0$,
\begin{equation}
 \E\left[
   e^{\lambda\sum_{j=1}^i\mathbf1[\omega_j\ge1]}
   \mid\mathcal H_0
 \right]
 \le(1-p+pe^\lambda)^i,
 \label{lb:bernoulli-mgf}
\end{equation}
and, for $p<a<1$,
\[
 \Prb\left[
   \sum_{j=1}^i\mathbf1[\omega_j\ge1]>ai
   \mid\mathcal H_0
 \right]
 \le e^{-i\operatorname{kl}(a\Vert p)}.
\]
\end{lemma}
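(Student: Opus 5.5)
We prove the moment bound~\eqref{lb:bernoulli-mgf} by induction on $i$ via the tower property, and then obtain the tail bound from it by an optimized exponential Markov inequality.

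\emph{Moment bound.} Write $\eta_j=\mathbf1[\omega_j\ge1]$ and $S_\ell=\sum_{j\le\ell}\eta_j$. Since $\eta_j$ is $\mathcal H_j$-measurable and takes values in $\{0,1\}$, we have $e^{\lambda\eta_j}=1+(e^\lambda-1)\eta_j$. Because $\lambda\ge0$, the coefficient $e^\lambda-1$ is nonnegative, so the hypothesis gives
\[
 \E[e^{\lambda\eta_j}\mid\mathcal H_{j-1}]
 =1+(e^\lambda-1)\Prb(\omega_j\ge1\mid\mathcal H_{j-1})
 \le1-p+pe^\lambda.
\]
The factor $e^{\lambda S_{j-1}}$ is $\mathcal H_{j-1}$-measurable and nonnegative. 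Conditioning on $\mathcal H_{i-1}$ therefore gives
\[
 \E[e^{\lambda S_i}\mid\mathcal H_0]\le(1-p+pe^\lambda)\,\E[e^{\lambda S_{i-1}}\mid\mathcal H_0].
\]
Iterating this step down to $S_0=0$ proves~\eqref{lb:bernoulli-mgf}. All quantities are bounded, since $S_i\le i$, so the conditional expectations are finite and no integrability issue arises.

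\emph{Tail bound.} For every $\lambda\ge0$, the conditional Markov inequality and the moment bound give
\[
 \Prb[S_i>ai\mid\mathcal H_0]\le e^{-\lambda ai}(1-p+pe^\lambda)^i.
\]
We then optimize over $\lambda$. The minimizer is $e^\lambda=a(1-p)/(p(1-a))$, which is at least one precisely because $a>p$, so this choice is admissible. Substituting it gives $1-p+pe^\lambda=(1-p)/(1-a)$ and $e^{-\lambda a}=\bigl(p(1-a)/(a(1-p))\bigr)^a$. Their product is $\exp(-\operatorname{kl}(a\Vert p))$, and raising it to the power $i$ yields the stated bound.

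The only delicate point is that the per-step probability bound may be used in the one-step estimate only because $e^\lambda-1\ge0$. This is also why the statement restricts to $\lambda\ge0$ and to $a>p$. The argument requires no independence among the $\omega_j$; we use only their adaptedness and the almost-sure conditional bound.
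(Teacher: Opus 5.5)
Your proposal is correct and follows the paper's proof essentially verbatim: the same one-step identity $\E[e^{\lambda\mathbf1[\omega_j\ge1]}\mid\mathcal H_{j-1}]=1+(e^\lambda-1)\Prb(\omega_j\ge1\mid\mathcal H_{j-1})$ iterated via the tower property, followed by conditional exponential Markov at $e^\lambda=a(1-p)/(p(1-a))$. You also verify the algebra that the paper leaves implicit, namely that this choice has $\lambda\ge0$ and yields $\exp(-\operatorname{kl}(a\Vert p))$.
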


\begin{proof}
For $\lambda\ge0$,
\[
 \E[e^{\lambda\mathbf1[\omega_j\ge1]}\mid\mathcal H_{j-1}]
 =1+(e^\lambda-1)\Prb(\omega_j\ge1\mid\mathcal H_{j-1})
 \le1-p+pe^\lambda.
\]
Iterating conditional expectations gives~\eqref{lb:bernoulli-mgf}.
Conditional exponential Markov inequality with
$e^\lambda=a(1-p)/(p(1-a))$ gives the tail.
\end{proof}

\begin{lemma}[Many components and few anchor neighbors]
\label{lb:single-statistics}
For every $1\le i\le n$, conditionally on $\cF_i^{\mathrm{fr}}$,
\begin{align}
 \Prb\left[\operatorname{cc}_i(\delta_i)<\frac{4i}{5}
                  \mid\cF_i^{\mathrm{fr}}\right]
 &\le e^{-c_{\mathrm{comp}}i},
 \label{lb:components-tail}\\
 \Prb\left[\operatorname{near}_i(\delta_i)>\frac{i}{10}
                  \mid\cF_i^{\mathrm{fr}}\right]
 &\le e^{-c_{\mathrm{anc}}i},
 \label{lb:anchors-tail}
\end{align}
where
\[
 c_{\mathrm{comp}}=\operatorname{kl}(1/10\Vert1/50),
 \qquad c_{\mathrm{anc}}=\operatorname{kl}(1/10\Vert1/25).
\]
\end{lemma}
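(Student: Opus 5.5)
Both bounds will follow from Lemma~\ref{lb:bernoulli} applied to the suffix values $Y_1,\ldots,Y_i$ exposed one at a time, with $\mathcal H_0=\cF_i^{\mathrm{fr}}$ and $\mathcal H_j=\sigma(\cF_i^{\mathrm{fr}},Y_1,\ldots,Y_j)$. Conditional on $\mathcal H_{j-1}$, the value $Y_j$ is still uniform on $[0,1]$ and independent of everything revealed, and the frozen anchors are $\mathcal H_0$-measurable.

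For \eqref{lb:anchors-tail}, let $\omega_j=\mathbf1[\exists h\in\mathsf{Anc}_i:|Y_j-\operatorname{val}(h)|\le\delta_i]$, so that $\operatorname{near}_i(\delta_i)=\sum_j\mathbf1[\omega_j\ge1]$. The union of the anchor neighborhoods has Lebesgue measure at most $2\delta_i|\mathsf{Anc}_i|\le4i\delta_i=1/25$ by \eqref{lb:anchor-count}. Hence $\Prb(\omega_j\ge1\mid\mathcal H_{j-1})\le1/25$. Lemma~\ref{lb:bernoulli} with $p=1/25$ and $a=1/10$ then gives exactly the bound $e^{-c_{\mathrm{anc}}i}$.

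For \eqref{lb:components-tail}, track the component count $\operatorname{cc}^{(j)}$ of the proximity graph on $Y_1,\ldots,Y_j$. Let $\omega_j$ be the number of earlier points within $\delta_i$ of $Y_j$. If $\omega_j=0$, adding $Y_j$ creates a new singleton component. Otherwise it merges some existing components together with itself, and I claim the count drops by at most one. To see this, suppose the neighbors of $Y_j$ lay in three distinct components. Their representatives lie in $[Y_j-\delta_i,Y_j+\delta_i]$, so two of them lie on the same side of $Y_j$ and are within distance $\delta_i$ of each other. They are then adjacent, which is a contradiction. So $Y_j$ joins at most two components, and the count changes by $+1$, $0$, or $-1$. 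This gives
\[
\operatorname{cc}_i(\delta_i)\ge i-2\sum_{j=1}^i\mathbf1[\omega_j\ge1],
\]
since each step with $\omega_j\ge1$ loses at most $2$ relative to the $+1$ of a fresh point. We also have $\Prb(\omega_j\ge1\mid\mathcal H_{j-1})\le2\delta_i(j-1)\le2\delta_i i=1/50$. Therefore $\operatorname{cc}_i<4i/5$ forces $\sum_j\mathbf1[\omega_j\ge1]>i/10$. Lemma~\ref{lb:bernoulli} with $p=1/50$ and $a=1/10$ bounds this by $e^{-c_{\mathrm{comp}}i}$.

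The main care point is the deterministic merge claim, since the recursion must be exact. We need that a point touching existing components reduces the count by at most one, so that the bound $i-2\cdot\#\{\text{touching steps}\}$ is valid. A secondary check is that the conditional probabilities use only the $\mathcal H_{j-1}$-measurable configuration. Clipping the neighborhoods to $[0,1]$ only decreases the measures, so the bounds $1/50$ and $1/25$ hold in all cases.
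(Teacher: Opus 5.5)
Your proposal is correct and follows the paper's proof essentially step by step. It uses the same sequential exposure with $\mathcal H_j=\sigma(\cF_i^{\mathrm{fr}},Y_1,\ldots,Y_j)$, the same at-most-two-components argument giving $\operatorname{cc}_i(\delta_i)\ge i-2\sum_j\mathbf1[\omega_j\ge1]$, and Lemma~\ref{lb:bernoulli} with $(p,a)=(1/50,1/10)$ and $(1/25,1/10)$. The only cosmetic difference is that your $\omega_j$ counts earlier points rather than components; the paper's choice yields the exact identity $\operatorname{cc}_i=i-\sum_j\omega_j$, but the indicator $\mathbf1[\omega_j\ge1]$ is the same either way, so the argument is unaffected.
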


\begin{proof}
Reveal $Y_1,\ldots,Y_i$ in order, with
$\mathcal H_j=\sigma(\cF_i^{\mathrm{fr}},Y_1,\ldots,Y_j)$.
Let $\omega_j$ count existing components, not points, within distance
$\delta_i$ of $Y_j$.
Always $\omega_j\le2$: three components would supply three representatives
in an interval of length $2\delta_i$, forcing two to be within
$\delta_i$ and hence already connected.
A new point creates a component and merges the $\omega_j$ components it
meets, changing the count by $1-\omega_j$. Thus
\begin{equation}
 \operatorname{cc}_i(\delta_i)
 =i-\sum_{j=1}^i\omega_j
 \ge i-2\sum_{j=1}^i\mathbf1[\omega_j\ge1].
 \label{lb:component-insertion}
\end{equation}
The $j-1$ previous points have neighborhoods of total length at most
$2\delta_i(j-1)=(j-1)/(50i)\le1/50$.
Since $Y_j$ is conditionally uniform, it follows that
$\Prb(\omega_j\ge1\mid\mathcal H_{j-1})\le1/50$.
If $\sum_j\mathbf1[\omega_j\ge1]\le i/10$, then
$\operatorname{cc}_i(\delta_i)\ge i-2(i/10)=4i/5$.
Consequently, Lemma~\ref{lb:bernoulli} with $p=1/50$, $a=1/10$,
and $\mathcal H_0=\cF_i^{\mathrm{fr}}$ proves~\eqref{lb:components-tail}.

For near occurrences, the union of the anchors' $\delta_i$-neighborhoods,
intersected with $[0,1]$, is a fixed region of length
$p_i\le2\delta_i|\mathsf{Anc}_i|\le1/25$.
Each independent suffix value falls there with probability $p_i$, so
$\operatorname{near}_i(\delta_i)\mid\cF_i^{\mathrm{fr}}
\sim\operatorname{Bin}(i,p_i)$.
Applying Lemma~\ref{lb:bernoulli} to the indicators of these memberships,
with $p=1/25$ and $a=1/10$, proves~\eqref{lb:anchors-tail}.
\end{proof}

Use the smaller exponent
$c_\circ=\min\{c_{\mathrm{comp}},c_{\mathrm{anc}}\}>0$
to combine these failures.
The value and first derivative of $\operatorname{kl}(a\Vert p)$
vanish at $a=p$, while its second derivative is $1/(a(1-a))\ge4$.
Hence $\operatorname{kl}(a\Vert p)\ge2(a-p)^2$ for $a\ge p$, giving
\begin{equation}
 c_\circ\ge\frac9{1250}.
 \label{lb:positive-exponent}
\end{equation}

\begin{lemma}[Single-scale lower tail]
\label{lb:single-scale}
For every integer $5\le i\le n$,
\begin{equation}
 \Prb\left[\operatorname{long}_A(\delta_i)<\frac i2
                 \mid\cF_i^{\mathrm{fr}}\right]
 \le2e^{-c_\circ i}.
 \label{lb:single-scale-tail}
\end{equation}
\end{lemma}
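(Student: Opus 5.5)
The plan is to combine Lemma~\ref{lb:component-edges} with the two conditional tails of Lemma~\ref{lb:single-statistics} at the single threshold $\delta=\delta_i$. We work throughout conditionally on $\cF_i^{\mathrm{fr}}$. Under this conditioning the anchors are fixed and the suffix values are still independent uniforms. Both statistics $\operatorname{cc}_i(\delta_i)$ and $\operatorname{near}_i(\delta_i)$ are functions of the suffix values and the anchors alone. The completion $A$, however, also depends on the later placement decisions.

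We use that Lemma~\ref{lb:component-edges} holds \emph{deterministically for every legal completion}. We therefore never need to control how the algorithm places the suffix. We define the good event
\[
 \mathcal G_i=\Bigl\{\operatorname{cc}_i(\delta_i)\ge\tfrac{4i}{5}\Bigr\}\cap\Bigl\{\operatorname{near}_i(\delta_i)\le\tfrac{i}{10}\Bigr\}.
\]
On $\mathcal G_i$, whatever the algorithm does afterward,
\[
 \operatorname{long}_A(\delta_i)\ge\tfrac{4i}{5}-\tfrac{i}{10}-1=\tfrac{7i}{10}-1.
\]
The remaining arithmetic is to check that $7i/10-1\ge i/2$, which is equivalent to $i/5\ge1$, i.e.\ $i\ge5$. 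This is exactly where the hypothesis $5\le i$ enters. Hence $\{\operatorname{long}_A(\delta_i)<i/2\}\subseteq\mathcal G_i^{c}$.

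For the probability bound, we apply a conditional union bound to $\mathcal G_i^{c}$ using \eqref{lb:components-tail} and \eqref{lb:anchors-tail}. This gives
\[
 e^{-c_{\mathrm{comp}}i}+e^{-c_{\mathrm{anc}}i}\le2e^{-c_\circ i},
\]
by the definition of $c_\circ$ as the minimum of the two exponents.

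There is no real obstacle in this step. The one subtle point is measurability and conditioning: the event $\{\operatorname{long}_A(\delta_i)<i/2\}$ depends on the algorithm's future randomness and placements, but it is contained in an event determined by $\cF_i^{\mathrm{fr}}$ and the suffix. Monotonicity of conditional probability then suffices, and the bound holds almost surely.
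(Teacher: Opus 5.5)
Your proposal is correct and matches the paper's proof: on the intersection of the two good events from Lemma~\ref{lb:single-statistics}, Lemma~\ref{lb:component-edges} gives $\operatorname{long}_A(\delta_i)\ge 7i/10-1\ge i/2$ for $i\ge5$, and a conditional union bound yields $2e^{-c_\circ i}$. One small clarification is that the random tape $\zeta$ is already part of $\cF_i^{\mathrm{fr}}$, so conditionally the completion depends only on the suffix values; your containment argument does not need this fact.
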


\begin{proof}
When neither event in Lemma~\ref{lb:single-statistics} occurs,
Lemma~\ref{lb:component-edges} gives, for every completion,
\[
 \operatorname{long}_A(\delta_i)
 \ge\frac{4i}{5}-\frac{i}{10}-1\ge\frac i2
 \qquad(i\ge5).
\]
A conditional union bound gives failure probability at most
$e^{-c_{\mathrm{comp}}i}+e^{-c_{\mathrm{anc}}i}\le2e^{-c_\circ i}$.
\end{proof}

\subsubsection{Integration across doubling suffix sizes}

\phantomsection\label{proof:uniform-lower}
\begin{proof}[Proof of Theorem~\ref{thm:uniform-lower}]
The doubling suffix sizes below select analysis times in one execution.
We combine their bounds over threshold intervals with disjoint interiors
in the same cost integral.
Starting near $\sqrt n$ gives exponentially small failure probability
while retaining $\Theta(\log n)$ scales.
For $n\ge2^{16}$, set
\[
 i_0=2^{\lceil(\log_2n)/2\rceil},\qquad i_j=2^ji_0,
 \qquad J_*=\max\{j\ge0:i_j\le n/4\}.
\]
Since $\sqrt n\le i_0<2\sqrt n$,
\begin{equation}
 J_*=\left\lfloor\log_2\frac{n}{4i_0}\right\rfloor
 \ge\frac12\log_2n-4.
 \label{lb:scale-count}
\end{equation}
Averaging out the frozen state in Lemma~\ref{lb:single-scale} gives
$\Prb[\operatorname{long}_A(\delta_{i_j})<i_j/2]\le2e^{-c_\circ i_j}$.
These concern the same final array $A$; no independence is needed:
\begin{align}
 \Prb\!\left[\exists j<J_*:
       \operatorname{long}_A(\delta_{i_j})<\frac{i_j}{2}\right]
 &\le2\sum_{j=0}^{J_*-1}e^{-c_\circ i_j}
 \le4e^{-c_\circ\sqrt n}.
 \label{lb:dyadic-failure}
\end{align}
Indeed, $2^j\ge j+1$ bounds the sum by
$2e^{-c_\circ i_0}/(1-e^{-c_\circ i_0})\le4e^{-c_\circ i_0}$;
$i_0\ge256$ and~\eqref{lb:positive-exponent} ensure $c_\circ i_0>\ln2$.

On the common good event, $\operatorname{long}_A(u)\ge i_j/2$
for $u\in[\delta_{i_{j+1}},\delta_{i_j}]$, since the count is
nonincreasing. Doubling $i_j$ halves $\delta_{i_j}$:
each interval has width $1/(200i_j)$ and contributes at least
$(i_j/2)/(200i_j)=1/400$.
Their interiors are disjoint, so
\[
 \cost(A)
 \ge\sum_{j=0}^{J_*-1}\int_{\delta_{i_{j+1}}}^{\delta_{i_j}}
              \operatorname{long}_A(u)\,du
 \ge\frac{J_*}{400}\ge\frac{\log_2n}{1600},
\]
by~\eqref{lb:layer-cake}, \eqref{lb:scale-count}, and $\log_2n\ge16$.
Thus
\begin{equation}
 \Prb\!\left[\cost(A)<\frac{\log_2n}{1600}\right]
 \le4e^{-c_\circ\sqrt n}.
 \label{lb:uniform-explicit}
\end{equation}
The theorem follows with $c_0=1/(1600\ln2)$, $c_1=c_\circ$,
and $n_0=2^{16}$.
These constants are algorithm-independent; the probability includes
the random tape.
\end{proof}

\subsubsection{A hard fixed multiset in random order}

\phantomsection\label{proof:random-lower}
\begin{proof}[Proof of Theorem~\ref{thm:random-lower}]
Fix $n\ge n_0$ and $\mathcal A$, including its tape distribution.
Let $V=(V_1,\ldots,V_n)$ consist of independent uniforms on $[0,1]$.
Independently draw a uniform permutation $\pi$ of $[n]$ and the tape
$\zeta$. Then $X_t=V_{\pi(t)}$ has the same i.i.d.\ uniform law.
Conditional on $V=v$, the values are fixed; only their order and
the algorithm's randomness remain.

For a vector $v$ with distinct coordinates, define its lower-bound
failure probability
\[
 g(v):=\Prb_{\pi,\zeta}
   [\rho(\mathcal A;v_{\pi(1)},\ldots,v_{\pi(n)},\zeta)<c_0\ln n].
\]
Extend $g$ by zero to repeated-coordinate vectors; it is measurable
and bounded. Almost surely $V$ has distinct coordinates and
$0<\OPT(V)\le1$, so $\rho(A)\ge\cost(A)$.
Here we condition on the entire value vector, not the frozen prefix
and tape. Total expectation and Theorem~\ref{thm:uniform-lower} give
\[
 \begin{aligned}
 \E_V[g(V)]
 &=\Prb_{V,\pi,\zeta}[\rho(A)<c_0\ln n]\\
 &\le\Prb_{V,\pi,\zeta}[\cost(A)<c_0\ln n]
 \le4e^{-c_1\sqrt n}.
 \end{aligned}
\]
Some fixed vector $v\in[0,1]^n$ with distinct coordinates has failure
probability no larger than this average.
Its multiset $X=\{v_1,\ldots,v_n\}$ proves the claim.
It may depend on $n$ and $\mathcal A$, but not on the realized
permutation or tape: both were already averaged in $g$.
\end{proof}

\subsection{Instance-sensitive guarantees}\label{sec:instance-sensitive}

We now return to \EmpiricalSort{} and its fixed-cut upper bounds. A rank
cut near either end of the sorted input can be crossed only a few times.
Combining this deterministic restriction with the exponential-moment bound
gives sharper guarantees for inputs whose large gaps lie near the extremes.
The algorithm is unchanged and is not given the input's structure.

Fix a nonconstant multiset $X$ of $n\ge2$ real values, sorted as
$x_{(1)}\le\cdots\le x_{(n)}$. Its range is
$\OPT(X)=x_{(n)}-x_{(1)}>0$, and its normalized gaps are
$w_a=(x_{(a+1)}-x_{(a)})/\OPT(X)$.
Let $C_a$ count output edges with one rank at most $a$ and the other
larger than $a$. Lemma~\ref{tr:gaps} gives $\rho(A)=\sum_a w_aC_a$.
Each array vertex has degree at most two, so every placement satisfies
$C_a\le\min\{2a,2(n-a),n-1\}$. Define the \emph{gap profile}
\begin{equation}\label{tr:gap-profile}
 u_a:=\min\{2a,2(n-a),n-1\},\qquad
 \Psi_X(\ell):=\sum_{a=1}^{n-1}w_a\min\{u_a,\ell\}\quad(\ell\ge0).
\end{equation}
The profile depends only on $X$ and is at most $\min\{\ell,n-1\}$.
In particular, every placement satisfies
$\rho(A)\le\Psi_X(C_{\max})\le C_{\max}$, where
$C_{\max}=\max_a C_a$.

\subsubsection{The gap profile and its tail bound}

Using the constants from Proposition~\ref{prob:fixed-cut-mgf}, set
\begin{equation}\label{tr:profile-threshold}
 \ell_n:=\frac{\log(2A_0)+(\gamma_0+1)\log n}{\theta},\qquad
 \ell_n(\delta):=\ell_n+\frac{\log(1/\delta)}{\theta}\quad(0<\delta<1).
\end{equation}

\begin{proposition}[Instance-sensitive expectation and tail]\label{tr:profile}
For every $n\ge2$, every fixed nonconstant real multiset $X$ of size $n$,
and every $0<\delta<1$, the output of $\EmpiricalSort$ satisfies
\begin{equation}\label{tr:profile-bounds}
 \E_{\pi,\sigma}[\rho(A)]\le\Psi_X(\ell_n),\qquad
 \Prb_{\pi,\sigma}\!\left[\rho(A)>\Psi_X(\ell_n(\delta))\right]\le\delta.
\end{equation}
The constants in~\eqref{tr:profile-threshold} are independent of $X$ and $n$.
\end{proposition}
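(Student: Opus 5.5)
Both bounds will follow from the per-cut exponential moment of Lemma~\ref{tr:mode} combined with the deterministic cap $C_a\le u_a$. First we handle a fixed strict key order, so distinct values, or ties after conditioning on the priority vector $\xi$ as in the proof of Theorem~\ref{thm:upper}. Lemma~\ref{tr:mode} and Proposition~\ref{prob:fixed-cut-mgf} give, for every rank cut $a$,
\[
 \E_\pi[e^{\theta C_a}]\le2A_0n^{\gamma_0+1}=e^{\theta\ell_n},
\]
by the definition of $\ell_n$ in~\eqref{tr:profile-threshold}. For $2\le n<8192$ this holds through the enlarged $A_0$ already built into the proposition.

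For the expectation, Jensen's inequality applied to the convex map $x\mapsto e^{\theta x}$ gives $e^{\theta\E C_a}\le e^{\theta\ell_n}$, so $\E C_a\le\ell_n$. Since $C_a\le u_a$ deterministically, $\E C_a\le\min\{u_a,\ell_n\}$. Linearity and $\rho(A)=\sum_aw_aC_a$ from Lemma~\ref{tr:gaps} then give $\E[\rho(A)]\le\sum_aw_a\min\{u_a,\ell_n\}=\Psi_X(\ell_n)$. Gaps within ties have zero weight, and averaging over $\xi$ preserves the bound because $\Psi_X$ depends only on the multiset.

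For the tail, the natural event is that every relevant cut satisfies $C_a\le\ell_n(\delta)$. On that event, $C_a\le\min\{u_a,\ell_n(\delta)\}$ for all $a$, and hence $\rho(A)\le\Psi_X(\ell_n(\delta))$. A union bound over cuts would lose a factor of $n$, and that is the main obstacle: $\ell_n(\delta)$ carries only the exponent $\gamma_0+1$. I would avoid the union bound by working with the weighted sum directly. Let $S=\sum_aw_a(C_a-\min\{u_a,\ell\})_+$ with $\ell=\ell_n(\delta)$, so that $\rho(A)\le\Psi_X(\ell)+S$; the aim is $\Prb[S>0]\le\delta$. Since $(C_a-\ell)_+\le C_a$ takes values in $\{0\}\cup(\ell,\infty)$, each cut satisfies $\Prb[C_a>\ell]\le e^{-\theta\ell}e^{\theta\ell_n}=\delta$ by Markov, but this alone bounds $\Prb[S>0]$ only by $\sum_a\mathbf 1[w_a>0]\cdot\delta$ again.

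So I would instead prove the tail as a weighted Markov bound on $\rho$ itself. By convexity, $\E e^{\theta(\rho-\Psi_X(\ell))}\le\sum_aw_a\E e^{\theta(C_a-\min\{u_a,\ell\})}$. For cuts with $u_a\ge\ell$, each term is at most $e^{\theta\ell_n-\theta\ell}=\delta$. For cuts with $u_a<\ell$, we have $C_a\le u_a$, so each term is at most $1$. That yields only $\Prb[\rho>\Psi_X(\ell)+t]\le e^{-\theta t}\cdot(\dots)$, which is not quite the stated form.

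If the exact statement resists this approach, the fallback is to absorb the $\log n$ loss from the union bound by redefining the threshold with $\gamma_0+2$, as in Corollary~\ref{cor:all-cuts}. Checking whether the paper's $\ell_n$ already allows for this (for instance through the way $\gamma_0$ is fixed) is the step I expect to need care. With the union-bound version, the tail claim follows immediately from the monotonicity of $\Psi_X$.
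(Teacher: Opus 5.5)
Your expectation bound is correct and matches the paper's argument. The tail bound has a genuine gap: neither of your routes proves the stated inequality at $\ell_n(\delta)$.

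In your weighted Markov attempt you apply convexity over \emph{all} cuts. The cuts with $u_a\le\ell$ then enter the mixture with terms bounded only by $1$. At $t=0$ this gives merely $\Prb[\rho(A)>\Psi_X(\ell)]\le 1-\alpha_\ell(1-\delta)$, where $\alpha_\ell=\sum_{a:u_a>\ell}w_a$. In effect you apply exponential Markov to the high-cap cuts at the diluted parameter $\theta\alpha_\ell$, which cannot yield $\delta$.

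Your union-bound fallback proves a weaker proposition. It needs the threshold $\ell_n+\log(n/\delta)/\theta$, i.e.\ exponent $\gamma_0+2$. The threshold in \eqref{tr:profile-threshold} carries exactly the exponent $\gamma_0+1$ that Lemma~\ref{tr:mode} supplies for a single cut, so the paper's constants do not absorb the extra factor $n$.

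The missing step is to discard the low-cap cuts deterministically \emph{before} invoking convexity, and then renormalize. Let $\mathcal I_\ell=\{a:u_a>\ell\}$.
If $\alpha_\ell=0$, then $\rho(A)\le\Psi_X(\ell)$ deterministically.
Otherwise put $Z_\ell=\sum_{a\in\mathcal I_\ell}(w_a/\alpha_\ell)C_a$. The cap $C_a\le u_a$ gives $\rho(A)\le\sum_{a\notin\mathcal I_\ell}w_au_a+\alpha_\ell Z_\ell$. Since $\Psi_X(\ell)=\sum_{a\notin\mathcal I_\ell}w_au_a+\alpha_\ell\ell$, it follows that $\{\rho(A)>\Psi_X(\ell)\}\subseteq\{Z_\ell>\ell\}$.

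This event does not involve the factor $\alpha_\ell$, so you may use the full parameter $\theta$. Convexity over the normalized weights $w_a/\alpha_\ell$ gives $\E[e^{\theta Z_\ell}]\le e^{\theta\ell_n}$. Hence $\Prb[Z_\ell>\ell]\le e^{\theta(\ell_n-\ell)}$, which equals $\delta$ at $\ell=\ell_n(\delta)$. Ties are handled by conditioning on the priorities and averaging, exactly as in your expectation argument.
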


\begin{proof}
Lemma~\ref{tr:mode} gives $\E[e^{\theta C_a}]\le e^{\theta\ell_n}$.
For repeated values, condition on the occurrence priorities as in the proof
of Theorem~\ref{thm:upper}, then average. The priorities are independent of
the arrival permutation, and the gap weights and profile are unchanged by
the order within ties. Jensen's inequality gives $\E[C_a]\le\ell_n$.
Together with $C_a\le u_a$ and $\rho(A)=\sum_a w_aC_a$, this proves
the expectation bound.

{For the tail, fix $\ell\ge0$.
We bound the contributions of cuts with $u_a\le\ell$
deterministically and control the remaining contribution through
the exponential moment of a normalized weighted average.
Write}
\[
 \mathcal I_\ell=\{a:u_a>\ell\},\qquad
 \alpha_\ell=\sum_{a\in\mathcal I_\ell}w_a,\qquad
 D_{\mathrm{det}}(\ell)=\sum_{a\notin\mathcal I_\ell}w_au_a.
\]
If $\alpha_\ell=0$, then $\rho(A)\le\Psi_X(\ell)$ deterministically.
Otherwise, put $Z_\ell=\sum_{a\in\mathcal I_\ell}(w_a/\alpha_\ell)C_a$.
The deterministic cut bounds and convexity give
\[
 \rho(A)\le D_{\mathrm{det}}(\ell)+\alpha_\ell Z_\ell,
 \qquad
 \E[e^{\theta Z_\ell}]
 \le\sum_{a\in\mathcal I_\ell}\frac{w_a}{\alpha_\ell}
                \E[e^{\theta C_a}]
 \le e^{\theta\ell_n}.
\]
Since $\Psi_X(\ell)=D_{\mathrm{det}}(\ell)+\alpha_\ell\ell$,
exponential Markov inequality yields
\[
 \Prb[\rho(A)>\Psi_X(\ell)]
 \le\Prb[Z_\ell>\ell]\le e^{\theta(\ell_n-\ell)}.
\]
Take $\ell=\ell_n(\delta)$.
\end{proof}

Constant multisets have zero cost and ratio one by
\eqref{eq:ratio-definition}.

\subsubsection{A narrow core and few outliers}

\begin{corollary}[A narrow core and few outliers]\label{tr:narrow-core}
Suppose $\OPT(X)>0$, $0\le r\le n-1$ is an integer, and at least $n-r$ input
values lie in an interval of width at most $\varepsilon \OPT(X)$, where
$0\le\varepsilon\le1$. For every $0<\delta<1$,
\begin{equation}\label{tr:core-bounds}
 \E[\rho(A)]\le2r+\varepsilon\ell_n,\qquad
 \Prb\!\left[\rho(A)>2r+\varepsilon\ell_n(\delta)\right]\le\delta.
\end{equation}
In particular, for every fixed $d>0$ the competitive ratio is
$O_d(r+\varepsilon\log n)$ with failure probability at most $n^{-d}$.
The interval and the parameters $r,\varepsilon$ need not be known to the
algorithm.
\end{corollary}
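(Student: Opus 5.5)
The corollary follows from Proposition~\ref{tr:profile} once we show $\Psi_X(\ell)\le 2r+\varepsilon\ell$ for every $\ell\ge0$. Applying this with $\ell=\ell_n$ gives the expectation bound, and with $\ell=\ell_n(\delta)$ it gives the tail bound, because the event $\rho(A)>2r+\varepsilon\ell_n(\delta)$ is contained in the event $\rho(A)>\Psi_X(\ell_n(\delta))$. The whole task is therefore a deterministic inequality about the gap profile~\eqref{tr:gap-profile}.

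To prove it, let the core interval be $[c,c+\varepsilon\OPT(X)]$, containing at least $n-r$ values. Let $\alpha$ be the number of sorted values below $c$ and $\beta$ the number above $c+\varepsilon\OPT(X)$, so that $\alpha+\beta\le r$. The values below the core are exactly $x_{(1)},\dots,x_{(\alpha)}$, and those above are $x_{(n-\beta+1)},\dots,x_{(n)}$. Split the cuts $a\in\{1,\ldots,n-1\}$ into three groups.

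\emph{Left outer cuts}, $a\le\alpha$. Bound $\min\{u_a,\ell\}\le u_a\le 2a\le2\alpha$. The gaps $x_{(a+1)}-x_{(a)}$ for $a\le\alpha$ telescope to $x_{(\alpha+1)}-x_{(1)}\le\OPT(X)$, so their weights sum to at most one. Their total contribution is therefore at most $2\alpha$.

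\emph{Right outer cuts}, $a\ge n-\beta$. Here $u_a\le 2(n-a)\le2\beta$, and the same telescoping argument bounds the total contribution by $2\beta$.

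\emph{Core cuts}, $\alpha<a<n-\beta$. Each such gap lies between two core values, so these gaps have total length at most $\varepsilon\OPT(X)$. Their weights therefore sum to at most $\varepsilon$, and bounding $\min\{u_a,\ell\}\le\ell$ gives a contribution of at most $\varepsilon\ell$.

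The main thing to check is that the groups are disjoint and exhaustive. If $\alpha\ge n-\beta$, which can happen only when the core is nearly empty, the outer groups overlap. But then $r\ge\alpha+\beta\ge n$, which is impossible since $r\le n-1$. So the core group is well defined, possibly empty. Degenerate cases also need attention: when $\alpha=0$ or $\beta=0$ the corresponding group is empty, and when $\varepsilon=0$ all core values are equal, so the core weights vanish. Adding the three contributions gives $\Psi_X(\ell)\le2(\alpha+\beta)+\varepsilon\ell\le2r+\varepsilon\ell$.

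For the final sentence of the corollary, take $\delta=n^{-d}$. Then $\ell_n(\delta)=\ell_n+d\log n/\theta=O_d(\log n)$, because $\ell_n=O(\log n)$ with the absolute constants of Proposition~\ref{prob:fixed-cut-mgf}. The ratio is thus $O_d(r+\varepsilon\log n)$ with failure probability at most $n^{-d}$. The algorithm never uses the interval, $r$, or $\varepsilon$; they enter only through the deterministic bound on $\Psi_X$.
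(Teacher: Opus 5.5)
Your proposal is correct and follows essentially the same route as the paper: it reduces the corollary to the deterministic bound $\Psi_X(\ell)\le\varepsilon\ell+2r$, bounds the internal core gaps (total weight at most $\varepsilon$) by $\ell$, controls the outer gaps through $u_a\le 2a$ or $u_a\le 2(n-a)$, and then applies Proposition~\ref{tr:profile} at $\ell_n$ and $\ell_n(\delta)$. The only cosmetic difference is that you bound the left and right outer groups separately by $2\alpha$ and $2\beta$, whereas the paper bounds every outer $u_a$ by $2r$ and uses that the outer weights sum to at most one.
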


\begin{proof}
Let $p,q$ be the first and last ranks in the stated interval. Then
$(p-1)+(n-q)\le r$ and
$\sum_{a=p}^{q-1}w_a=(x_{(q)}-x_{(p)})/\OPT(X)\le\varepsilon$.
Outside these internal gaps, $u_a\le2r$: use $u_a\le2a$ for $a<p$ and
$u_a\le2(n-a)$ for $a\ge q$. Hence, for every $\ell\ge0$,
\[
 \Psi_X(\ell)
 \le\ell\sum_{a=p}^{q-1}w_a
       +2r\sum_{a\notin\{p,\ldots,q-1\}}w_a
 \le\varepsilon\ell+2r.
\]
Apply Proposition~\ref{tr:profile} at $\ell_n$ and $\ell_n(\delta)$.
\end{proof}

For example, take $n\ge3$, one value equal to $1$, and the other $n-1$
values in $[0,1/\log n]$, including $0$. With $r=1$ and
$\varepsilon=1/\log n$, the corollary gives $O(1)$ expected ratio and
$O_d(1)$ ratio with failure probability at most $n^{-d}$, even when all
values are distinct. The same conclusion holds whenever $r=O(1)$ and
$\varepsilon=O(1/\log n)$.

\subsubsection{Shrinking central spans}

\begin{corollary}[Shrinking central spans]\label{tr:central-span}
For a fixed nonconstant multiset define
\[
 \tau_j:=\frac{x_{(n-j)}-x_{(j+1)}}{\OPT(X)},
 \qquad 0\le j<\lfloor n/2\rfloor.
\]
Suppose $\tau_j\le \Lambda/(j+1)$ throughout this range, for some $\Lambda\ge1$.
For every $0<\delta<1$,
\begin{align}
 \E[\rho(A)]&\le2\Lambda\bigl(1+\log(1+\ell_n)\bigr),\label{tr:central-mean}\\
 \Prb\!\left[\rho(A)>2\Lambda\bigl(1+\log(1+\ell_n(\delta))\bigr)\right]
 &\le\delta.\label{tr:central-tail}
\end{align}
For families with a uniform constant $\Lambda$, these are
$O(1+\log\log n)$ in expectation and
$O_d(1+\log\log n)$ with failure probability at most $n^{-d}$.
\end{corollary}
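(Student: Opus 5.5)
The plan is to reduce both displayed bounds to a single deterministic estimate on the gap profile,
\[
 \Psi_X(\ell)\le2\Lambda\bigl(1+\log(1+\ell)\bigr)\qquad(\ell\ge0),
\]
and then to apply Proposition~\ref{tr:profile}. Evaluating this estimate at $\ell=\ell_n$ gives \eqref{tr:central-mean}, since $\E[\rho(A)]\le\Psi_X(\ell_n)$. Evaluating it at $\ell=\ell_n(\delta)$ gives \eqref{tr:central-tail}, since the tail event in Proposition~\ref{tr:profile} is $\{\rho(A)>\Psi_X(\ell_n(\delta))\}$. For the final asymptotic claim, I would use $\ell_n=O(\log n)$ and, with $\delta=n^{-d}$, $\ell_n(\delta)=O_d(\log n)$ from \eqref{tr:profile-threshold}. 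Then $\log(1+\ell)=O_d(1+\log\log n)$, and with $\Lambda$ uniform this gives $O(1+\log\log n)$ and $O_d(1+\log\log n)$.

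To prove the profile estimate, I would use a layer-cake decomposition by the distance of a gap from the ends. For a gap index $1\le a\le n-1$, put $d_a=\min\{a,n-a\}\le\lfloor n/2\rfloor$. Then $u_a\le2d_a$, so $\min\{u_a,\ell\}\le f(d_a)$ with $f(d):=\min\{2d,\ell\}$. Since $f(0)=0$, write $f(d_a)=\sum_{j=0}^{d_a-1}\bigl(f(j+1)-f(j)\bigr)$. Each increment lies in $[0,2]$, and it vanishes once $2j\ge\ell$.

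Next, interchange the sums over $a$ and $j$. The condition $d_a\ge j+1$ holds exactly when $j+1\le a\le n-j-1$. These are precisely the gaps between $x_{(j+1)}$ and $x_{(n-j)}$, so their weights sum to $\tau_j$. Such gaps exist only when $j+1\le\lfloor n/2\rfloor$, i.e.\ $j<\lfloor n/2\rfloor$, which is exactly the range where the hypothesis on $\tau_j$ is assumed. This yields
\[
 \Psi_X(\ell)\le\sum_{0\le j<\min\{\lceil\ell/2\rceil,\lfloor n/2\rfloor\}}2\tau_j
 \le2\Lambda\sum_{j=0}^{\lceil\ell/2\rceil-1}\frac1{j+1}
 \le2\Lambda\bigl(1+\log\lceil\ell/2\rceil\bigr).
\]
The last step is the harmonic bound $H_m\le1+\log m$. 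When $\lceil\ell/2\rceil=0$ the sum is empty, and since $\Lambda\ge1$ the bound $2\Lambda$ still holds. Finally, $\lceil\ell/2\rceil\le1+\ell$ gives the claimed profile estimate.

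The only delicate step is the index bookkeeping in the interchange of sums. It needs three checks: that the gaps with $d_a\ge j+1$ are exactly those summing to $\tau_j$; that the increments vanish past $\ell/2$; and that no $j\ge\lfloor n/2\rfloor$ is needed, so the hypothesis is only used where it is assumed. Everything else follows directly from Proposition~\ref{tr:profile}.
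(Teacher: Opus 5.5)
Your proposal is correct and takes essentially the paper's route. You reduce both bounds to the deterministic estimate $\Psi_X(\ell)\le 2\sum_{j<\min\{\lceil\ell/2\rceil,\lfloor n/2\rfloor\}}\tau_j\le 2\Lambda(1+\log(1+\ell))$ by interchanging the gap sum with a count over distances $j$ from the ends, then apply Proposition~\ref{tr:profile}. The only difference is cosmetic: the paper proves the identity at even arguments $2k$ and compares $\Psi_X(\ell)\le\Psi_X(2k)$ with $k=\min\{\lceil\ell/2\rceil,\lfloor n/2\rfloor\}$, whereas you telescope $\min\{2d,\ell\}$ directly and so handle every $\ell$ at once.
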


\begin{proof}
For $1\le k\le\lfloor n/2\rfloor$,
\[
 \Psi_X(2k)
 \le2\sum_{a=1}^{n-1}w_a\min\{a,n-a,k\}
 =2\sum_{j=0}^{k-1}\sum_{a=j+1}^{n-j-1}w_a
 =2\sum_{j=0}^{k-1}\tau_j.
\]
The first equality counts the integers $0\le j<\min\{a,n-a,k\}$; the inner
gap sum then telescopes to $\tau_j$.
For $\ell>0$, take $k=\min\{\lceil\ell/2\rceil,\lfloor n/2\rfloor\}$.
Either $2k\ge\ell$, or every $u_a\le2k$; in both cases
$\Psi_X(\ell)\le\Psi_X(2k)$. Therefore
\[
 \Psi_X(\ell)\le2\Lambda\sum_{j=1}^k\frac1j
 \le2\Lambda(1+\log k)\le2\Lambda\bigl(1+\log(1+\ell)\bigr).
\]
For $\ell=0$ the bound is immediate. Proposition~\ref{tr:profile}
completes the proof.
\end{proof}

For the reciprocal multiset $X=\{1,1/2,\ldots,1/n\}$,
$x_{(a)}=1/(n-a+1)$ and $\OPT(X)=1-1/n$, so
\[
 \tau_j=\frac{1/(j+1)-1/(n-j)}{1-1/n}\le\frac2{j+1}.
\]
Corollary~\ref{tr:central-span} gives $O(1+\log\log n)$ expected ratio
and $O_d(1+\log\log n)$ ratio with failure probability at most $n^{-d}$.

For i.i.d.\ inputs, condition on the sampled multiset. Whenever it satisfies
one of the stated structural conditions, the corresponding bound applies.
The probability that the condition fails is added to the conditional
failure bound.

\section*{AI methodologies}
\phantomsection\label{ai-usage}
\addcontentsline{toc}{section}{AI methodologies}
We used ChatGPT (GPT-6 Astra and GPT-6 Astra Pro) to explore some of
the ideas developed in this paper, including reverse conditional
moment bounds for growing prefixes, the accounting of persistent
partition errors across merging phases, and the transfer from
fixed threshold crossings to arbitrary fixed inputs. We also used
it to develop, check, and refine proofs, search
the literature, and edit the manuscript and its \LaTeX{} source.
The authors are responsible for all claims, proofs, references,
and other content of the paper.

\clearpage
\phantomsection\label{references-start}
\addcontentsline{toc}{section}{References}
\bibliography{references}
\end{document}